\documentclass[11pt]{article}
\usepackage[T1]{fontenc}
\usepackage{amsfonts,amsmath,amsthm,amssymb,dsfont,mathtools}
\usepackage[margin=1in]{geometry}
\usepackage{fullpage}
\usepackage{enumitem}
\usepackage[linesnumbered,boxed,ruled,vlined]{algorithm2e}
\usepackage[colorlinks,citecolor=blue,linkcolor=blue,urlcolor=red,pagebackref]{hyperref}
\usepackage[capitalise]{cleveref}
\usepackage{url}
\usepackage{thmtools, thm-restate}
\usepackage{xcolor}
\usepackage{graphicx}
\usepackage{mathdots}
\usepackage[normalem]{ulem}
\usepackage{xspace}
\usepackage{tabularx}
\usepackage{array}
\newcolumntype{Y}{>{\raggedright\arraybackslash}X}
\xspaceaddexceptions{]\}}
\usepackage{comment}
\usepackage{framed}
\usepackage{float,wrapfig}
\usepackage{subfigure}
\usepackage{tikz}
\usetikzlibrary{arrows.meta, positioning, calc, fit, backgrounds}

\usetikzlibrary{patterns}
\usepackage[usenames,dvipsnames]{pstricks}

\usepackage[mathlines]{lineno}
\usepackage{etoolbox}
\newcommand*\linenomathpatch[1]{%
  \cspreto{#1}{\linenomath}%
  \cspreto{#1*}{\linenomath}%
  \csappto{end#1}{\endlinenomath}%
  \csappto{end#1*}{\endlinenomath}%
}
\linenomathpatch{equation}
\linenomathpatch{gather}
\linenomathpatch{multline}
\linenomathpatch{align}
\linenomathpatch{alignat}
\linenomathpatch{flalign}

\theoremstyle{plain}
\newtheorem{property}{Property}
\newtheorem{theorem}{Theorem}[section]
\newtheorem{lemma}[theorem]{Lemma}
\newtheorem{proposition}[theorem]{Proposition}
\newtheorem{corollary}[theorem]{Corollary}
\newtheorem{claim}[theorem]{Claim}
\newtheorem{observation}[theorem]{Observation}

\theoremstyle{definition}
\newtheorem{definition}[theorem]{Definition}

\crefname{property}{Property}{Properties}

\setlist[enumerate]{nosep, topsep=1ex}
\setlist[itemize]{nosep, topsep=1ex}
\setlist[description]{nosep}
\allowdisplaybreaks
\def\ShowAuthNotes{1}
\ifnum\ShowAuthNotes=1
\newcommand{\authnote}[2]{\ \\ \textcolor{red}{\parbox{0.9\linewidth}{[{\footnotesize {\bf #1:} { {#2}}}]}}\newline}
\else
\newcommand{\authnote}[2]{}
\fi

\newcommand{\eps}{\epsilon}

\newcommand{\Ex}{\operatorname*{\mathbf{E}}}

\newcommand{\poly}{\operatorname{\mathrm{poly}}}
\newcommand{\polylog}{\operatorname{\mathrm{polylog}}}

\newcommand{\inse}{{\textsc{Insert}}}
\newcommand{\dele}{{\textsc{Delete}}}
\newcommand{\rebu}{{\textsc{Rebuild}}}
\newcommand{\crea}{{\textsc{CreateBundles}}}
\newcommand{\lowlevel}{{\mathrm{leftlevel}}}
\newcommand{\typ}{{\mathrm{type}}}
\newcommand{\sz}{\mu}

\newcommand{\Z}{\mathbb{Z}}

\newcommand{\caA}{\mathcal{A}}
\newcommand{\caB}{\mathcal{B}}

\newcommand{\caS}{\mathcal{S}}
\newcommand{\caT}{\mathcal{T}}

\title{Memory Reallocation with Polylogarithmic Overhead}
\author{Ce Jin\thanks{\url{cejin@berkeley.edu}. 
This work is supported by the Miller Research Fellowship at the Miller Institute for Basic Research in Science, UC Berkeley.}\\UC Berkeley}
\date{\vspace{-1cm}}

\begin{document}

	\maketitle

\begin{abstract}
The \emph{Memory Reallocation problem} asks to dynamically maintain an assignment of given objects of various sizes to non-overlapping contiguous chunks of memory, while supporting updates (insertions/deletions) in an online fashion. The total size of live objects at any time is guaranteed to be at most a $1-\epsilon$ fraction of the total memory. To handle an online update, the allocator may rearrange the objects in memory to make space, and the \emph{overhead} for this update is defined as the total size of moved objects divided by the size of the object being inserted/deleted.

Our main result is an allocator with worst-case expected overhead $\mathrm{polylog}(\epsilon^{-1})$. This exponentially improves the previous worst-case expected overhead $\widetilde O({\epsilon}^{-1/2})$ achieved by Farach-Colton, Kuszmaul, Sheffield, and Westover (2024), narrowing the gap towards the $\Omega(\log\epsilon^{-1})$ lower bound. Our improvement is based on an application of the sunflower lemma previously used by Erd\H{o}s and S\'{a}rk\"{o}zy (1992) in the context of subset sums. 

Our allocator achieves polylogarithmic overhead only in expectation, and sometimes performs expensive rebuilds. Our second technical result shows that this is necessary: it is impossible to achieve subpolynomial overhead with high probability. 
\end{abstract}

\section{Introduction}

Memory allocation is one of the oldest problems in computer science, both in theory and in practice.

Consider an array of $M$ memory slots indexed by $\{0,1,\dots,M-1\}$, and a dynamically changing collection of \textbf{\emph{objects}} $X$,\footnote{\emph{Objects} are also referred to as memory requests, items, or blocks in the literature.} where object $x\in X$ has \textbf{\emph{size}} $\mu(x) \in \Z^{+}$. 
An \textbf{\emph{allocation}} is an assignment of the objects to non-overlapping contiguous regions in memory, or formally, an allocation map $\phi\colon X \to \{0,1,\dots,M-1\}$ such that $\phi(x)+\mu(x)\le M$, and the intervals $[\phi(x),\phi(x)+\mu(x))$ are disjoint for all $x\in X$.
An \textbf{\emph{allocator}} needs to maintain the allocation while handling two types of \textbf{\emph{updates}} to $X$ in an online fashion (starting from the initial state $X=\varnothing$):
\begin{itemize}
   \item \textbf{\emph{Inserting}} a new object $x$ of size $\mu(x)$ into $X$ (memory allocation).
   \item \textbf{\emph{Deleting}} an object $x\in X$ from $X$ (memory deallocation / free).
\end{itemize}
We sometimes refer to objects currently in $X$ as \textbf{\emph{live objects}} for emphasis.

In the classic setting of memory allocation, objects cannot be moved after being allocated. 
Upon inserting an object $x$, the allocator decides its location $\phi(x)$, and $x$ continues to occupy the memory interval $[\phi(x),\phi(x)+\mu(x))$ until its deletion.
After a few updates, the free space in memory may become fragmented. When inserting a new object~$x$, even though the total empty space is larger than $\mu(x)$, there may be no contiguous region that fits it.
This causes a waste of memory.

Unfortunately, classic results show that any online allocator must inevitably waste \emph{most of} the memory in this no-move setting.
Formally, we say an update sequence has \textbf{\emph{load factor}} $1-\eps$ if, at every point in time, the total size of live objects, 
$\sum_{x\in X}\mu(x)$, is at most $ (1-\eps)M $.\footnote{A more relaxed definition of load factor, which appeared in some previous works such as \cite{Kuszmaul23}, allows the total size of live objects to be at most $\lfloor (1-\eps)M \rfloor + 1  $ (where $\eps>0$).
These two definitions are equivalent up to changing $\eps$ by a constant factor; see \cref{obs:makememoryfull}.}
The load factor is known to the allocator in advance.
Then, any no-move allocator (even allowing randomization) can only handle load factor $1-\eps \le O(\frac{1}{\log M})$, which tends to $0$ as $M$ grows \cite{Robson71,Robson74,LubyNO96,nicole}.

\paragraph*{Memory Reallocation.}
This situation has motivated the theoretical study of \emph{Memory \textbf{Re}allo\-cation} in various settings \cite{HallP04,SandersSS09,LimGL15,BenderFFFGspaa15,BenderFFFG15schedule,BenderFFFG17,Kuszmaul23,Farach-ColtonKS24}, where the goal is to support much higher load factor (ideally arbitrarily close to $1$), by allowing the allocator to move around existing objects in memory, while incurring a small reallocation overhead. 
We follow the formal setup of this problem considered by
Farach-Colton, Kuszmaul, Sheffield, and Westover \cite{Farach-ColtonKS24} (which was also studied earlier in \cite{Kuszmaul23,BenderFFFG17,NaorT01} with extra history-independent or cost-oblivious requirements; see \cref{sec:related}), defined as follows:

In the \emph{Memory Reallocation} problem, to handle an update, the allocator may move the existing objects, at an (unnormalized) \textbf{\emph{switching cost}} defined as the total size of moved objects (including the object being inserted/deleted). The \textbf{\emph{overhead factor}} (or \textbf{\emph{overhead}} for short) is the unnormalized switching cost divided by the size of the object being inserted/deleted. Formally, when inserting (or deleting) an object $x$, changing $X$ to $X' = X\sqcup \{x\}$ (or $X'=X\setminus \{x\}$), if the allocator changes the allocation map $\phi\colon X\to \{0,1,\dots,M-1\}$ to $\phi' \colon X' \to \{0,1,\dots,M-1\}$, then the overhead factor incurred for this update is
\[
  1 + \frac{1}{\mu(x)}\cdot \sum_{\substack{y\in X\cap X': \\ \phi(y)\neq \phi'(y)}} \mu(y).
\]

When the load factor is $1-\eps$, 
the following folklore deterministic allocator \cite{BenderFFFG17,Kuszmaul23} achieves an $O(\eps^{-1})$ overhead factor (remarkably, independent of the memory size $M$) for every update: to insert an object $x$, one simply finds an interval of size $O(\eps^{-1}\mu(x))$ with at least $\mu(x)$ empty memory slots (which must exist by an averaging argument), and reorganizes all objects contained in this interval to create a contiguous empty space that fits $x$. 

On the other hand, there has been no known lower bound that rules out constant overhead. 
This  raises the following main question:

\begin{quote}
   \centering
    \itshape  
    What is the best possible reallocation overhead as a function of the load factor $1-\eps$?
\end{quote}

To be more precise, when we say an allocator $\caA$ achieves overhead $f(\eps)$, we mean that for every $\eps>0$, there exists a family of allocators $\{\caA_M\}_{M\in \Z^+}$ such that for every $M\in \Z^+$, $\caA_M$ achieves overhead $f(\eps)$ on all input instances with $M$ memory slots and load factor $1-\eps$.

\paragraph*{Previous results.}
Recent works have made progress on this question via \emph{randomization}.
Formally, we say a randomized allocator achieves (worst-case) \textbf{\emph{expected overhead}} $f(\eps)$ (against an oblivious adversary), if for every update sequence  that is fixed in advance with load factor $1-\eps$, the overhead factor incurred for each update has expectation at most $f(\eps)$ over the randomness of the allocator. 

In the special case where every object has size at most $\eps^{4}M$, Kuszmaul \cite{Kuszmaul23} designed a randomized allocator with $O(\log \eps^{-1})$ expected overhead, exponentially better than the folklore allocator.
His allocator is based on a variant of uniform probing with strong history independence. 
For larger object sizes, however, he conjectured that the folklore $O(\eps^{-1})$ overhead was optimal.
Surprisingly, this conjecture was later disproven by Farach-Colton, Kuszmaul, Sheffield, and Westover \cite{Farach-ColtonKS24}, who designed a randomized allocator with expected overhead $O\big((\tfrac{1}{\eps})^{1/2}\,\polylog(\tfrac{1}{\eps})\big)$. 

 The allocators of \cite{Kuszmaul23} and \cite{Farach-ColtonKS24} additionally satisfy a \textbf{\emph{resizability}} property, namely that the live objects $X$ are always allocated to the prefix $\big [0, \frac{1}{1-\eps}\sum_{x\in X}\mu(x)\big )$ of the memory.
  Although no lower bound was known for general allocators, \cite{Farach-ColtonKS24} proved that resizable allocators must incur overhead $\Omega(\log \eps^{-1})$ (even if the entire update sequence is known in advance). 
  However, this still leaves an exponential gap from their $O\big((\tfrac{1}{\eps})^{1/2}\,\polylog(\tfrac{1}{\eps})\big)$ upper bound.

  Despite the lack of progress in the general case, several special classes of update sequences are known to admit logarithmic expected overhead. 
  Examples (in addition to the aforementioned tiny-object case~\cite{Kuszmaul23}) include:  update sequences where all objects have power-of-two sizes \cite{Kuszmaul23}, update sequences with only a constant number of distinct object sizes, all within a constant factor of each other (this was mentioned in \cite{Farach-ColtonKS24} as a corollary of their techniques), as well as a certain distribution of random update sequences \cite{Farach-ColtonKS24}.
Inspired by these successful examples, the authors of~\cite{Farach-ColtonKS24} speculated that the general case could possibly be tackled by a  structure-versus-randomness dichotomy, as is often featured in additive combinatorics. 
To add to this optimism, additive combinatorial techniques have led to advances in several algorithmic problems involving packing various-sized objects (such as Knapsack and Bin Packing; see \cref{sec:related}). Memory Reallocation appears to be another problem of this type, albeit with an additional dynamic flavor.

\subsection{Our results} 

\paragraph*{Upper bound.}
We present the first allocator with expected overhead factor $\polylog\!\left(\tfrac{1}{\eps}\right)$, exponentially improving the previous state-of-the-art result by
Farach-Colton, Kuszmaul, Sheffield, and Westover \cite{Farach-ColtonKS24}.
Like \cite{Farach-ColtonKS24} and \cite{Kuszmaul23}, our allocator is resizable. 
\begin{theorem}[\cref{sec:ds}]
   \label{thm:main}
The Memory Reallocation problem can be solved by a randomized resizable  allocator with worst-case expected overhead 
$O(\log^4 \epsilon^{-1}\cdot (\log \log \epsilon^{-1})^2)$ against an oblivious adversary.
\end{theorem}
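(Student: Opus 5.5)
The plan follows the size-class hierarchy used by Farach-Colton, Kuszmaul, Sheffield, and Westover. The new ingredient is a sunflower-lemma argument, in the style of Erd\H{o}s and S\'ark\"ozy, that lets "structured" groups of objects be exchanged for one another cheaply.

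\textbf{Step 1: size classes and reductions.} Partition objects into geometric size classes $[2^i,2^{i+1})$. Objects smaller than $\eps^{4}M$ are handled by Kuszmaul's tiny-object allocator with $O(\log\eps^{-1})$ overhead, run inside a region reserved at the top of the resizable prefix. By standard rescaling it then suffices to handle objects of size at least $\poly(\eps)M$. At most $\poly(\eps^{-1})$ such objects are live, and only $O(\log\eps^{-1})$ size classes matter.

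\textbf{Step 2: the layout.} Lay out memory class by class, from large to small. Each class is a contiguous zone with $\Theta(\eps/\log\eps^{-1})$ relative slack, so the slack summed over classes stays below $\eps M$. An insertion or deletion in class $i$ disturbs only the zones of classes $\le i$. The cost of shifting smaller zones is controlled by the usual randomized rebuild trick: each zone rebuilds at a random threshold of accumulated slack change. An update of size $s$ then causes a rebuild of cost $S$ with probability about $s/(\eps' S)$, giving expected cost $O(1/\eps')$ per level. This alone reproduces the $\eps^{-1/2}$-type bounds, so the real work is inside a single class.

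\textbf{Step 3: within one class, via sunflowers.} Sizes within a class differ by a factor below $2$, but a deleted object of size $a$ leaves a hole that a new object of size $b\neq a$ cannot fill. The aim is to group objects into \emph{bundles}, consisting of multisets of $k=\polylog\eps^{-1}$ objects, whose total sizes take few distinct values. The plan is to apply the Erd\H{o}s--S\'ark\"ozy sunflower argument to the subset sums of the live objects. From any sufficiently large multiset of sizes, one extracts many disjoint subsets with equal sums, or a sunflower structure whose petals have equal sums. Bundles of equal total size can then be swapped and moved as units. A deleted object is repaired by swapping its bundle with a spare of the same total size. Objects left over, i.e.\ fewer than $\polylog$ per class that fail to fit into bundles, are handled by the folklore $O(\eps^{-1})$ method, but only on a $\polylog$-sized buffer. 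Nested $\log$ levels of bundling, with $\log\log$ factors from a union bound over the sunflower parameters, should give $\log^4\eps^{-1}(\log\log\eps^{-1})^2$.

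\textbf{Main obstacle.} I expect the hardest part to be Step 3: keeping the bundle structure valid under adversarial updates while paying only $\polylog$ expected cost. The sunflower lemma guarantees equal-sum bundles only for a static set. A dynamic version needs periodic re-bundling, with \crea{} invoked at random times so that its cost amortizes in expectation against an oblivious adversary. Concretely, I would try to show that after $\Theta(\eps' \cdot |\text{class}|)$ updates a fresh application of the lemma reconstructs bundles, and that randomizing the rebuild epoch makes every single update's expected overhead $\polylog$. Step 3 is also where I expect the $\polylog$ exponents to accumulate.
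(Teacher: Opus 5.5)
\textbf{Step 2 is where the plan breaks, and nothing done inside a single class can repair it.}

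\emph{The cost is already polynomial.} You use one zone per size class, larger classes to the left, each with slack $\eps'=\Theta(\eps/\log\eps^{-1})$ refreshed by slack-triggered rebuilds. Your own estimate then gives expected overhead $\Theta(1/\eps')=\Theta(\eps^{-1}\log\eps^{-1})$ per level. That is worse than the folklore bound, not ``$\eps^{-1/2}$-type''.

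\emph{The estimate is also optimistic.} When zone $i$ overflows, the disturbed region is everything to its right, i.e.\ all smaller classes. Their total size can be $\Theta(M)$ however small $2^i$ is. For example, if zone $i$ holds a few objects, its slack $\eps'|Z_i|$ is below $2^i$, so every insertion into it overflows. Insertions always change a class's total size, so swapping equal-sum bundles does not help here.

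\emph{What the paper does instead.} It uses no slack at all: it keeps the prefix property and splits memory into levels $\ell-1,\dots,0$ from left to right. Level $j$ may hold objects of \emph{every} scale up to $(2^j,2^{j+1}]$. A rebuild of levels $j^*,\dots,0$ places, for each bundle type of scale $i$, $2,2,4,8,\dots$ bundles into levels $i,i+1,\dots$ and the rest into level $j^*$. Combined with the bound on the number of types, this does two things at once (\cref{item:inva-nottoobig,item:inva-supply}):
\begin{itemize}
\item the levels below $j$ have total size $O(2^j\ell^2\log\eps^{-1}(\log\ell+\log\log\eps^{-1})^2)$;
\item level $i$ always contains a same-type substitute bundle.
\end{itemize}
A deletion swaps its bundle into level $i^*$ before removing it, so holes only appear where refilling is cheap. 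Rebuilds are driven by per-scale update counters with random offsets, so level $j^*$ is rebuilt with probability $\le 2^{i^*-j^*}$; slack plays no role.

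Your idea of re-bundling after $\Theta(\eps'|\text{class}|)$ updates ties the rebuild rate to $\eps$ and brings the polynomial cost back. It also leaves open how a rebuild avoids destroying the substitute supply. The paper handles this by \emph{not} re-bundling any type that still has bundles in the untouched levels further left.

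\textbf{Second gap: the sizes are never made to fit the Erd\H{o}s--S\'ark\"ozy argument.} \Cref{lem:grouping-temp} needs integers in $[w]$, and both the bundle size $O(\log w)$ and the number of types depend on $w$. With real sizes the lemma does not apply; with raw integer sizes you get $\log M$, not $\log\eps^{-1}$.

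The paper first inflates every size in $(2^i,2^{i+1}]$ up to a multiple of $\eps 2^{i+1}$, which the load factor $1-2\eps$ allows. Sizes within a scale then become integers in $(\eps^{-1}/2,\eps^{-1}]$, so $w=\eps^{-1}$. It then iterates the lemma (\cref{lem:classifier}) to split \emph{all} objects of a scale into bundles of $O(\log\eps^{-1})$ objects with few types. No leftover buffer handled by the folklore method is needed.

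\textbf{The exponent count is guesswork.} In the paper the bound is the product of four factors, with $\ell=O(\log\eps^{-1})$ after tiny objects are handed to Kuszmaul's allocator via the combination result of Farach-Colton et al.:
\begin{itemize}
\item $\ell$, from summing over $j^*$;
\item $\ell$, from the scales inside the rebuilt levels;
\item $\ell(\log\ell+\log\log\eps^{-1})^2$, the number of types per scale (\cref{item:inva-fewtypes});
\item $\log\eps^{-1}$, the number of objects per bundle.
\end{itemize}
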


As a confirmation of the conjecture in \cite{Farach-ColtonKS24}, \cref{thm:main} exploits the additive combinatorics of the object sizes, but the actual design of the allocator turns out to be less complicated than we expected.
The only combinatorial tool we need is a simple but beautiful theorem of Erd\H{o}s and S\'{a}rk\"{o}zy about subset sums \cite{erdossarkozy}, which they proved using the celebrated sunflower lemma pioneered by Erd\H{o}s and Rado \cite{erdosrado}. See \cref{subsec:overview} for a brief overview of the techniques.  %

By a simple transformation (see \cref{obs:makememoryfull}), \cref{thm:main} also implies an allocator with $\polylog(M)$ overhead, even when all the $M$ memory slots can be full.
\begin{corollary}
   \label{cor:main-full}
The Memory Reallocation problem (where the memory can be full) can be solved by a randomized resizable allocator with worst-case expected overhead $O(\log^4 M\cdot (\log \log M)^2)$ against an oblivious adversary.
\end{corollary}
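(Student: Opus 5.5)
The plan is to reduce \cref{cor:main-full} to \cref{thm:main} by the transformation of \cref{obs:makememoryfull}: a full-memory instance becomes an instance with load factor $1-\eps$ for $\eps = \Theta(1/M)$. Substituting this $\eps$ into the bound $O(\log^4\eps^{-1}(\log\log\eps^{-1})^2)$ gives $O(\log^4 M(\log\log M)^2)$.

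Concretely, in the full-memory setting the total live size is at most $M$. This is the relaxed load-factor condition $\sum_{x\in X}\mu(x)\le \lfloor(1-\eps)M\rfloor+1$ with $\eps$ chosen so that $\lfloor(1-\eps)M\rfloor = M-1$, e.g.\ $\eps = 1/M$ (for $M\ge 2$; the case $M=1$ is trivial). By \cref{obs:makememoryfull}, the relaxed definition is equivalent to the strict one after changing $\eps$ by a constant factor. So there is an allocator for the relaxed condition with parameter $1/M$ whose overhead matches that of an allocator from \cref{thm:main} run with $\eps' = \Theta(1/M)$.

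I expect the observation to work roughly as follows, in case it needs checking. It scales all sizes, and the memory, by a constant factor $c$ (say $c=2$). The scaled live size is then at most $2M$ out of a scaled memory of size about $2M+1$ or $2M+2$, which is a strict load factor of $1-\Theta(1/M)$ in a memory of size $\Theta(M)$. One must still fit this into $M$ physical slots. The natural route is to note that any allocation of scaled objects at arbitrary offsets can be compressed: sort objects by position and slide each left to close gaps, preserving order. This is done consistently after each update, and objects whose compressed position changes are charged to the update. This compression charging is the subtle part, so instead I would simply rely on \cref{obs:makememoryfull} as stated in the paper.

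Resizability, the worst-case expected guarantee, and obliviousness of the adversary all transfer, since the reduction is a deterministic per-update simulation. The only real obstacle is confirming that \cref{obs:makememoryfull} yields an overhead-preserving simulation with $\eps=\Theta(1/M)$. Granting that, the corollary follows by substitution, since $\log(\Theta(M)) = \Theta(\log M)$.
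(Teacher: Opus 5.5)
Your proposal is correct and matches the paper: the corollary is exactly \cref{obs:makememoryfull}, \cref{obs:item2}, applied to the allocator of \cref{thm:main}, which gives overhead $f(\frac{1}{3M}) = O(\log^4 M\cdot(\log\log M)^2)$. Your speculative ``compression'' mechanism is not how the observation works, and in general it would not preserve overhead; the paper instead doubles all sizes, uses $2M+1$ slots, and maps $\phi'(x)=\lfloor\phi(x)/2\rfloor$, so an object moves only when the simulated allocator moves it. Since you rely on the observation as stated rather than on that sketch, your argument is unaffected.
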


\paragraph*{Lower bound.}
Our next result is a logarithmic lower bound for the overhead factor, which is also the first known super-constant lower bound that applies to general allocators without extra constraints.
Previously, a logarithmic lower bound was known for resizable allocators \cite{Farach-ColtonKS24},\footnote{An advantage of \cite{Farach-ColtonKS24}'s lower bound is that it also applies to offline resizable allocators, whereas ours only applies to online allocators.} and a near-logarithmic lower bound was known for strongly history-independent allocators \cite{Kuszmaul23}.

\begin{theorem}[\cref{sec:lb}]
   \label{thm:loglowerbound}
In the Memory Reallocation problem,  the worst-case expected overhead of any allocator (against an oblivious adversary) must be at least $\Omega (\log \epsilon^{-1})$.
\end{theorem}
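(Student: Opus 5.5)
Via Yao's minimax principle, it suffices to exhibit a distribution over update sequences with load factor $1-\eps$ on which every deterministic online allocator has large average overhead; a per-update expected bound of $f(\eps)$ for a randomized allocator then gives total expected cost at most $f(\eps)$ times the total normalized size of updates, which we show is impossible for $f=o(\log\eps^{-1})$. So the target is an amortized, potential-style statement: the total switching cost divided by the total size of the updates is $\Omega(\log \eps^{-1})$.

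The instance is built in $L=\Theta(\log\eps^{-1})$ geometric scales of object sizes, say $s_i = 2^{i}$ (times a suitable base unit) for $i=1,\dots,L$, with $M$ chosen so that objects of the largest scale still occupy at most a small fraction of memory. Maintain a background of many small objects of total size about $(1-\eps)M$ minus slack. The adversary repeatedly performs rounds: delete a random set of small objects (random positions, so the allocator cannot predict where gaps appear), then insert an object of the next larger scale whose size equals the freed volume. The freed space is scattered across memory in pieces, so to insert the larger object contiguously the allocator must consolidate free space, moving objects whose total size is comparable to the new object. Iterating across scales, we want each unit of memory to be charged once per scale at which its consolidation happens, giving $\Omega(1)$ normalized cost per scale and $\Omega(L)$ in total.

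To make the charging rigorous, I would define, for each scale $i$, a potential $\Phi_i$ measuring how much free space is ``unaligned'' at granularity $s_i$, for instance the free space lying in maximal gaps shorter than $s_i$, or the number of aligned windows of length $s_i$ containing both free space and live objects. I would then show: (a) randomly deleting small objects raises $\Phi_i$ simultaneously for all $i$ by an amount proportional to the deleted volume (with constant probability, because random deletions rarely produce long contiguous gaps); (b) inserting a size-$s_i$ object requires a contiguous gap of length $s_i$, and the only way to manufacture one from fragmented free space is moving objects of total size $\Omega(s_i)$ per unit of potential removed at levels $\le i$. Since the load factor bound keeps total free space only about $\eps M$, the $\Phi_i$ cannot simply accumulate forever: each must be paid down by moves, giving amortized cost $\Omega(1)$ per level per unit of inserted volume.

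The main obstacle is step (b) in combination with the interaction between levels. A single large move could, a priori, reduce fragmentation at every scale at once, which would collapse the $L$ separate charges into one. I would address this by choosing the scales so that the adversary's own deletions at scale $i$ re-randomize the fragmentation at scale $i$ after the allocator has consolidated, and by arguing that a move of an object of size $s$ can only reduce $\Phi_j$ for $s_j \lesssim s$ by $O(s)$ in total summed over $j$ with geometric weights. This summed bound is what must be verified carefully to conclude that the charges at different scales are essentially disjoint.
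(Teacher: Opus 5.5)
\textbf{There is a genuine gap, in step (a), and it sinks the approach.} Your potentials grow only if the allocator lets fragmentation accumulate. But each deletion is a separate update, and the allocator reacts to it before the next one arrives.

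Take the instance as you describe it: random deletions of small objects, then insertion of one larger object of the freed volume. The allocator can keep the small objects in a contiguous run on one side of a single free gap, and the newly inserted larger objects on the other side. Each deleted small object is replaced by the equal-size object at the end of its run, at $O(1)$ overhead. Free space then stays in one contiguous gap, none of your potentials ever grows, and every large insertion goes straight into the gap. Random positions are irrelevant, because the allocator never has to anticipate where a hole will appear. This is the same two-sided layout that the paper observes makes the \cite{Farach-ColtonKS24}-style replacement sequence easy once the prefix property is dropped.

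Step (b) also fails as stated. Filling a hole of size $s$ with one boundary object removes that hole from the ``short gap'' potential at every scale $s_j>s$ simultaneously. So the unweighted sum can drop by $\Theta(Ls)$ for a move of cost $O(s)$. With geometric weights, the $L$ per-scale charges no longer add up to $\Omega(L)$.

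\textbf{What is missing is a way to rule out cheap substitutions.} That has to come from the additive structure of the sizes, together with removing the slack.
\begin{itemize}
\item The paper first uses \cref{obs:makememoryfull} to reduce to full memory with $M=\Theta(1/\eps)$ slots. Then every maximal changed interval is tiled in both states, so the objects it contains before and after have equal total size.
\item With the $k=\Theta(\log M)$ sizes $2^{2k+1}+2^{k+i}+2^{i}$, \cref{lem:binary} forces any such equal-sum pair of subsets to use a common index set $J$ with $\pi(J)=\pi'(J)$.
\item Swapping the low-order parts of two random objects $a,b$ (two deletions, two insertions) therefore forces the changed interval containing the old $a$-object to also contain the old $b$-object. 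In expectation these are $\Omega(k)$ objects of size $\Theta(M/k)$ apart.
\item So these four updates have total expected cost $\Omega(M)$, which means overhead $\Omega(\log M)=\Omega(\log\eps^{-1})$ for one of them.
\end{itemize}

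This argument is inherently worst-case rather than amortized. The $k$ setup insertions cost only $O(M)$, so the amortized overhead on that instance is $O(1)$. By aiming for an amortized $\Omega(\log\eps^{-1})$ bound against arbitrary non-resizable allocators, you are trying to prove a strictly stronger statement than the theorem, one the paper does not establish. You do not need it: the theorem only requires a single update with large expected cost after a cheap preparation phase.
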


Closing the near-quartic gap between our upper bound and lower bound is an interesting open question.

\paragraph*{High-probability guarantee?}
\cref{thm:main} only achieves worst-case \emph{expected} logarithmic overhead. On each update, our allocator may incur large overhead with non-negligible probability, due to periodically performing expensive rebuilds. 
This raises a natural question: can we improve \cref{thm:main} to achieve $\polylog \eps^{-1}$ overhead on every update \emph{with high probability} in $\eps^{-1}$, or even deterministically? 

Unfortunately, our next theorem implies that this is not possible. In the following, the \emph{squared overhead} incurred for an update is defined as the square of the overhead factor for that update.

\begin{theorem}[\cref{sec:lb}]
   \label{thm:mainsecondmoment}
In the Memory Reallocation problem,
  the worst-case expected squared overhead of any allocator (against an oblivious adversary) must be at least
$\Omega (\epsilon^{-1/7})$, even when all objects have size $\Theta(\eps^{3/7} M)$.
\end{theorem}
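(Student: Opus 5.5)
We plan to apply Yao's principle: we fix a distribution over update sequences and show that every deterministic online allocator incurs average squared overhead $\Omega(\eps^{-1/7})$ on it. Fix the object scale $s=\Theta(\eps^{3/7}M)$, so there are $n=\Theta(\eps^{-3/7})$ live objects. The total free space is $\eps M=\Theta(\eps^{4/7}s)$, which is much smaller than one object. We use two (or a few) sizes $s$ and $s+d$ with $d$ a parameter, where $\eps M\gg d$. First we insert $n$ objects to reach load factor $1-\eps$. Each subsequent round then deletes a uniformly random live object and inserts an object of the \emph{other} size, keeping the total size at most $(1-\eps)M$ by choosing sizes appropriately. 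Since $s\gg \eps M$, the free memory at any time splits into gaps lying between consecutive objects in the left-to-right order.

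The key structural step is a \emph{locality} bound. Suppose a size-$s$ object at position $i$ is replaced by a size-$(s+d)$ object, and no object is moved outside a window of $j$ consecutive objects around $i$. Then the free gaps inside that window must together supply at least $d$ extra space. Consequently, the switching cost is at least $s$ times the number of objects between $i$ and the nearest \enquote{rich} stretch, meaning a stretch whose gaps sum to at least $d$; the corresponding overhead is this count of objects. There are at most $\eps M/d$ disjoint rich stretches. For a uniformly random $i$, the typical distance to a rich stretch is therefore $\Omega(nd/(\eps M))$. Convexity (Cauchy--Schwarz over the choice of $i$) then bounds the second moment of this distance from below: the expected squared overhead of the round is at least roughly $(nd/\eps M)^2$, \emph{provided} the current gap configuration was not prepared in advance specifically for position $i$.

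The obstacle is that the allocator may use cheaper earlier moves to spread the slack out evenly, so that every position is near some slack. To rule this out we need a second, competing cost. When a size-$(s+d)$ object is replaced by a size-$s$ object, a gap of size $d$ is created locally, and the allocator must redistribute slack to keep it available everywhere. We would set up a potential, for example the sum over positions of the squared distance to the nearest rich stretch, or the number of stretches whose gap mass is below $d$. Then we show that each round either pays a large move directly or pays to restore the potential; the random choice of $i$ makes the cost of the damaged region unpredictable to the online allocator. This amortization is where I expect the real difficulty, because we need it for squared overhead, whereas expected overhead alone is known to be achievable at $\polylog$ cost. The intended mechanism is that cheap local fixes can only shuttle slack a bounded distance per unit of cost, while a rare rebuild that moves $\Omega(n^{\Theta(1)})$ objects contributes quadratically.

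Finally we optimize $d$ (a power of $\eps$ with $d\le \eps M$) against the constraint that the sizes stay $\Theta(s)$ and that the load stays within $1-\eps$. Balancing the forced distance $nd/(\eps M)$ against the number of rounds over which the adversary can keep forcing it should yield the exponent. With $n=\eps^{-3/7}$ and $\eps M=\eps^{4/7}s$, the choice of $d$ is tuned so that the resulting squared lower bound equals $\eps^{-1/7}$, which explains the specific scale $\eps^{3/7}M$ in the statement. We would carry out the parameter check last, after the locality lemma and the potential argument are in place.
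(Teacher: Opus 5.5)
There is a genuine gap, and it lies in the hard instance itself, not only in the amortization you flagged. Your locality step assumes the new size-$(s+d)$ object must go where the deleted size-$s$ object was. But the adversary only issues a deletion and an insertion, and equal-size objects are interchangeable.

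Here is an allocator that defeats your instance. Keep all size-$s$ objects in a left block, all size-$(s+d)$ objects in a right block, and all free space in one contiguous gap between them. When a random size-$s$ object is deleted, move the size-$s$ object adjacent to the gap into the hole, at cost $s$. Then place the new size-$(s+d)$ object at the right end of the gap. It fits: if $g$ was the gap before the round, the post-insertion free space $g-d$ is nonnegative by the load constraint. The other kind of round is symmetric.

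Every round costs $O(s)$, so your instance admits a \emph{deterministic worst-case} $O(1)$-overhead allocator. No potential or convexity argument can extract a lower bound from it; this is exactly the easy strategy noted after \cref{prop:warmup}. Consequently:
\begin{itemize}
\item the randomness of $i$ buys nothing, and the distance from $i$ to a ``rich stretch'' is irrelevant;
\item the danger you anticipated, slack being spread evenly, is the opposite of what a good allocator does: it concentrates the slack and lets the gap drift for free;
\item your count of $\eps M/d$ rich stretches assumes the slack is exactly $\eps M$, but rounds replacing $s+d$ by $s$ increase it;
\item the exponent $3/7$ is asserted, not derived.
\end{itemize}

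What is missing are the two mechanisms the paper's proof rests on.

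\textbf{Confining changes to one location.} By \cref{obs:makememoryfull} the memory may be completely full. Fill it with objects of sizes $a,b,c$ plus one finger object absorbing the remainder. Choose $a,b,c,d$ free of small-coefficient additive coincidences (\cref{lem:fournumbers}). Then every maximal changed interval not containing the finger merely permutes its contents (\cref{lem:finger}). This replaces your false locality lemma: it controls exactly which substitutions are possible.

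\textbf{Making it expensive to move the gap.} Add $Q$ size-$c$ objects, together with a ``surprise inspection'' that deletes all of them at a random point of the spine (the tree $\caT$, combined with Yao's principle). The multiset of sizes can change only inside the finger's changed interval, so cheap inspections force all $c$-objects to lie within $\rho$ of the finger (\cref{lem:c-clustered}). The sum of the $c$-objects' locations changes by at most $\Delta^2/\mu$ per spine step (\cref{lem:c-position-sum}). Hence the finger stays within $4\rho$ of its initial location (\cref{lem:fingerclose}).

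Only with the finger pinned does the spine have force. The spine replaces size-$b$ objects by size-$a$ objects one at a time, and each new $a$-object appears in the finger's changed interval. These $a$-objects must end up spread across memory, yet a distance potential on them changes by only $O(\Delta^2/\mu)$ per step (\cref{lem:a-sum}). Summing squared costs via Cauchy--Schwarz, with $Q=\Theta(F^2)$, $P=\Theta(F^6)$, $M=\Theta(F^{14})$ and $\mu=\Theta(F^8)$, yields the contradiction and the $3/7$ exponent. Your proposal has neither the confinement of changes to a single location nor the pinning mechanism.
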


In the scenario of \cref{thm:mainsecondmoment}, the overhead factor of any update is always at most $O(\eps^{-3/7})$ (which corresponds to reorganizing the entire memory). Hence, if there is an allocator for load factor $1-\eps$ that achieves, for each update, an overhead at most $f$ with at least $1-\delta$ probability, then either $f \ge \Omega(\sqrt{\eps^{-1/7}}) = \Omega(\eps^{-1/14})$ or  $\delta \ge \Omega\big (\frac{\eps^{-1/7}}{(\eps^{-3/7})^2}\big ) =  \Omega(\eps^{5/7})$ must hold. 

The exponent $1/7$ in \cref{thm:mainsecondmoment} can be slightly improved by a more complicated refinement of our proof, which we omit in this paper. 
We leave it to future work to determine the tight bound.

Similarly to \cref{cor:main-full}, we can show variants of \cref{thm:loglowerbound,thm:mainsecondmoment} with lower bounds stated in terms of the number of memory slots $M$ (which may be full); see details in \cref{sec:lb}.

\subsection{Technical overview}

\label{subsec:overview}

\paragraph*{Upper bound.} 
We briefly describe the technical ingredients behind our proof of \cref{thm:main}.

Our starting point is a generic substitution strategy employed by  Farach-Colton, Kuszmaul, Sheffield, and Westover \cite{Farach-ColtonKS24} in their allocator beating the folklore $O(1/\eps)$ overhead.
Their allocator is resizable, i.e., the live objects $X$ are always allocated within the prefix $[0,\frac{1}{1-\eps}\sum_{x\in X}\mu(x))$ of memory.
For a resizable allocator, it is always safe to append any inserted object immediately after the rightmost live object in memory. The challenge is to restore resizability whenever a deletion happens and creates a gap in memory. 
If the deleted object $x$ is near the right end of the allocation, then it is cheap to shift leftward all the objects after $x$, filling in the gap. 
However, this becomes expensive if the deleted object $x$ is far from the right end.

A key idea of \cite{Farach-ColtonKS24} is a substitution strategy: find another object $y$ near the right end such that $\mu(x)-\mu(y)$ is non-negative but small, and move $y$ into $x$'s original place, nearly filling in the gap. The original place of $y$ then becomes a gap, but it is cheaper to fill it since it is closer to the right.
In order to ensure that such a good substitute object $y$ can always be found near the right end, \cite{Farach-ColtonKS24}'s allocator carefully sorts and organizes the objects in memory, and performs periodic rebuilds. Intuitively, if many updates have happened and consumed most of the available substitute objects on the right, then it performs an expensive rebuild to replenish the supply.

The allocator in \cite{Farach-ColtonKS24} guarantees that the substitute object $y$ and the deleted object $x$ have relative size difference $O(\eps^{1/2})$. While moderately small, this size difference is still non-negligible and accumulates over time, causing significant waste of memory.
This is the main reason why \cite{Farach-ColtonKS24} only achieves polynomial overhead instead of polylogarithmic.
On the positive side, \cite{Farach-ColtonKS24}'s substitution strategy with rebuilds can imply the following interesting corollary (which was noted in the conclusion section of their paper):
\begin{proposition}[based on \cite{Farach-ColtonKS24}]
   \label{prop:fewtype}
   If there are only $k$ distinct object sizes, all in $[\mu,2\mu]$, then an allocator can achieve worst-case expected overhead $O(k\log \frac{M}{k\mu})$. 
\end{proposition}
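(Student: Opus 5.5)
We will build a resizable allocator that keeps the live objects in a prefix of memory, with objects of the same size grouped into runs. Let $n=\sum_{x\in X}\mu(x)/\mu$ be the number of live objects, up to a factor of $2$, and let $L=\frac{1}{1-\eps}\sum_x \mu(x)$ be the length of the prefix. The slack in the prefix is about $\eps n\mu$, and we may assume $\eps n \gtrsim k$ (otherwise the bound $O(k\log\frac{M}{k\mu})$ is at least the folklore cost and there is nothing to prove). Since the memory holds at most $M/\mu$ objects, the number of doubling scales between $k$ and $M/\mu$ is $O(\log\frac{M}{k\mu})$, which is where the logarithm will come from.

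\textbf{Layout.} We maintain a \emph{reserve zone} at the right end of the prefix. It holds, for each of the $k$ sizes $s$, a contiguous block of objects of size exactly $s$ (possibly empty), ordered by size class. \textbf{Insertion} of $x$ appends $x$ to the block of its size in the reserve zone. This shifts at most the blocks to its right, i.e.\ at most $k$ blocks; we avoid moving them by leaving a gap of a few slots per block, as in the substitution scheme of \cite{Farach-ColtonKS24}, and charging block-wise shifts. \textbf{Deletion} of $x$ of size $s$ located outside the reserve zone takes the last object $y$ of the size-$s$ block in the reserve and moves it into $x$'s slot. Since $\mu(y)=\mu(x)$ exactly, no waste is created, and the cost is $O(1)$ plus shifting later blocks left by $s$, which is $O(k)$ object moves of size $\le 2\mu$, i.e.\ overhead $O(k)$. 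If the size-$s$ block in the reserve is empty, we trigger a \textbf{rebuild}: sort all live objects, and recreate the reserve zone of length $\Theta(\eps n\mu)$ with each size class stocked in proportion to its frequency among recent deletions. If the reserve becomes too large or too small relative to $\eps L$, we also rebuild.

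\textbf{Analysis via randomized rebuild timing.} The main obstacle is the adversarial exhaustion of one size class's reserve. Following the rebuild-randomization idea of \cite{Farach-ColtonKS24}, we work at geometric scales. Scale $j$ has a reserve of $2^j k$ objects, with $2^j k\le \eps n$, and is rebuilt every $\Theta(2^j)$ updates at a random phase. A rebuild at scale $j$ costs $O(2^j k\mu)$ moved mass plus the objects between it and the right end. A rebuild at the top scale touches everything, costing $O(M)$, and happens every $\Theta(M/(k\mu))$ updates. Amortized, each scale contributes $O(k)$ expected overhead per update, because the random phase ensures that a fixed update is a rebuild point with probability $\propto 2^{-j}$ while the cost is $\propto 2^j k$. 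Summing over the $O(\log\frac{M}{k\mu})$ scales gives $O(k\log\frac{M}{k\mu})$.

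The delicate step is to argue that between rebuilds of scale $j$, fewer than $2^j$ deletions of any single size can occur, so that each class always has a substitute. To get this, we stock every class with $2^j$ copies, giving total $2^j k$. If a class has fewer live objects than that, all its objects fit in the reserve and deletions there are cheap. Classes that are hit more heavily are handled by falling back to the next scale up, which is why the scales are nested. Verifying that this nesting preserves the resizability slack, using total reserve $\sum_j 2^j k\le O(\eps n)$, is the second check.
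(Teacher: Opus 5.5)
Your outline has the right skeleton: geometric levels rebuilt every $\Theta(2^j)$ updates at a random phase, exact same-size substitution, and cost $2^{-j}\cdot O(2^jk\mu)$ per level summed over $O(\log\frac{M}{k\mu})$ levels. The concrete mechanism, however, has genuine gaps.

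\textbf{The layout does not work as claimed.} In a gap-free prefix, shifting a contiguous run of size-$s'$ objects by $s$ leaves no object in place unless $s$ is an integer multiple of $s'$. So an insertion or deletion moves the whole mass of the blocks to its right, not $O(k)$ objects. Your remedy of leaving gaps needs $\Omega(k\mu)$ slack, but the proposition has no load-factor assumption: its bound is independent of $\eps$ and must hold even with full memory. Your reduction to the folklore allocator is also false. Folklore gives $O(\eps^{-1})$, and $\eps n<k$ only yields $\eps^{-1}>n/k$, which can far exceed $k\log\frac{M}{k\mu}$ (take $k=1$).

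\textbf{The scales are internally inconsistent.} With total reserve $\sum_j 2^jk\le O(\eps n)$, the scales do not cover the other $(1-O(\eps))n$ objects. Replenishing from them is then your $\Theta(n\mu)$ ``sort everything'' rebuild. It is needed every $O(\eps n/k)$ updates, and an oblivious adversary can trigger it even more often by repeatedly depleting one class. That costs overhead $\Theta(k/\eps)$, contradicting your claim that the top rebuild happens only every $\Theta(M/(k\mu))$ updates.

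\textbf{How the paper avoids this.} The $\lceil\log\frac{M}{k\mu}\rceil$ levels partition \emph{all} live objects; the leftmost rebuilt level absorbs everything left over, $n_{j^*}=+\infty$. Objects within a level are in arbitrary order, and insertions are appended at the right end. A deletion swaps $x$ with a same-size object in level $0$, so the hole lies in level $0$. Every update ends with a rebuild of at least level $0$, whose mass is $O(k\mu)$, so the prefix stays gap-free with no slack at all.

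\textbf{The missing invariant.} The step you call delicate is the heart of the proof and is not supplied. You need: for every size $\tau$ and every $j\ge 1$, if levels $\ge j$ contain a size-$\tau$ object, then levels $<j$ contain at least $2^j-(c\bmod 2^j)\ge 1$ of them. With $j=1$ this puts a substitute in level $0$ at every deletion. Then no ``fallback to the next scale up'' is ever needed. Such a fallback would open a hole deep in memory, costing the mass of all smaller levels to close, at a frequency you never bound.

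The invariant does not follow just from having fewer than $2^j$ updates between rebuilds of scale $j$. A rebuild of levels $\le j^*$ can only redistribute copies already in those levels, so you must show enough copies are there. The paper does this by induction. At a rebuild step with $j^*<\ell-1$, $c\bmod 2^{j^*+1}=2^{j^*}$, so the invariant at $j^*+1$ supplies at least $2^{j^*}$ copies of every class that also lives further left. The per-class greedy quotas $(n_0,n_1,n_2,\dots)=(2,2,4,8,\dots)$ then place at least $2^j$ of them in levels $<j$ for every $j\le j^*$. Afterwards each update destroys at most one copy.

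The same quotas give the companion invariant: levels $<j$ hold at most $k2^j+(c\bmod 2^j)$ objects. This is what bounds a level-$j$ rebuild by $O(k2^j\mu)$ and makes your cost sum valid. Equal quotas per class suffice; stocking by recent deletion frequency is unnecessary.
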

When very few distinct object sizes exist, \cref{prop:fewtype} offers a significantly improved overhead factor compared to the general case.
Roughly speaking, \cref{prop:fewtype} is possible because one can ensure that the substitute object $y$ always has exactly the same size as $x$.

Our strategy for proving \cref{thm:main} can be viewed as reducing from the general case to the special case of few distinct object sizes in \cref{prop:fewtype}. 
The key tool that enables this reduction is an additive combinatorial result of Erd\H{o}s and S\'{a}rk\"{o}zy~\cite{erdossarkozy}, which they proved using the sunflower lemma:  in any (multi)set $S$ of integers from $[n]$, one can find at least $\Omega(\frac{|S|}{\log^2 n})$ \emph{disjoint} subsets $\{B_i\}_i$, each of cardinality $|B_i|\le O(\log n)$, which all have equal sum $\sum_{a\in B_i}a$.\footnote{The actual theorem in \cite{erdossarkozy} was not this statement, but rather its direct corollary  that the subset sums of $S$ contain an arithmetic progression of length $\Omega(\frac{|S|}{\log^2 n})$.
The latter statement is of more historical interest in the literature, but less useful for us.
}

When the object sizes are within a constant factor of each other,
we can appropriately round each object size up by a factor of at most $1+\eps$ (which is allowable when the load factor is $1-O(\eps)$), and then iteratively apply the above lemma with $O(\eps^{-1})$ in place of~$n$.
This  partitions the objects into \emph{bundles} of $O(\log \eps^{-1})$ objects each, with at most $\polylog(\eps^{-1})$ distinct bundle sizes.
We allocate each bundle contiguously in memory. 
In this way, our situation resembles the special case of few distinct object sizes as in \cref{prop:fewtype}.
The difference is that we operate on bundles rather than individual objects; this increases the overhead factor by the number of objects in each bundle, which is only logarithmic.

Our proof of \cref{thm:main} proceeds not by a black-box reduction to \cref{prop:fewtype}, but rather by modifying its proof and handling additional implementation details related to bundling. For example, we need to unbundle a bundle when one of its objects is deleted, and bundle new objects as they are inserted.
During each rebuild operation, we will rebundle the affected objects, so that the number of distinct bundle sizes remains bounded by $\polylog(\eps^{-1})$. (We also need to lift the bounded-ratio assumption in \cref{prop:fewtype}; this was already addressed by \cite{Farach-ColtonKS24}, and we omit the details in this overview.)

Since \cite{Farach-ColtonKS24} did not explicitly prove \cref{prop:fewtype}, we include a proof below for convenience. This also serves as a warm-up for our main proof of \cref{thm:main}.

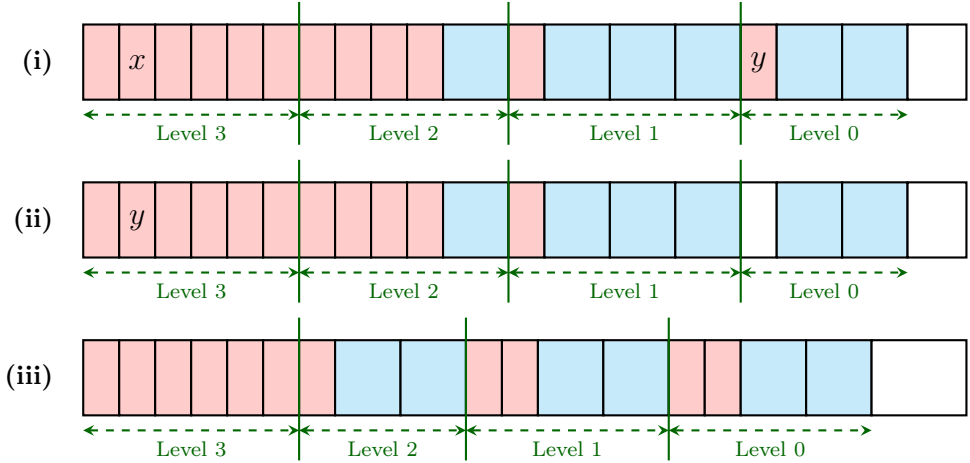
\begin{figure}[htbp]
    \centering
    \begin{tikzpicture}[
    x=0.0087cm, y=1cm, 
    block/.style={
        draw=black, 
        thick
    },
    label/.style={
        anchor=center,
        font=\large
    },
    row label/.style={
        anchor=east,
        font=\small\bfseries
    },
    level arrow/.style={
        <->,
        dashed,
        green!40!black,
        thick,
        >=stealth
    },
    level sep/.style={
        green!40!black,
        thick
    },
    level text/.style={
        midway, 
        below, 
        font=\scriptsize, 
        text=green!40!black
    }
]

    \def\sb{100}
    \def\sa{55}
    \def\sc{115}
    \def\sf{90}
    \def\sfp{145}
    
    \def\strA{a}
    \def\strB{b}

    \def\yA{4.2} 
    \node[row label] at (-30, \yA+0.5) {(i)};

    \def\rowOne{
        \sa/a, \sa/a, \sa/a, \sa/a, \sa/a, \sa/a,
        \sa/a, \sa/a, \sa/a, \sa/a, \sb/b,
        \sa/a, \sb/b, \sb/b, \sb/b,
        \sa/a, \sb/b, \sb/b, 
        \sf/x
    }

    \def\xPos{0}
    \coordinate (R1_start) at (0, \yA);

    \foreach \width/\lab [count=\i] in \rowOne {
        \def\currFill{white}
        \ifx\lab\strA \def\currFill{red!20} \fi
        \ifx\lab\strB \def\currFill{cyan!20} \fi
        \ifnum\i>18 \def\currFill{white} \fi

        \draw[block, fill=\currFill] (\xPos, \yA) rectangle ++(\width, 1);
        
        \ifnum\i=2 \node[label] at (\xPos + 0.5*\width, \yA + 0.5) {$x$}; \fi
        \ifnum\i=16 \node[label] at (\xPos + 0.5*\width, \yA + 0.5) {$y$}; \fi
        
        \pgfmathsetmacro{\xPos}{\xPos + \width}
        \xdef\xPos{\xPos}

        \ifnum\i=6  \coordinate (R1_L3) at (\xPos, \yA); \fi 
        \ifnum\i=11 \coordinate (R1_L2) at (\xPos, \yA); \fi 
        \ifnum\i=15 \coordinate (R1_L1) at (\xPos, \yA); \fi 
        \ifnum\i=18 \coordinate (R1_L0) at (\xPos, \yA); \fi 
    }

    \draw[level arrow] (R1_start |- 0,\yA-0.2) -- (R1_L3 |- 0,\yA-0.2) node[level text] {Level 3};
    \draw[level arrow] (R1_L3 |- 0,\yA-0.2)    -- (R1_L2 |- 0,\yA-0.2) node[level text] {Level 2};
    \draw[level arrow] (R1_L2 |- 0,\yA-0.2)    -- (R1_L1 |- 0,\yA-0.2) node[level text] {Level 1};
    \draw[level arrow] (R1_L1 |- 0,\yA-0.2)    -- (R1_L0 |- 0,\yA-0.2) node[level text] {Level 0};

    \draw[level sep] (R1_L3 |- 0,\yA-0.6) -- (R1_L3 |- 0,\yA+1.3);
    \draw[level sep] (R1_L2 |- 0,\yA-0.6) -- (R1_L2 |- 0,\yA+1.3);
    \draw[level sep] (R1_L1 |- 0,\yA-0.6) -- (R1_L1 |- 0,\yA+1.3);

    \def\yB{2.1} 
    \node[row label] at (-30, \yB+0.5) {(ii)};

    \def\xPos{0}
    \coordinate (R2_start) at (0, \yB);

    \foreach \width/\lab [count=\i] in \rowOne {
        \def\currFill{white}
        \ifx\lab\strA \def\currFill{red!20} \fi
        \ifx\lab\strB \def\currFill{cyan!20} \fi
        \ifnum\i=16 \def\currFill{white} \fi
        \ifnum\i>18 \def\currFill{white} \fi

        \draw[block, fill=\currFill] (\xPos, \yB) rectangle ++(\width, 1);
        
        \ifnum\i=2 \node[label] at (\xPos + 0.5*\width, \yB + 0.5) {$y$}; \fi
        
        \pgfmathsetmacro{\xPos}{\xPos + \width}
        \xdef\xPos{\xPos}

        \ifnum\i=6  \coordinate (R2_L3) at (\xPos, \yB); \fi
        \ifnum\i=11 \coordinate (R2_L2) at (\xPos, \yB); \fi
        \ifnum\i=15 \coordinate (R2_L1) at (\xPos, \yB); \fi
        \ifnum\i=18 \coordinate (R2_L0) at (\xPos, \yB); \fi
    }

    \draw[level arrow] (R2_start |- 0,\yB-0.2) -- (R2_L3 |- 0,\yB-0.2) node[level text] {Level 3};
    \draw[level arrow] (R2_L3 |- 0,\yB-0.2)    -- (R2_L2 |- 0,\yB-0.2) node[level text] {Level 2};
    \draw[level arrow] (R2_L2 |- 0,\yB-0.2)    -- (R2_L1 |- 0,\yB-0.2) node[level text] {Level 1};
    \draw[level arrow] (R2_L1 |- 0,\yB-0.2)    -- (R2_L0 |- 0,\yB-0.2) node[level text] {Level 0};

    \draw[level sep] (R2_L3 |- 0,\yB-0.6) -- (R2_L3 |- 0,\yB+1.3);
    \draw[level sep] (R2_L2 |- 0,\yB-0.6) -- (R2_L2 |- 0,\yB+1.3);
    \draw[level sep] (R2_L1 |- 0,\yB-0.6) -- (R2_L1 |- 0,\yB+1.3);

    \def\yC{0}
    \node[row label] at (-30, \yC+0.5) {(iii)};

    \def\rowThree{
        \sa/a, \sa/a, \sa/a, \sa/a, \sa/a, \sa/a,
        \sa/a, \sb/b, \sb/b,
        \sa/a, \sa/a, \sb/b, \sb/b, 
        \sa/a, \sa/a, \sb/b, \sb/b, 
        \sfp/x
    }

    \def\xPos{0}
    \coordinate (R3_start) at (0, \yC);

    \foreach \width/\lab [count=\i] in \rowThree {
        \def\currFill{white}
        \ifx\lab\strA \def\currFill{red!20} \fi
        \ifx\lab\strB \def\currFill{cyan!20} \fi
        \ifnum\i>17 \def\currFill{white} \fi

        \draw[block, fill=\currFill] (\xPos, \yC) rectangle ++(\width, 1);
        
        \pgfmathsetmacro{\xPos}{\xPos + \width}
        \xdef\xPos{\xPos}

        \ifnum\i=6  \coordinate (R3_L3) at (\xPos, \yC); \fi 
        \ifnum\i=9  \coordinate (R3_L2) at (\xPos, \yC); \fi 
        \ifnum\i=13 \coordinate (R3_L1) at (\xPos, \yC); \fi 
        \ifnum\i=17 \coordinate (R3_L0) at (\xPos, \yC); \fi 
    }

    \draw[level arrow] (R3_start |- 0,\yC-0.2) -- (R3_L3 |- 0,\yC-0.2) node[level text] {Level 3};
    \draw[level arrow] (R3_L3 |- 0,\yC-0.2)    -- (R3_L2 |- 0,\yC-0.2) node[level text] {Level 2};
    \draw[level arrow] (R3_L2 |- 0,\yC-0.2)    -- (R3_L1 |- 0,\yC-0.2) node[level text] {Level 1};
    \draw[level arrow] (R3_L1 |- 0,\yC-0.2)    -- (R3_L0 |- 0,\yC-0.2) node[level text] {Level 0};

    \draw[level sep] (R3_L3 |- 0,\yC-0.6) -- (R3_L3 |- 0,\yC+1.3);
    \draw[level sep] (R3_L2 |- 0,\yC-0.6) -- (R3_L2 |- 0,\yC+1.3);
    \draw[level sep] (R3_L1 |- 0,\yC-0.6) -- (R3_L1 |- 0,\yC+1.3);

    \end{tikzpicture}
    \caption{An illustration for the proof of \cref{prop:fewtype} with two types of object sizes. 
    (i)~Initial state at $c=3$, where object $x$ is to be deleted. 
    We find another object $y$ of the same size in level $0$. 
    (ii)~Swap $x$ and $y$ before deleting $x$. 
    (iii)~Increase the counter to $c=4$, and rebuild levels $2, 1$, and $0$.}
    \label{fig:upperbound}
\end{figure}

\begin{proof}[Proof of \cref{prop:fewtype} (based on \cite{Farach-ColtonKS24})]
We maintain the \emph{prefix property}, namely that all live objects occupy a prefix of the memory without any gaps in between.
The objects are partitioned into $\ell \coloneqq \lceil \log (\frac{M}{k\mu})\rceil$ \emph{levels} numbered $\ell-1,\dots,1,0$ from left to right, each occupying a contiguous interval of memory.\footnote{This is slightly different from \cite{Farach-ColtonKS24}, who defined levels as nested subsets instead of disjoint subsets.}

Intuitively, levels further to the left can have larger total size and are less frequently rebuilt. Formally,
we maintain the following two invariants, where the counter $c$ is initialized to a random integer and incremented by one per update (i.e., insertion or deletion).
\begin{itemize}
    \item For every integer $j\in [1,\ell]$, the total number of objects in levels $\{j-1,\dots,1,0\}$ is at most $k\cdot 2^{j} + (c\bmod 2^j) < (k+1)2^j$.
    \item For every object size $\tau$ and integer $j\in [1,\ell-1]$, if levels $\{\ell-1,\dots,j+1,j\}$ contain at least one  size-$\tau$ object, then the total number of size-$\tau$ objects in levels $\{j-1,\dots,1,0\}$ is at least $2^{j} - (c\bmod 2^j) \ge 1$.
\end{itemize}
Clearly, the two invariants hold at the beginning with empty memory.

To insert an object, we add it to level $0$ immediately after the current rightmost object.

To delete a size-$\tau$ object $x$, if $x$ is not already in level $0$, we swap $x$ with another size-$\tau$ object~$y$ from level $0$, which must exist due to the second invariant. 
See \cref{fig:upperbound}.
Hence, the deletion only creates a gap in level $0$, which will be filled during the rebuild operation described below.

After every update, we increment the counter $c$ and then  perform a rebuild as follows. 
Pick the largest integer $j^*\le \ell-1$ such that $2^{j^*}$ divides $c$. 
Then, we collect all objects in levels $\{j^*,\dots,1,0\}$, and repartition them by the following greedy rule: Define $(n_{0},n_{1},\dots,n_{j^*-1}) \coloneqq (2,2,4,8,16,32,\dots)$ and $n_{j^*}\coloneqq +\infty$.
For each object size $\tau$, we put $n_{0}$ size-$\tau$ objects in level $0$, then put $n_{1}$ of the remaining size-$\tau$ objects in level $1$, and so on, until the number of remaining size-$\tau$ objects is smaller than $n_j$ for the current level $j$, at which point we finish by putting all of them in level $j$.

To prove the first invariant for $j\in [1,\ell-1]$ (the $j=\ell$ case vacuously holds by our definition of $\ell$), consider the most recent rebuild that involves level $j$, which happened $c\bmod 2^j$ updates before (if no such rebuild happened before, the proof is similar).
Right after this rebuild, the number of objects in levels $\{j-1,\dots,1,0\}$ is at most $n_0+n_1+\dots +n_{j-1}=2^{j}$ for each object size $\tau$, and hence at most $k\cdot 2^j$ in total. Each subsequent update increases this count by at most one, so the first invariant holds.
The second invariant can be proved similarly (we omit the details here).

For each update, a rebuild of levels $\{j,\dots,1,0\}$ is triggered with probability $O(2^{-j})$, incurring a switching cost bounded by their total size, which is $O(k\cdot 2^{j}\cdot \mu)$ by the first invariant. 
Hence, the expected overhead factor is $\sum_{j=0}^{\ell-1}O(2^{-j})\cdot O(k\cdot 2^{j}\cdot \mu)/\mu = O(k\ell) = O(k\log(\frac{M}{k\mu}))$ as claimed.
\end{proof}

\paragraph*{Lower bounds (general strategy).}
We begin with a minor technical simplification for all our lower bound proofs: by a rounding argument (see \cref{obs:makememoryfull}), it suffices to prove overhead lower bounds in the scenario where the number of slots is $M = \Theta(1/\eps)$ and the memory may be full. See details in \cref{sec:lb}.

Now we discuss the general proof ideas for \cref{thm:loglowerbound} and \cref{thm:mainsecondmoment}.
Recall the allocator in \cref{thm:main} exploits the ``additive coincidences'' of object sizes, namely different small subsets of equal sum, to perform cheap substitutions.
The proofs of lower bounds proceed in the opposite direction: we design hard instances that lack additive coincidences, so that cheap substitutions are impossible.
This idea was already reflected in the previous lower bound by \cite{Farach-ColtonKS24} against resizable allocators, and we will further build on this idea. 

We will frequently use the following simple but crucial notion implicitly introduced by \cite{Farach-ColtonKS24}: 
between two full memory states $\phi,\phi'$,  a \emph{maximal changed interval} is an inclusion-maximal interval of memory such that no object in $\phi$ intersecting this interval retains its location in $\phi'$. 
Let $\Delta(\phi,\phi')$ denote the total length of all the maximal changed intervals.
Then, transforming $\phi$ into $\phi'$ requires a total switching cost at least $\Delta(\phi,\phi')$.
See \cref{fig:examplephi} for an illustration and \cref{subsec:lbdefs} for formal definitions.

A maximal changed interval either permutes the objects inside, or substitutes them by a different set of objects with the same total size. 
At a high level, by designing object sizes with few additive coincidences, we have control of the possible substitutions that might occur. This helps us show limitations of low-overhead allocators, which can only produce short maximal changed intervals.

\begin{figure}[htbp]
   \centering
\begin{tikzpicture}[x=1cm, y=1cm]

    \tikzset{
        cell/.style={draw=gray!50, thin},
        block/.style={fill=lightgray!50, draw=black, thick, rounded corners=2pt},
        highlight/.style={fill=gray!70, draw=black, thick, rounded corners=2pt},
        label/.style={font=\sffamily\bfseries}
    }

    \def\yA{1.5} %
    
    \node[anchor=east] at (-0.2, \yA+0.5) {$\phi$};

    \foreach \x in {1,...,13} {
        \draw[black!70, thin] (\x, \yA) -- (\x, \yA+1);
    }

    \draw[black, thick] (0, \yA) rectangle (14, \yA+1); %

    \draw[block] (0.1, \yA+0.1) rectangle (1.9, \yA+0.9);
    \node at (1, \yA+0.5) {2};

    \draw[highlight] (2.1, \yA+0.1) rectangle (6.9, \yA+0.9);
    \node at (4.5, \yA+0.5) {5};

    \draw[block] (7.1, \yA+0.1) rectangle (9.9, \yA+0.9);
    \node at (8.5, \yA+0.5) {3};

    \draw[block] (10.1, \yA+0.1) rectangle (10.9, \yA+0.9);
    \node at (10.5, \yA+0.5) {1};

    \draw[highlight] (11.1, \yA+0.1) rectangle (13.9, \yA+0.9);
    \node at (12.5, \yA+0.5) {3};

    \def\yB{0} %

    \node[anchor=east] at (-0.2, \yB+0.5) {$\phi'$};

    \foreach \x in {1,...,13} {
        \draw[black!70, thin] (\x, \yB) -- (\x, \yB+1);
    }

    \draw[black, thick] (0, \yB) rectangle (14, \yB+1); %

    \draw[block] (0.1, \yB+0.1) rectangle (0.9, \yB+0.9);
    \node at (0.5, \yB+0.5) {1};

    \draw[block] (1.1, \yB+0.1) rectangle (1.9, \yB+0.9);
    \node at (1.5, \yB+0.5) {1};

    \draw[highlight] (2.1, \yB+0.1) rectangle (6.9, \yB+0.9);
    \node at (4.5, \yB+0.5) {5};

    \draw[block] (7.1, \yB+0.1) rectangle (7.9, \yB+0.9);
    \node at (7.5, \yB+0.5) {1};

    \draw[block] (8.1, \yB+0.1) rectangle (10.9, \yB+0.9);
    \node at (9.5, \yB+0.5) {3};

    \draw[highlight] (11.1, \yB+0.1) rectangle (13.9, \yB+0.9);
    \node at (12.5, \yB+0.5) {3};

    \draw[|-|, line width=1.5pt, dashed, blue] (0, 1.25) -- (2, 1.25) ;
    \draw[|-|, line width=1.5pt, dashed, blue] (7, 1.25) -- (11, 1.25) ;

    \foreach \x in {0,...,13} {
        \node[above] at (\x + 0.5, \yA+1) {\x};
    }

\end{tikzpicture}

   \caption{Visualization of two full memory states $\phi,\phi'$
 with $M=14$ slots. 
   The unchanged objects in $\phi$ and $\phi'$ (highlighted in dark gray) have sizes $5$ and $3$.
   The maximal changed intervals between them (highlighted in dashed blue) are
   $[0,2)$ and $[7,11)$.
\\
To transform $\phi$ to $\phi'$, the
allocator must  incur a total switching cost of at least $\Delta(\phi,\phi')=2+3+1=6$. 
}
   \label{fig:examplephi}
\end{figure}
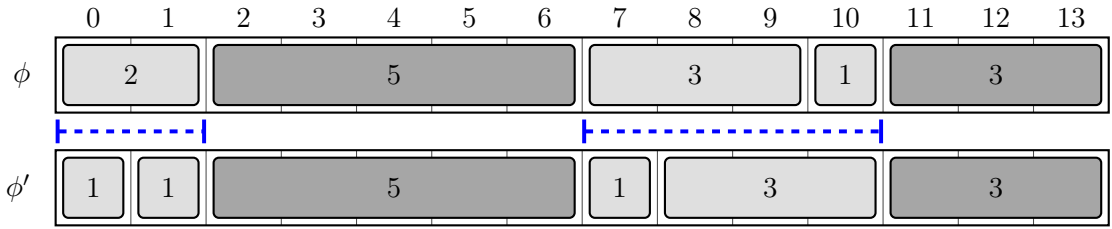

\paragraph*{Logarithmic lower bound.}
Now we sketch the proof of 
\cref{thm:loglowerbound}, which is relatively simple.
Recall that it suffices to prove a worst-case expected $\Omega(\log M)$ overhead lower bound when all $M$ memory slots can be full.
We construct a set $S=\{s_1,s_2,\dots,s_k\} \subset \Z^+$ such that for every pair of distinct $a,b\in [k]$, there exist $s_a',s_b'\in \Z^+$ with $s_a'+s_b'=s_a+s_b$ so that the following property holds:
\begin{itemize}
   \item Let  $S' = (S \setminus \{s_a,s_b\}) \sqcup \{s'_a,s'_b\}$.  If $X\subseteq S$ and $X'\subseteq S'$ satisfy $\sum_{x\in X}x = \sum_{x'\in X'}x'$ and $s_a \in X$, then $s_b\in X$ must hold as well.
\end{itemize}
One can verify that the following construction works and ensures every integer is bounded by $O(2^{2k})$: $s_i \coloneqq  2^{i+k} + 2^i$ for all $i\in [k]$, $s'_a \coloneqq 2^{a+k}+2^b$ and $s'_b \coloneqq 2^{b+k}+2^a$.
We further modify this construction to make every integer in $\Theta(2^{2k})$ by padding (see details in \cref{sec:loglb}).
Define the total number of memory slots to be $M\coloneqq s_1+\dots +s_k$ (hence, $k = \Theta(\log M)$ and $s_i \in \Theta(M/\log M)$).
In the hard instance, we first insert $k$ objects of sizes $s_1,s_2,\dots,s_k$ to completely fill the memory, and denote the current memory state by $\phi$.
Then, we pick two distinct indices $a,b\in [k]$ uniformly at random, delete the objects of sizes $s_a,s_b$ and insert objects of sizes $s'_a,s'_b$ as defined above.
Denote the current memory state by $\phi'$.
Now, consider the maximal changed interval $[L,R)$ between $\phi,\phi'$ that contains the size-$s_a$ object of $\phi$. Let $X$ denote the set of sizes of objects in $\phi$ contained in $[L,R)$, so $s_a\in X$. Then, the property above implies that $s_b\in X$ as well. Therefore, $R-L$ is at least the distance between the size-$s_a$ and size-$s_b$ objects in $\phi$, which has expectation $\Omega(M)$ over the randomness of $a$ and $b$. Thus, the total expected switching cost for the two insertions and two deletions is at least $\Ex[\Delta(\phi,\phi')] \ge \Ex[R-L] \ge \Omega(M) $, so the worst-case expected overhead is $\Omega(M)/\Theta(M/\log M) =\Omega(\log M) $ as claimed.

\paragraph*{High-probability lower bound.}
Our proof of \cref{thm:mainsecondmoment} is more involved. 
As a warm-up, it is instructive to see a proof sketch of the following worst-case lower bound against deterministic allocators with the prefix property (which is more or less the same as resizability):
\begin{proposition}
   \label{prop:warmup}
   Any deterministic allocator that satisfies the prefix property, namely that all live objects must occupy a prefix of the memory without any gaps in between, must incur worst-case overhead $\Omega(M^{1/4})$ when there are $M$ memory slots which can be full.
\end{proposition}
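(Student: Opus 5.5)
We will design an adaptive adversary; since the allocator is deterministic, adaptivity costs nothing. All objects will have size $\Theta(s)$ with $s = M^{3/4}$, so at most $n=\Theta(M^{1/4})$ objects are live. Under the prefix property, every deletion of an object that is not rightmost leaves a gap, and this gap must be closed immediately. Our goal is to force one update with switching cost $\Omega(M)$, which gives overhead $\Omega(M/s)=\Omega(M^{1/4})$.

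The sizes are chosen to have essentially no additive coincidences. We pick $n$ sizes $s_i = s + t_i$, where the perturbations $t_i\in[0,s/2)$ are generic. A natural choice is to take the $t_i$ to be distinct multiples of a large spacing, plus distinct powers of two in the low bits. This should guarantee that any two multisets of sizes with equal total and at most $m$ elements each are equal as multisets, where $m = \Theta(M^{1/4})$ is allowed as long as $s\ge 2^{\Omega(n)}$ scaling permits. If that is too costly, we will instead settle for a weaker property that holds only for multisets of small cardinality, and use it for a fixed budget of moves.

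The argument then runs as follows. First, we fill the memory with objects of distinct sizes $s_1,\dots,s_n$. Next, we repeatedly delete the object currently in the middle of the memory. The freed space at that location must be refilled by objects moved from elsewhere whose total size equals that of the deleted object plus whatever else is shifted.

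Now consider the maximal changed interval containing the deleted object. By genericity, the only way to fill it is either to shift all objects to its right, which costs $\Omega(M)$ whenever the deleted object sits at distance $\Omega(M)$ from the right end, or to perform a substitution. In a substitution, the set of objects removed from the interval and the set placed into it must have equal total size, up to the one deleted object. Because this set difference consists of objects near the right end, we then reinsert a fresh object of a new generic size at the end. We will choose these new sizes so that the mismatch forced by one substitution cannot later be absorbed. Each cheap operation uses up one unit of a potential, for example the number of positions near the right end holding objects whose sizes can still cancel. After $O(M^{1/4})$ rounds this potential is exhausted, and the next deletion must cost $\Omega(M)$.

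The main obstacle is defining this potential and the size family precisely, so that cheap substitutions are impossible in total and not just one at a time. My first attempt would use a two-scale construction: the high digits of each size encode an identity, and the low digits encode a checksum that makes any nontrivial equal-sum exchange of fewer than $M^{1/4}$ objects impossible. The $M^{1/4}$ exponent would then come from balancing the object count $M/s$ against the bound on sizes needed for genericity.
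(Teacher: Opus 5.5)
\textbf{There is a genuine gap: the size family your argument needs cannot exist, and the potential meant to replace it is never defined.}

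To force a single update of cost $\Omega(M)$ with objects of size $s=M^{3/4}$, your plan needs a size family with no nontrivial equal-sum exchange among fewer than about $M^{1/4}$ objects. Any smaller substitution is a candidate move costing $o(M)$. No family of $n=\Theta(M^{1/4})$ distinct sizes in $[s,3s/2)$ has this property. The $\binom{n}{4}=\Theta(M)$ four-element subsets have sums in an interval of only $2s=O(M^{3/4})$ integers, so two distinct ones collide. Deleting their common elements leaves two disjoint nonempty sets of at most four objects with equal sum.

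More generally, forbidding exchanges of up to $m$ objects needs $\binom{n}{m}\le O(ms)$. Forbidding exchanges of $\Theta(n)$ objects, which your maximal-changed-interval argument for the middle deletion requires, needs $s\ge 2^{\Omega(n)}$, as you note yourself. That forces $n=O(\log M)$, which is exactly the regime of the paper's $\Omega(\log M)$ bound, not a polynomial one.

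Your fallback, genericity only for small cardinalities, still leaves cheap $O(1)$-object substitutions available. So it cannot force an $\Omega(M)$ update. The potential that was supposed to show cheap substitutions run out after $O(M^{1/4})$ rounds is never specified. As a result, the proposal has no actual argument at its crux.

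\textbf{The missing idea is amortization through a potential whose change is bounded quadratically in the switching cost.} This removes the need for any single catastrophic update. The paper's steps are:
\begin{enumerate}
\item Use only two sizes $a=\lceil\sqrt M\rceil$ and $b=a+1$. All $ia+jb$ with $0\le i,j\le M/a$ are distinct. This limited genericity is achievable because there are just two sizes with coefficients up to about $\sqrt M$.
\item Fill memory with size-$b$ objects. Then, for $\Theta(\sqrt M)$ rounds, delete one size-$b$ object and insert $O(1)$ size-$a$ objects.
\item Between consecutive nearly full states, distinctness plus the prefix property imply that every maximal changed interval except the rightmost merely permutes its contents. The intervals have total length $O(S)+a$, where $S$ is the worst-case switching cost.
\item Give a size-$a$ object at location $\lambda$ potential $M-\lambda$. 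One round changes the total potential by at most $\sum_k (R_k-L_k)^2/a\le (O(S)+a)^2/a=O(S^2/a+a)$.
\item The potential must grow from $0$ to $\Omega(M^2/a)=\Omega(M^{3/2})$. Hence $S=\Omega(M^{3/4})$, and the overhead is $\Omega(S/a)=\Omega(M^{1/4})$.
\end{enumerate}

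Your target of an $\Omega(M)$ update at object size $M^{3/4}$ is also far stronger than the statement needs. At object size $\mu\ge\sqrt M$, this argument yields only $S=\Omega(\sqrt{M\mu})$, that is, overhead $\Omega(\sqrt{M/\mu})$. That is why $\mu=\sqrt M$, not $M^{3/4}$, is the right scale.
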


Compared to \cref{thm:mainsecondmoment}, here we are restricting to deterministic allocators to avoid probabilistic technicalities, but the main simplification of the proof comes from the prefix property assumption.

\begin{proof}[Proof of \cref{prop:warmup}]
 We pick two integers $a,b\in [\mu,2\mu]$ with $\mu = \Theta(\sqrt{M})$, so that all integers $ia+jb$ where $i,j\in \Z \cap [0,M/\mu]$ are distinct.
 For example, $a \coloneqq \mu\coloneqq  \lceil \sqrt{M}\rceil$ and $b\coloneqq a+1$ satisfy this property.
In our hard update sequence, we first insert $\lfloor M/b\rfloor$ objects of size $b$. 
Then, we gradually replace all of them by size-$a$ objects as follows:  in every iteration, we delete one size-$b$ object, and insert $O(1)$ size-$a$ objects until the total size of  live objects is in $(M-a,M]$. 
 (This  is almost the same as the update sequence in \cite{Farach-ColtonKS24}'s proof of the amortized logarithmic lower bound against resizable allocators.)
Denote the memory state right after the $i$-th iteration ($i\le \lfloor M/b\rfloor$) by $\phi_i$.
For a size-$a$ object at location $\lambda$, define its \emph{potential} to be $M-\lambda\ge 0$; the potential of a memory state is simply the total potential of all the size-$a$ objects in it.
At the beginning, the total potential is zero.
In the end, among all the $\lfloor M/a\rfloor $ size-$a$ objects, at least half of them have potential $\Omega(M)$, with total potential $\Omega(M^2/a) = \Omega(M^{3/2})$.

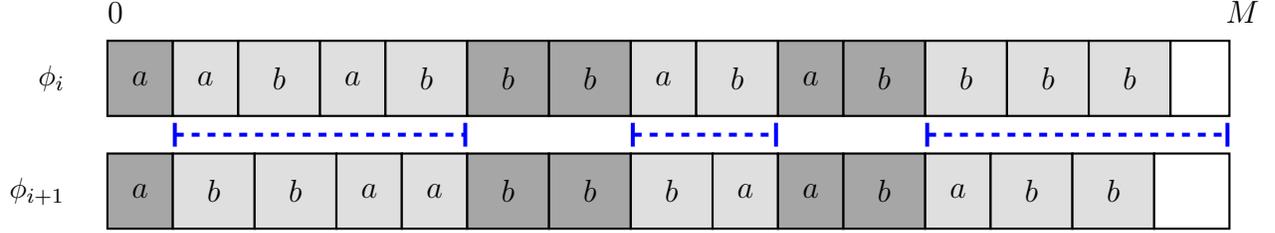
\begin{figure}[htbp]
   \centering
 
   \begin{tikzpicture}[
    x=0.0087cm, y=1cm, 
    block/.style={
        draw=black, 
        thick
    },
    label/.style={
        anchor=center,
        font=\large
    }
]

    \def\sa{100}
    \def\sb{125}
    \def\sc{115}
    \def\sf{90}
    \def\sfp{190}

    \def\yA{1.5} 
    \node[anchor=east] at (-50, \yA+0.5) {$\phi_i$};

    \def\rowOne{
        \sa/a, \sa/a, \sb/b, \sa/a, \sb/b, \sb/b, \sb/b,  \sa/a, \sb/b, \sa/a, \sb/b, \sb/b, \sb/b,\sb/b,\sf/
    }

    \def\xPos{0}
    \foreach \width/\lab [count=\i] in \rowOne {
        
        \def\currFill{gray!70} 
        \ifnum\i>1 
            \ifnum\i<6 
                \def\currFill{lightgray!50} 
            \fi 
        \fi

        \ifnum\i>7
            \ifnum\i<10
                \def\currFill{lightgray!50} 
            \fi
        \fi

        \ifnum\i>11
                \def\currFill{lightgray!50} 
        \fi

        \ifnum\i>14
                \def\currFill{white} 
        \fi

        \draw[block, fill=\currFill] (\xPos, \yA) rectangle ++(\width, 1);
        \node[label] at (\xPos + 0.5*\width, \yA + 0.5) {$\lab$};
        
        \ifnum\i=1 \coordinate (R1Start) at (\xPos + \width, 1.25); \fi
        \ifnum\i=5 \coordinate (R1End) at (\xPos + \width, 1.25); \fi
        
        \pgfmathsetmacro{\xPos}{\xPos + \width}
        \xdef\xPos{\xPos}
    }
    \node[above, font=\large] at (0, \yA + 1.1) {$\,\,\,0$};
    \node[above, font=\large] at (\xPos, \yA + 1.1) {$\,\,\,\,\,M$};

    \def\yB{0}
    \node[anchor=east] at (-50, \yB+0.5) {$\phi_{i+1}$};

    \def\rowTwo{
       \sa/a, \sb/b, \sb/b, \sa/a, \sa/a, \sb/b, \sb/b, \sb/b,  \sa/a, \sa/a, \sb/b,\sa/a,\sb/b,\sb/b,\sc/
    }

    \def\xPos{0}
    \foreach \width/\lab [count=\i] in \rowTwo {
        
          \def\currFill{gray!70}
        \ifnum\i>7
            \ifnum\i<10
                \def\currFill{lightgray!50} 
            \fi
        \fi

        \ifnum\i>1
            \ifnum\i<6
                \def\currFill{lightgray!50} 
            \fi
        \fi
        \ifnum\i>11
                \def\currFill{lightgray!50} 
        \fi

        \ifnum\i>14
                \def\currFill{white} 
        \fi

        \draw[block, fill=\currFill] (\xPos, \yB) rectangle ++(\width, 1);
        \node[label] at (\xPos + 0.5*\width, \yB + 0.5) {$\lab$};
        
        \ifnum\i=7 \coordinate (R2Start) at (\xPos + \width, 1.25); \fi
        \ifnum\i=9 \coordinate (R2End) at (\xPos + \width, 1.25); \fi

        \ifnum\i=11 \coordinate (R3Start) at (\xPos + \width, 1.25); \fi

        \ifnum\i=15 \coordinate (R3End) at (\xPos + \width, 1.25); \fi
        
        \pgfmathsetmacro{\xPos}{\xPos + \width}
        \xdef\xPos{\xPos}
    }

    \draw[|-|, line width=1.5pt, dashed, blue] (R1Start |- 0, 1.25) -- (R1End |- 0, 1.25);
    \draw[|-|, line width=1.5pt, dashed, blue] (R2Start |- 0, 1.25) -- (R2End |- 0, 1.25);
    \draw[|-|, line width=1.5pt, dashed, blue] (R3Start |- 0, 1.25) -- (R3End |- 0, 1.25);

\end{tikzpicture}
\caption{An illustration for the proof of \cref{prop:warmup}. Between memory states $\phi_i,\phi_{i+1}$ satisfying the prefix property, every maximal changed interval $[L_k,R_k)$ (marked in dashed blue), except for the rightmost one, permutes the objects inside. } 
\label{fig:warm}
\end{figure}

Between two adjacent memory states $\phi_i,\phi_{i+1}$, consider all the maximal changed intervals $[L_k,R_k)$, with the exception that the interval $[L_f,R_f)$ containing the rightmost object of $\phi_{i+1}$ is redefined with right boundary $R_f\coloneqq M$. 
See an illustration in \cref{fig:warm}.
Since the rightmost gap in $\phi_{i+1}$ is smaller than $a$, the total switching cost to transform $\phi_i$ into $\phi_{i+1}$ is at least $\sum_{k \neq f}(R_k-L_k) + (R_f-L_f-a)$. Assuming the allocator incurs worst-case switching cost $S$ per update, this implies $\sum_{k}(R_k-L_k)  \le  O(S) + a$.

Every $[L_k,R_k)$, except for the rightmost one, is completely filled by objects, and hence must contain the same number of size-$a$ and size-$b$ objects in both states, due to the property of $a$ and $b$ stated at the beginning.
Now we analyze the total potential difference between $\phi_i$ and $\phi_{i+1}$: 
\begin{itemize}
   \item Inside every $[L_k,R_k)$ ($k\neq f$), we can perfectly pair up the $a$-objects in $\phi_i$ with those in $\phi_{i+1}$,  forming at most $(R_k-L_k)/a$ pairs. Each pair contributes potential difference less than $R_k-L_k$. Hence, the total contribution to the potential difference is $\le (R_k-L_k)^2/a$.
   \item Inside the rightmost maximal changed interval $[L_f,R_f)$ in $\phi_i$ (and in $\phi_{i+1}$),
   there are  at most $(R_f-L_f)/a$ size-$a$ objects, each with potential at most $M-L_f = R_f-L_f$.
   Hence, the total contribution to the potential difference is also $\le (R_f-L_f)^2/a$.
\end{itemize}
Summing up, the total potential difference between $\phi_i,\phi_{i+1}$ is at most \[\sum_{k}(R_k-L_k)^2/a \le \big (\sum_k (R_k-L_k)\big )^2/a \le (O(S)+a)^2/a = O(S^2/a + a).\]
Summing up over all $i\le \lfloor M/b\rfloor $, we obtain that the potential difference between the initial and final states is $O(\frac{M}{b}(\frac{S^2}{a}+a)) =  O(S^2 + M)$.
Since we showed earlier that this potential difference is $\Omega(M^{3/2})$, this implies $S\ge \Omega(M^{3/4})$ for sufficiently large $M$.
Hence, the worst-case overhead factor is at least $\Omega(S/\mu) = \Omega(M^{1/4})$ as claimed.
\end{proof}

Relaxing the prefix requirement in  \cref{prop:warmup} poses more technical challenges.
In particular, the update sequence in the proof of \cref{prop:warmup} becomes easy: an allocator could always store size-$a$ objects in a prefix of the memory, and size-$b$ objects in a suffix, leaving a gap somewhere in the middle. Each update can be handled with constant overhead, by performing insertions and deletions near the gap. 

We now sketch how to modify the proof of \cref{prop:warmup} to obtain an $M^{\Omega(1)}$ worst-case bound for general deterministic allocators (without details on the calculation of parameters).
We pick object sizes $a,b\in  \Theta(\mu)$ with the same roles as before, and another object size $c\in \Theta(\mu)$ whose role will be explained later. 
In our construction, we ensure the total size of size-$a$, size-$b$, and size-$c$ objects stays in $M - \Theta(\mu)$, and we always insert another special object $f$ (which we call the ``finger object'') so that all $M$ memory slots are fully occupied. 
By selecting $a,b,c$ to avoid additive coincidences, we can achieve the following crucial property (formally stated in \cref{lem:finger}):
\begin{itemize}
   \item 
   Between two full memory states $\phi$ and $\phi'$, if a maximal changed interval does not contain the finger object of either state, then it can only permute the objects within it.
\end{itemize}

\begin{figure}[htbp]
   \centering
 
   \begin{tikzpicture}[
    x=0.0087cm, y=1cm, 
    block/.style={
        draw=black, 
        thick
    },
    label/.style={
        anchor=center,
        font=\large
    }
]

    \def\sa{100}
    \def\sb{125}
    \def\sc{145}
    \def\sf{209}
    \def\sfp{190}

    \def\yA{0} 
    \node[anchor=east] at (-50, \yA+0.5) {$\phi'$};

    \def\rowOne{
        \sa/a, \sa/a, \sc/c, \sa/a, \sb/b, \sb/b, \sc/c, \sf/f', \sb/b, \sb/b, \sc/c, \sb/b, \sc/c
    }

    \def\xPos{0}
    \foreach \width/\lab [count=\i] in \rowOne {
        
        \def\currFill{gray!70} 
        \ifnum\i>1 
            \ifnum\i<6 
                \def\currFill{lightgray!50} 
            \fi 
        \fi

        \ifnum\i>7
            \ifnum\i<11
                \def\currFill{lightgray!50} 
            \fi
        \fi

        \ifnum\i=8
                \def\currFill{lightgray!10} 
        \fi
        \draw[block, fill=\currFill] (\xPos, \yA) rectangle ++(\width, 1);
        \node[label] at (\xPos + 0.5*\width, \yA + 0.5) {$\lab$};
        
        \ifnum\i=1 \coordinate (R1Start) at (\xPos + \width, 1.25); \fi
        \ifnum\i=5 \coordinate (R1End) at (\xPos + \width, 1.25); \fi
        
        \pgfmathsetmacro{\xPos}{\xPos + \width}
        \xdef\xPos{\xPos}
    }

    \def\yB{1.5}
    \node[anchor=east] at (-50, \yB+0.5) {$\phi$};

    \def\rowTwo{
       \sa/a, \sb/b, \sc/c, \sa/a, \sa/a, \sb/b, \sc/c, \sb/b, \sfp/f, \sc/c, \sc/c, \sb/b,\sc/c
    }

    \def\xPos{0}
    \foreach \width/\lab [count=\i] in \rowTwo {
        
          \def\currFill{gray!70}
        \ifnum\i>7
            \ifnum\i<11
                \def\currFill{lightgray!50} 
            \fi
        \fi

        \ifnum\i=9
                \def\currFill{lightgray!10} 
        \fi

        \ifnum\i>1
            \ifnum\i<6
                \def\currFill{lightgray!50} 
            \fi
        \fi

        \draw[block, fill=\currFill] (\xPos, \yB) rectangle ++(\width, 1);
        \node[label] at (\xPos + 0.5*\width, \yB + 0.5) {$\lab$};
        
        \ifnum\i=7 \coordinate (R2Start) at (\xPos + \width, 1.25); \fi
        \ifnum\i=10 \coordinate (R2End) at (\xPos + \width, 1.25); \fi
        
        \pgfmathsetmacro{\xPos}{\xPos + \width}
        \xdef\xPos{\xPos}
    }

    \draw[|-|, line width=1.5pt, dashed, blue] (R1Start |- 0, 1.25) -- (R1End |- 0, 1.25);
    \draw[|-|, line width=1.5pt, dashed, blue] (R2Start |- 0, 1.25) -- (R2End |- 0, 1.25);

\end{tikzpicture}
\caption{An illustration for \cref{lem:finger} with two memory states $\phi,\phi'$ (corresponding to $S_{3,5}$ and $S_{3,4}$ in \cref{defn:sizeprof} respectively). Two maximal changed intervals are marked in dashed blue. The first interval contains objects of sizes $\{a,a,b,c\}$ in both $\phi, \phi'$. The second interval contains the finger objects of both $\phi,\phi'$.
} 
\label{fig:finger}
\end{figure}
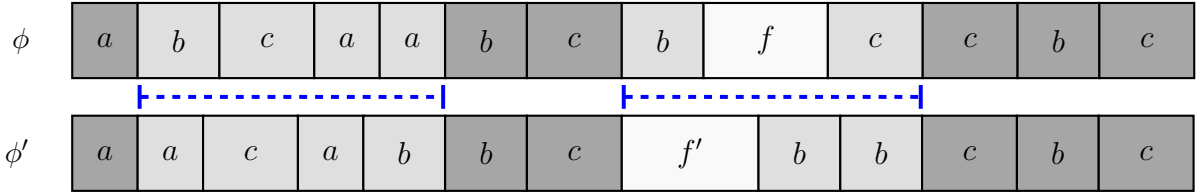

This property is reminiscent of our previous proof for the prefix setting (\cref{prop:warmup}): local permutation of objects may happen anywhere in memory, but insertions and deletions can only happen near the rightmost gap (or, in the non-prefix setting here, the finger object). 
However, unlike in the prefix setting, here the finger object may not always stay at the same place throughout the process. 
If we could somehow constrain the finger object to remain close to a location $p$, then we could adapt the previous potential-based proof by defining the potential of a size-$a$ object at location $\lambda$ to be $\approx |p-\lambda|$ (see the proof of  \cref{lem:a-sum}).

It remains to describe how to constrain the finger object in a relatively small interval.
We modify our earlier hard instance to include 
$Q$ size-$c$ objects, in addition to the $\Theta(M/b)$ size-$b$ objects at the beginning, where $Q$ is polynomially smaller than $M/b$.
As before, we gradually replace the size-$b$ objects by size-$a$ objects. However, we interrupt this process at a random time step unknown to the allocator, and start a ``surprise inspection'' instead, where we gradually delete all the size-$c$ objects.
 To stay prepared for the surprise inspection, the allocator must always keep all the size-$c$ objects close to the finger object; otherwise, it would be expensive to delete all the size-$c$ objects as demanded, since deletions can only happen near the finger object as we mentioned earlier (this strategy is formally implemented in the proof of  \cref{lem:c-clustered}).
Consequently, it is too expensive for the allocator to move the finger object significantly far away,  since it would have to carry along a large number of size-$c$ objects to move together with the finger object (this is formally proved in \cref{lem:c-position-sum} via a similar potential argument).
Therefore, the finger object is always constrained in a small interval as desired.

The proof strategy described above yields a deterministic worst-case $M^{\Omega(1)}$ lower bound on the overhead factor. Since our hard update sequence is oblivious,
the same proof also applies to randomized allocators with high-probability guarantees against an oblivious adversary. In fact, our proof gives a polynomial lower bound already for the $L^2$ norm of the overhead factor, as stated in \cref{thm:mainsecondmoment}. 

\subsection{Related works}
\label{sec:related}

\paragraph*{Memory (Re-)allocation.}
Bender, Farach-Colton, Fekete, Fineman, and Gilbert \cite{BenderFFFG17} defined and studied the Memory Reallocation problem in the \emph{cost-oblivious} setting, where the cost of inserting, deleting, and moving an object of size $\mu$ is an arbitrary unknown subadditive function~$c(\mu)$. 
This is substantially more general than many other works (including ours), e.g., \cite{NaorT01,Kuszmaul23,Farach-ColtonKS24}, which assume $c(\mu)=\mu$.

Naor and Teague \cite{NaorT01} and Kuszmaul \cite{Kuszmaul23} studied Memory Reallocation (for both unit-size and variable-size objects) in \emph{history-independent} settings.
In the strongly history-independent setting, the allocation of a set $S$ of labeled objects should only depend on the set $S$ and the randomness of the allocator.
Kuszmaul \cite{Kuszmaul23} showed that, even for the unit-size case, a strongly history-independent allocator must incur expected reallocation overhead $\Omega(\log\eps^{-1}/\log \log \eps^{-1})$, nearly matching the previous $O(\log \eps^{-1})$ upper bound by Berger, Kuszmaul, Polak, Tidor, and Wein \cite{BergerK0TW22}.

The very recent work of Bender, Conway, Farach-Colton, Koml\'os, Kuszmaul, and Wein \cite{nicole} studied a variant of memory allocation (without reallocation) in which each object being inserted is allowed to be \emph{fragmented}  into up to $k$ contiguous pieces. 
In this request fragmentation setting, they bypassed the logarithmic barrier for the memory competitive ratio in the classic $k=1$ case \cite{Robson71,Robson74,LubyNO96,nicole}.
They also determined the optimal memory competitive ratio in this setting.

\paragraph*{Sunflower lemma.}
The classic Erd\H{o}s--Rado sunflower lemma \cite{erdosrado} was quantitatively improved in a recent breakthrough by Alweiss, Lovett, Wu, and Zhang \cite{jiapeng}, and further refined by \cite{rao20,tao2020sunflower,
frankston2021thresholds, noteonsunflowers}; see the surveys by Rao \cite{raosurvey2023,raosurvey2025}.
The recent advances on sunflower lemmas led to the resolution of the Kahn--Kalai conjecture \cite{park2024proof}.

In theoretical computer science, the sunflower lemma has found applications in data structure lower bounds \cite{FrandsenMS97,GalM07,RamamoorthyR18}, circuit lower bounds \cite{razborov85,AlonB87,Rossman14,CavalarKR22,BlasiokM25,RezendeV25,cavalar2025monotone} and lifting theorems \cite{LovettMMPZ22}, parameterized algorithms \cite{Marx05}, sparsifying or analyzing DNF formulas \cite{GopalanMR13,LovettZ19,LovettSZ19,AkmalW21,Tantau22}, etc.
See \cite[Section 1.2 of the conference version]{AlweissL0Z20conference} for a more comprehensive list of references.

\paragraph{Subset sums and algorithm design.}
Our key lemma is extracted from Erd\H{o}s and S\'ark\"ozy's proof 
that the subset sums of any $S\subseteq [n]$ must contain an arithmetic progression of length $\Omega(|S|/\log^2 n)$ \cite{erdossarkozy}.
Schoen~\cite{schoenarithmeticprogression} subsequently improved this bound to $\Omega(|S|/\log n)$, but his proof does not yield the equal-sum disjoint subsets (crucial for our approach) as \cite{erdossarkozy}'s proof does.
The results of \cite{erdossarkozy,schoenarithmeticprogression} can be greatly improved in the regime where $|S|\ge n^{c}$ for constant $c>0$, e.g., \cite{alon1988sums,freiman93,sarkozy2,szemeredivu,apsumsets,fox}.
 Some of these results have found algorithmic applications in recent pseudopolynomial and approximation algorithms for 0-1 Knapsack and related problems, e.g., \cite{GalilM91,BringmannW21,ChenLMZsoda24,Bringmann24,Jin24,ChenLMZstoc24partition,clmzfocs24,CMZ25}. 
Additive combinatorics of subset sums has also led to faster algorithms for  Bin Packing \cite{NederlofPSW21,anticon}.

Very recently, motivated by questions related to Subset Sum algorithms, Chen, Mao, and Zhang \cite{soda26chen} designed a fast algorithm for constructing the equal-sum disjoint subsets as given by \cite{erdossarkozy}'s proof. Their result may potentially be helpful in designing a time-efficient implementation of our allocator; see further discussions in \cref{sec:con}.

\subsection*{Paper organization} 
\cref{sec:prelim} gives some useful preliminaries.
\Cref{sec:grouping} proves the key combinatorial lemmas used by our allocator. \Cref{sec:ds} describes our allocator. \Cref{sec:lb} proves the lower bounds. We conclude with several open questions in \cref{sec:con}.

\section{Preliminaries}
\label{sec:prelim}
\paragraph*{Notations.}
Denote $[n] = \{1,2,\dots,n\}$ and $\Z^+ = \{1,2,\dots\}$.
All logarithms are base two unless otherwise specified.

For $b>0$, let $a\bmod b$ denote the unique number $r\in [0,b)$ such that $a-r$ is an integer multiple of $b$.

We often write $A \sqcup B$ instead of $A \cup B$ to emphasize that the sets $A$ and $B$ are disjoint.

\paragraph*{On the definition of load factor.}
We have defined the load factor $1-\eps$ (where $\eps>0$) to mean that the total size of live objects at any time is at most $(1-\eps)M$, where $M$ is the number of memory slots.
Another definition in the literature (e.g., \cite{Kuszmaul23}) relaxes this upper bound to $\lfloor (1-\eps)M \rfloor + 1$;
in particular, when $M\le 1/\eps$, this upper bound becomes $M$, i.e., the memory is allowed to be full. 
The following \cref{obs:makememoryfull} shows that
these two definitions are actually equivalent up to changing $\eps$ by a constant factor. This fact will be convenient for proving lower bounds in \cref{sec:lb}.
\begin{observation}
   \label{obs:makememoryfull}
For $\eps>0$, if there is an allocator $\caA$ that (for all $M \in \Z^{+}$) handles total live objects size $\le (1-\eps)M$ with overhead $f(\eps)$, then, 
\begin{enumerate}
    \item  there is an allocator $\caA'$ that (for all $M\in \Z^{+}$) handles total live objects size $\le \lfloor (1-\eps)M\rfloor + 1$ with overhead $f(\eps/3)$, and
    \label{obs:item1}
    \item there is an allocator $\caA''$ that (for all $M\in \Z^{+}$) works even when all the $M$ memory slots can be full, with overhead $f(\frac{1}{3M})$.
    \label{obs:item2}
\end{enumerate}
\end{observation}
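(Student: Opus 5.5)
Both parts come from one simulation idea. We run $\caA$ on a larger \emph{virtual} memory of $M'$ slots, chosen so that the real live-size bound is at most a $1-\eps'$ fraction of $M'$. The real memory is then obtained from the virtual one by a contraction that preserves relative order and never makes objects overlap. The simplest contraction that keeps every real interval inside $[0,M)$ and only moves objects that $\caA$ moves is the \emph{scaling} map: blow up all sizes by an integer factor and compact. We therefore scale sizes up instead of scaling memory down.

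\textbf{Part \ref{obs:item2}.} Given a real instance with $M$ slots that may be full, set $K\coloneqq 3M$. Feed $\caA$ an instance with memory $M'\coloneqq KM+M$ in which each object $x$ has virtual size $K\mu(x)$. The virtual total is at most $KM=(1-\frac{1}{K+1})M'$, so $\caA$ handles it with overhead $f(\frac1{K+1})\le f(\frac1{3M})$. (Here I use $\eps'\ge\frac1{3M}$; if $f$ is not monotone, I pick $M'$ slightly differently so that the load is exactly at most $(1-\frac1{3M})M'$ by adding padding slots.)

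From the virtual allocation $\phi'$, define the real allocation by ordering objects by their virtual start $\phi'(x)$ and packing them left-justified in that order. Equivalently, set $\phi(x)\coloneqq\lfloor \phi'(x)/K\rfloor$ adjusted. The cleaner choice is $\phi(x)\coloneqq\lfloor\phi'(x)/K\rfloor$. Disjointness holds because if $y$ follows $x$ virtually, then $\phi'(y)\ge\phi'(x)+K\mu(x)$, so $\lfloor\phi'(y)/K\rfloor\ge\lfloor\phi'(x)/K\rfloor+\mu(x)$. The bound also holds: $\phi(x)+\mu(x)\le\lfloor (M'-K\mu(x))/K\rfloor+\mu(x)=\lfloor M'/K\rfloor=M+\lfloor M/K\rfloor=M$, since $M<K$.

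The real position of an object changes only if its virtual position changes, and virtual sizes are exactly $K$ times real sizes. Hence the overhead ratio is preserved exactly.

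\textbf{Part \ref{obs:item1}.} Same idea. Suppose the live size is at most $\lfloor(1-\eps)M\rfloor+1$. If $M\ge 2/\eps$ or so, then this is at most $(1-\eps/3)M$ up to the additive $1$, which the scaling absorbs. In general, choose $K$ large (e.g.\ $K\coloneqq \lceil 3/\eps\rceil$ or larger) and virtual memory $M'\coloneqq KM+r$ with $0\le r<K$. The condition is $K(\lfloor(1-\eps)M\rfloor+1)\le(1-\eps/3)M'$. If $\eps M\ge 3/2$ this follows from $\lfloor(1-\eps)M\rfloor+1\le(1-\eps/3)M$. Otherwise $\lfloor(1-\eps)M\rfloor+1\le M$ and we need $KM\le(1-\eps/3)(KM+r)$, i.e.\ $r\ge \frac{\eps KM}{3-\eps}$. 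That can be satisfied with $r<K$ because $\eps M<3/2$.

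The map $\phi=\lfloor\phi'/K\rfloor$ again stays inside $[0,M)$ since $\lfloor r/K\rfloor=0$. So the overhead is $f(\eps/3)$.

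The only delicate point is this case split guaranteeing $r<K$, together with checking the floor-map disjointness; both are short calculations.
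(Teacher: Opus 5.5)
Your construction is essentially the paper's. Scale every size by an integer $K$, give $\caA$ a memory of $KM+r$ slots with $0\le r<K$, and map each virtual location to $\lfloor \phi'(x)/K\rfloor$. This preserves disjointness, keeps objects inside $[0,M)$, and moves only objects that $\caA$ moves. The paper just takes $K=2$ and $r=1$ (memory $2M+1$); one computation using $\lfloor(1-\eps)M\rfloor=M-\lceil \eps M\rceil$ then covers both regimes of your case split at once. Your Part~1, with a large $K$ and case-dependent $r$, is also correct, just more work than needed.

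Part~2 as written has a slip.
\begin{itemize}
\item With $K=3M$ and $M'=KM+M$, the virtual load is $1-\frac{1}{3M+1}$, which is \emph{larger} than $1-\frac{1}{3M}$. Indeed $KM=3M^2>(1-\frac{1}{3M})(3M^2+M)=3M^2-\frac13$.
\item So your claim $\eps'\ge\frac{1}{3M}$ is false, and the appeal to monotonicity of $f$ runs the wrong way. An allocator for a smaller $\eps$ is the stronger hypothesis, so you cannot bound $f(\frac{1}{3M+1})$ by $f(\frac{1}{3M})$.
\item The padding fix you gesture at does work: you need $r\ge\frac{KM}{3M-1}=\frac{3M^2}{3M-1}$, so $r=M+1$ suffices, and still $r<K$.
\item Simpler, and exactly what the paper does: Part~2 is your Part~1 with $\eps=1/M$. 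Then $\lfloor(1-\eps)M\rfloor+1=M$ and $\eps M=1<3/2$, so you are in your second case and get overhead $f(\frac{1}{3M})$.
\end{itemize}
Also drop the aside about ``packing left-justified in order of virtual start.'' It is not equivalent to the floor map, and it would break the overhead bound, since one deletion would shift every object to its right.
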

\begin{proof}
We first prove \cref{obs:item1}.
   Similar proof ideas already appeared in \cite{Kuszmaul23}.

  We describe how to design the desired allocator $\caA'$ given an implementation of $\caA$.
  Suppose $\caA'$ receives an input instance $I'$ with $M'$ memory slots $\{0,1,\dots,M'-1\}$ and maximum total live objects size $\le \lfloor (1-\eps)M'\rfloor +1$.

  Based on instance $I'$, define an input instance $I$ for $\caA$ as follows: there are $M\coloneqq 2M'+1$ memory slots $\{0,1,\dots,M-1\}$. 
  For every insertion/deletion of size $\mu'(x)\in \Z^{+}$ in the instance $I'$, we correspondingly create an insertion/deletion of size $\mu(x)\coloneqq 2\cdot \mu'(x)$ in $I$.

   We run $\caA$ on the constructed instance $I$, and translate the allocation of $\caA$ to our allocator $\caA'$ by the following rule: whenever $\caA$ decides to allocate object $x$ to the memory slots $[\phi(x), \phi(x) +\mu(x))$ (where integer $\phi(x) \in [0, M-\mu(x)] = [0,2M'+1-2\mu'(x)]$), we let $\caA'$ allocate $x$ to the memory slots $[\phi'(x), \phi'(x) +\mu'(x))$ where $\phi'(x)\coloneqq \lfloor \phi(x)/2\rfloor$. One can verify that:
   \begin{itemize}
      \item $ \phi'(x)\in [0,M'-\mu'(x)]$, i.e., the allocated memory slots for $x$ in $I'$ are in $[0,M')$, and
      \item $[\phi(x_1),\phi(x_1)+\mu(x_1)) \cap [\phi(x_2),\phi(x_2)+\mu(x_2)) =\varnothing$ implies $[\phi'(x_1),\phi'(x_1)+\mu'(x_1)) \cap [\phi'(x_2),\phi'(x_2)+\mu'(x_2)) =\varnothing$, i.e., the disjointness of the allocated intervals is preserved. 
   \end{itemize}
Thus, $\caA'$ is a valid allocator for the instance $I'$.
   
   The maximum total live object size of $I$ equals twice that of $I'$, and the memory size in $I$ is $M=2M'+1$. Therefore, the load factor of $I$ is
   \[\frac{2\cdot (\lfloor (1-\eps)M'\rfloor + 1)}{2M'+1}= 1 - \frac{\lceil \eps M'\rceil - 0.5}{M'+0.5} \le 1 - \frac{\eps M'/2}{M'+0.5} \le 1-\eps/3. \]
Hence, $\caA$ can achieve overhead $f(\eps/3)$ on instance $I$.  

Since the movement of our allocator $\caA'$ is caused by the movement of $\caA$ with the same moving objects (with size scaled by two, which does not affect the overhead factor), the overhead factor of $\caA'$ on instance $I'$ is also $f(\eps/3)$. This finishes the proof of \cref{obs:item1}.

We now derive \cref{obs:item2} as an immediate consequence of \cref{obs:item1}. 
Given an input instance with $M$ memory slots and maximum total live objects size $\le M$, we can directly solve it using the allocator in \cref{obs:item1} with $\eps$ set to $1/M$ so that $\lfloor (1-\eps)M\rfloor + 1 = M$. By \cref{obs:item1}, the overhead factor is $f(\eps/3) = f(\frac{1}{3M})$. 
\end{proof}

\paragraph*{Real-interval setting.}
When proving our upper bound in \cref{sec:ds}, to avoid excessive use of floors and ceilings, we use the following real-interval memory model introduced in \cite{Kuszmaul23} and \cite{Farach-ColtonKS24}: 
The memory is the \emph{real interval} $[0,M)$
instead of discrete memory slots $\{0,1,\dots,M-1\}$. Each object $x$ has a real-number size $\mu(x)$, so that the total size of live objects at any time is at most $(1-\eps)M$, where $1-\eps$ ($\eps>0$) is the load factor.
We are allowed to allocate $x$ at any real location $\phi(x)\in [0,M-\mu(x)]$, subject to the constraint that the occupied  intervals $[\phi(x),\phi(x)+\mu(x))$ are pairwise disjoint for all live objects $x$. 
The exact value of $M>0$ is not important in this setting, and can be rescaled to any positive real number.

\begin{observation}
   \label{obs:realinterval}
   If there is an allocator $\caA$ in the real-interval setting with overhead $f(\eps)$, then there is an allocator $\caA'$ that (for all $M\in \Z^{+}$) solves the original setting with $M$ discrete memory slots with overhead $f(\eps)$.
\end{observation}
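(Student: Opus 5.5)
The plan is a direct simulation in the style of the proof of \cref{obs:makememoryfull}: we run $\caA$ on the real interval $[0,M)$ with exactly the same object sizes (integers are valid reals), and translate its real-valued allocation $\phi$ into an integer-valued allocation $\phi'$ by rounding. Because the load factor is unchanged, $\caA$ achieves overhead $f(\eps)$ on this instance. The only task is to choose a rounding rule under which disjointness and the memory bound survive, and the overhead of $\caA'$ is no larger than that of $\caA$.

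The rounding rule I would use is $\phi'(x)\coloneqq\lfloor \phi(x)\rfloor$. First, the memory bound is preserved: since $\phi(x)\le M-\mu(x)$ and $M-\mu(x)$ is an integer, we get $\phi'(x)\in[0,M-\mu(x)]$.

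Second, disjointness is preserved. Take two objects $x_1,x_2$ with disjoint real intervals, and say $\phi(x_1)+\mu(x_1)\le \phi(x_2)$. Since $\mu(x_1)$ is an integer, $\lfloor\phi(x_1)\rfloor+\mu(x_1)=\lfloor \phi(x_1)+\mu(x_1)\rfloor\le \lfloor\phi(x_2)\rfloor$. Hence the integer intervals $[\phi'(x_1),\phi'(x_1)+\mu(x_1))$ and $[\phi'(x_2),\phi'(x_2)+\mu(x_2))$ are disjoint as well.

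Third, the overhead does not increase. In $\caA'$, an object's location changes only if its location changed in $\caA$, because $\phi'$ is a deterministic function of $\phi(x)$ alone. So in each update, the set of moved objects in $\caA'$ is a subset of the set moved by $\caA$, with the same sizes. The switching cost is therefore at most that of $\caA$, and the overhead is at most $f(\eps)$. The bound holds pointwise for every random outcome, so it also holds in expectation.

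There is no real obstacle here. The one point that needs care is that the floor commutes with adding an integer size; this is exactly why integrality of $\mu(x)$ and $M$ is needed.
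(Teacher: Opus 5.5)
Your proposal is correct and follows the paper's own argument: run $\caA$ on $[0,M)$ with the same integer sizes, round each location down via $\lfloor \phi(x)\rfloor$, and use the integrality of $\mu(x)$ and $M$ to preserve both disjointness and the memory bound. You also state explicitly why the overhead cannot increase (every object moved by $\caA'$ was also moved by $\caA$), a step the paper leaves implicit.
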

\begin{proof}
Represent the memory by the real interval $[0,M)$. 
We now proceed in a similar way to the proof of \cref{obs:makememoryfull}.
Given an update sequence with load factor $1-\eps$ for the original discrete setting with $M$ memory slots, we feed the same sequence to the allocator $\caA$ which operates on the real interval $[0,M)$.
In particular, every object  in this update sequence has integer size.
Whenever $\caA$ allocates an object $x$ to the interval $[\phi(x),\phi(x)+\mu(x))$, in the discrete setting we allocate it to the memory slots indexed by $\lfloor \phi(x)\rfloor, \lfloor \phi(x)\rfloor + 1, \dots ,\lfloor \phi(x)\rfloor + \mu(x)-1$. Using $\mu(x)\in \Z$, one can verify that this transformation preserves the disjointness of the allocated intervals, and all the used memory slots are from $\{0,1,\dots,M-1\}$.

The load factor of this update sequence in the real-interval setting is also $\frac{(1-\eps)M}{M}=1-\eps$, so $\caA$ achieves overhead factor $f(\eps)$. Thus, our allocator in the discrete setting also achieves overhead factor $f(\eps)$.
\end{proof}

\section{Combinatorial lemmas}
\label{sec:grouping}

A \emph{sunflower with $p$ petals} is a family of $p$ sets whose pairwise intersections are identical (called the \emph{core}). 
The following is the state-of-the-art bound on the sunflower lemma, proved by Bell, Chueluecha, and Warnke \cite{noteonsunflowers} (based on the recent breakthrough of \cite{jiapeng} and subsequent refinements \cite{rao20,tao2020sunflower,
frankston2021thresholds}).

\begin{lemma}[Sunflower lemma \cite{noteonsunflowers}]
  There is a constant $C\ge 4$ such that the following holds for all integers $p,k\ge 2$. Any family of at least $(Cp\log k)^k$ distinct $k$-element sets must contain a sunflower with $p$ petals.
\label{lem:sunflower}
\end{lemma}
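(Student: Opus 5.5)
The plan is to follow the spread-family approach of Alweiss, Lovett, Wu, and Zhang \cite{jiapeng}, with the sharpened quantitative analysis of \cite{rao20,tao2020sunflower,noteonsunflowers}. It has two steps: reduce to spread families, then show that a spread family contains $p$ pairwise disjoint members.

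Fix a $k$-uniform family $\mathcal{F}$ with $|\mathcal{F}|\ge R^k$, where $R\coloneqq Cp\log k$. Call a family $\mathcal{G}$ of sets of size at most $k$ \emph{$R$-spread} if $|\{G\in\mathcal{G}: S\subseteq G\}|\le R^{-|S|}|\mathcal{G}|$ for every nonempty set $S$.

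\textbf{Step 1 (reduction to spread links).} Among all sets $S$ (including $S=\varnothing$), choose one maximizing $R^{|S|}\cdot|\mathcal{F}_S|$, where $\mathcal{F}_S\coloneqq\{F\in\mathcal{F}:S\subseteq F\}$.
\begin{itemize}
\item Taking $S=\varnothing$ gives the value $|\mathcal{F}|\ge R^k$. Every $S$ with $|S|=k$ gives value at most $R^k$, so we may take $|S|<k$.
\item Consider the link $\mathcal{L}\coloneqq\{F\setminus S:F\in\mathcal{F}_S\}$, a family of distinct $(k-|S|)$-sets. Maximality of $S$ gives, for any $T$ disjoint from $S$,
\[
R^{|S|+|T|}\,|\mathcal{F}_{S\cup T}|\le R^{|S|}\,|\mathcal{F}_S|,
\]
so $\mathcal{L}$ is $R$-spread.
\item Any $p$ pairwise disjoint members of $\mathcal{L}$, each with $S$ added back, form a sunflower with core $S$.
\end{itemize}
So it suffices to prove: \emph{every nonempty $R$-spread family of sets of size at most $k$ contains $p$ pairwise disjoint sets.}

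\textbf{Step 2 (random coloring).} Color the ground set uniformly at random with $p$ colors, and let $W_1,\dots,W_p$ be the color classes. Each $W_i$ is a uniformly random set of density $1/p$. If every $W_i$ contains some member of the family, those $p$ members lie in disjoint color classes and are therefore pairwise disjoint. So it suffices to show $\Pr[\text{$W$ contains no member}]<1/p$ for a random $W$ of density $\alpha=1/p$.

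To bound this probability I would use the spread lemma. Write $W$ as a union of $m\approx\log k$ independent random sets, each of density about $\alpha/m$. Run the encoding/iteration argument:
\begin{itemize}
\item At each stage, replace every surviving member $F$ by a minimal remainder $F'\setminus W$, choosing $F'$ in the family so that it minimizes $|F'\setminus W|$ among members with $F'\subseteq F\cup W$.
\item A counting (encoding) argument uses spreadness to show that remainders shrink by a constant factor per stage in expectation.
\item After about $\log k$ stages, the remainders are empty with good probability, meaning some member lies inside $W$.
\end{itemize}
This requires spreadness $R\ge C'\alpha^{-1}\log k=C'p\log k$, which matches the stated bound. The point of the refinement in \cite{noteonsunflowers} over \cite{jiapeng} is to control the per-stage failure probabilities so that their sum is at most $1/p$, without paying an extra $\log p$.

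\textbf{Main obstacle.} The hardest step is getting failure probability $<1/p$ while keeping $R=O(p\log k)$ rather than $O(p\log(pk))$. The naive spread lemma gives failure probability $\delta$ at cost $\log(k/\delta)$, which adds a $\log p$ factor when $\delta=1/p$.

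My first attempt would be Rao's $\alpha$-dependent form of the spread lemma \cite{rao20}. In the final stages the remainders have constant size, and there the failure probability decays geometrically in $R\alpha$. That lets the $1/p$ target be absorbed at a cost of only a constant factor in $R$.

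If this does not close, a fallback is available since the lemma is only used as a black box here. Accept the weaker bound $(Cp\log(pk))^k$ from \cite{jiapeng}. This changes only the $\log\log\eps^{-1}$ factors downstream, so citing \cite{noteonsunflowers} directly is the pragmatic choice.
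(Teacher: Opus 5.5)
The paper does not prove this lemma; it imports it from \cite{noteonsunflowers}, so I judge your sketch on its own terms. Step~1 (passing to the link $\mathcal{L}$ of a set $S$ maximizing $R^{|S|}|\mathcal{F}_S|$) is correct. The gap is in Step~2, exactly at the obstacle you flag.

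\textbf{The union bound cannot work.} Your framing needs $\Pr[W\text{ contains no member of }\mathcal{L}]<1/p$ for a density-$1/p$ random $W$. This is simply false at spread $R=Cp\log k$ once $p$ is large. Take $\mathcal{F}=\mathcal{G}=\bigl\{\{(1,x_1),\dots,(k,x_k)\}: x_1,\dots,x_k\in[\lceil R\rceil]\bigr\}$.
\begin{itemize}
\item $\mathcal{G}$ has $\lceil R\rceil^k\ge R^k$ members and is $R$-spread.
\item $S=\varnothing$ maximizes $R^{|S|}|\mathcal{G}_S|$, so $\mathcal{G}$ is its own link.
\item A density-$1/p$ set misses the whole first block with probability $(1-1/p)^{\lceil R\rceil}\ge e^{-2\lceil R\rceil/p}\ge k^{-O(C)}$. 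This exceeds $1/p$ as soon as $p>k^{O(C)}$.
\end{itemize}
So no form of the spread lemma, Rao's $\alpha$-dependent one included, gives per-class failure below $1/p$ here. Decay that is geometric in $R\alpha=C\log k$ buys only $k^{-O(C)}$. Your claim that the $1/p$ target is absorbed at constant-factor cost in $R$ is therefore wrong.

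Where the union bound can succeed, namely $p\le k^{O(1)}$, you have $\log(pk)=O(\log k)$, and the unrefined $(Cp\log(pk))^k$ bound of \cite{rao20} already suffices. Your mechanism thus adds nothing where it works, and fails exactly in the regime $\log p\gg\log k$ where the refinement is needed.

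\textbf{The missing idea.} Stop requiring every color class to succeed.
\begin{itemize}
\item Color the ground set with $2p$ colors, so each class $W_i$ has density $\gamma=1/(2p)$.
\item Use the spread lemma only with \emph{constant} failure probability. Spread $R\ge C_0\gamma^{-1}\log(2k)=O(p\log k)$ gives $\Pr[W_i\text{ contains a member of }\mathcal{L}]\ge 1/2$ for each $i$.
\item By linearity of expectation, the expected number of successful classes is at least $2p\cdot\frac12=p$. Hence some coloring has at least $p$ successful classes.
\end{itemize}
This gives $p$ pairwise disjoint members of $\mathcal{L}$, hence a sunflower with core $S$, with no $\log p$ loss.

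\textbf{The fallback.} Your fallback does not prove the stated lemma. It is also not harmless downstream. With $(Cp\log(pk))^k$, \cref{lem:grouping-temp} would only give $p\ge n/O(\log w\cdot\log(n\log w))$, which is $n/O(\log^2 w)$ for $n=\poly(w)$. That is the same loss the paper attributes to the Erd\H{o}s--Rado bound, not merely a change in $\log\log$ factors.
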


Following Erd\H{o}s and S\'{a}rk\"{o}zy \cite{erdossarkozy}, we use the sunflower lemma to prove the following
\cref{lem:grouping-temp}.  (If one uses the original Erd\H{o}s--Rado bound of $1 + k!(p-1)^k$ \cite{erdosrado} instead of the improved bound in 
\cref{lem:sunflower}, the last bound in \cref{lem:grouping-temp} would worsen to $p \ge \frac{n}{C\log^2 w}$.)
\begin{lemma}
\label{lem:grouping-temp}
There is a constant $C$ such that for any $w\ge 4$ and any sequence of $n$ positive integers $a_1,a_2,\dots,a_n \in [w]$, there exist disjoint subsets $B_1,B_2,\dots,B_p\subseteq [n]$ such that 
\begin{itemize}
    \item The sums $\sum_{i\in B_j}a_i$ are equal for all $j\in [p]$, 
    \item $|B_1|=|B_2|=\dots =  |B_p|\le C\log  w$,  and
    \item $p\ge  \frac{n}{C\log w\log \log w}$.
\end{itemize}
\end{lemma}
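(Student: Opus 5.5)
The plan follows Erd\H{o}s and S\'ark\"ozy: pigeonhole many $k$-element index sets onto a common sum, then extract disjoint equal-sum sets as the petals of a sunflower, with the core removed. Fix
\[
k \coloneqq \lceil \log(w\log w)\rceil + 2 = \Theta(\log w),
\]
so that $k\ge 2$ and $(kw)^{1/k} \le 2$. The choice of $k$ only needs to make $kw \le 2^k$ (up to constants); this is the one place where the parameters must be tuned, and it is what gives $|B_j|\le C\log w$.

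\textbf{Small cases.} If $n< k$, or more generally if the target $\frac{n}{C\log w\log\log w}$ is less than $2$, I take $p=1$ and $B_1=\{1\}$. This satisfies all three conditions once $C$ is a large enough constant. From now on assume $n\ge k$ and that the target is at least $2$.

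\textbf{Pigeonhole step.} Consider the family of all $\binom{n}{k}\ge (n/k)^k$ distinct $k$-element subsets $T\subseteq[n]$. Each has sum $\sum_{i\in T}a_i\in[k,kw]$. By pigeonhole, some value $s$ is the sum of at least $(n/k)^k/(kw)$ of these subsets. Call this subfamily $\mathcal{F}$; it consists of distinct $k$-element sets.

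\textbf{Sunflower step.} Set
\[
p\coloneqq \Big\lfloor \frac{n}{2Ck\log k}\Big\rfloor,
\]
with $C$ the constant of \cref{lem:sunflower}. I check that $(Cp\log k)^k \le (n/(2k))^k \le (n/k)^k/(kw)$, where the last inequality uses $2^k\ge kw$. So \cref{lem:sunflower} (applicable since $p,k\ge 2$) yields $T_1,\dots,T_p\in\mathcal{F}$ forming a sunflower with some core $Y$. Define $B_j\coloneqq T_j\setminus Y$. These are pairwise disjoint by the sunflower property. Each has $|B_j| = k-|Y|\le k = O(\log w)$, and each is nonempty because the $T_j$ are distinct. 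Their sums are all $s-\sum_{i\in Y}a_i$, hence equal. Finally, $p = \Omega\big(n/(\log w\log\log w)\big)$ because $k=\Theta(\log w)$ and $\log k = \Theta(\log\log w)$. Adjusting $C$ to absorb all constants gives the lemma.
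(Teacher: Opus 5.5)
Your proof is correct and follows the paper's argument essentially verbatim. You pigeonhole the $k$-subsets onto a common sum, set $p=\lfloor n/(2Ck\log k)\rfloor$ so that $(Cp\log k)^k\le (n/(2k))^k\le\binom{n}{k}/(kw)$, apply the sunflower lemma, and strip the core; the only difference is your choice $k=\lceil\log(w\log w)\rceil+2$ instead of the paper's $k=\lceil 2\log w\rceil$. That difference is immaterial, since both choices give $2^k\ge kw$ and $k=\Theta(\log w)$. One small point: $p\ge 2$ in the main case requires the lemma's constant to be large relative to the sunflower constant, and your ``adjusting $C$'' covers this.
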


\begin{proof}
The proof is due to Erd\H{o}s and S\'{a}rk\"{o}zy \cite{erdossarkozy} (see also recent exposition in \cite{raosurvey2023}).
   We may assume $w$ and $n$ are large enough (otherwise, the lemma immediately holds for a large enough constant $C$). Let $k =  \lceil 2\log w\rceil $. 
   For $s \le nw$, define set family \[\caS_s \coloneqq \left \{B\in \binom{[n]}{k}: \sum_{i\in B} a_i = s\right \}.\]
  Since $0< \sum_{i\in B}a_i \le kw$ holds for all $B\in \binom{[n]}{k}$, we clearly have $\binom{n}{k} = \sum_{1\le s\le kw}|\caS_s|$. By averaging, there exists $s$ such that 
  \[|\caS_s| \ge \frac{\binom{n}{k}}{kw} \ge \frac{(n/k)^k}{kw} \ge \left (\frac{n}{2k}\right )^k, \]
  where the last inequality follows from $2^k\ge kw$ by our choice of $k = \lceil 2\log w\rceil $. 
  
  Choose $p = \lfloor \frac{n}{2Ck\log k}\rfloor \ge \frac{n}{C'\log w \log\log w}$, where $C'>0$ is some constant.
   Then, the above inequality implies $|\caS_s|\ge (Cp\log k)^k$. Hence, by 
  \cref{lem:sunflower}, $\caS_s$ contains a sunflower with $p$ petals $B'_1,B'_2,\dots,B'_p$. 
Let the core be $R = B'_1 \cap \dots \cap B'_p$, and define $B_1\coloneqq B'_1\setminus R, \dots, B_p \coloneqq B'_p \setminus R$. Clearly, for all $j\in [p]$,  $\sum_{i\in B_j}a_i = \sum_{i\in B'_j}a_i - \sum_{i\in R}a_i = s - \sum_{i\in R}a_i$ and  $|B_j| = k - |R|$, so the requirements are indeed all satisfied.
\end{proof}

The following lemma is a simple refinement of the previous one. It allows us to shave off a logarithmic factor in our later application. 
\begin{lemma}
   \label{lem:grouping-main}
In the same setup as  \cref{lem:grouping-temp}, one can additionally achieve $|B_1|+|B_2|+\dots +|B_p| \ge \frac{n}{C\log \log w}$.
\end{lemma}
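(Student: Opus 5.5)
The plan is to redo the proof of \cref{lem:grouping-temp}, replacing the sunflower lemma by a \emph{spread} argument whose common core is forced to be small. Then every petal has size $\Theta(\log w)$, and the petals alone cover $\Omega(n/\log\log w)$ indices. The weakness of the earlier proof is that the core $R$ could contain almost all of the $k$ elements, so that $|B_j|=k-|R|$ is tiny and $\sum_j|B_j|$ is only about $p$.

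\textbf{Setup.} As before, let $k=\lceil 2\log w\rceil$ and fix $s$ with $|\caS_s|\ge (n/2k)^k$. Set $q\coloneqq n/(Kk)$ for a large absolute constant $K$.

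\textbf{Choosing the core.} Among all $R\subseteq[n]$, choose $R$ inclusion-maximal subject to the link family
\[\caS_s(R)\coloneqq\{B\setminus R : R\subseteq B\in\caS_s\}\]
satisfying $|\caS_s(R)|\ge |\caS_s|\,q^{-|R|}$. The empty set qualifies, so such an $R$ exists. Write $r=|R|$ and $m=k-r$.

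\textbf{Two consequences of maximality.} Let $\caL=\caS_s(R)$; it is an $m$-uniform family of equal-sum sets.
\begin{itemize}
\item \emph{Spreadness.} For every nonempty $T$ disjoint from $R$, the number of members of $\caL$ containing $T$ is $|\caS_s(R\cup T)|<|\caS_s|q^{-r-|T|}\le|\caL|q^{-|T|}$. So $\caL$ is $q$-spread.
\item \emph{Small core.} I claim $r<k/2$. Trivially $|\caL|\le\binom{n}{m}\le(en/m)^m$. Combining with the lower bound $|\caL|\ge (n/2k)^k(Kk/n)^r$, it suffices to show that for $r\ge k/2$
\[(K/2)^r>(2ek/m)^m .\]
Put $j=m\le k/2$. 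The right side is $(2ek/j)^j$, which is increasing in $j$ up to $2k$, so it is at most $(4e)^{k/2}$. The left side is at least $(K/2)^{k/2}$. Hence any $K>8e$ works. (The case $m=0$ is also excluded: it would mean $R\in\caS_s$ with $|\caS_s|\le q^k$, contradicting $(n/2k)^k>(n/Kk)^k$.)
\end{itemize}

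\textbf{Extracting disjoint petals.} I will invoke the spread version of the sunflower lemma from the same line of work \cite{jiapeng,rao20,tao2020sunflower,noteonsunflowers}, which underlies \cref{lem:sunflower}: an $m$-uniform $q$-spread family with $q\ge C_0p\log m$ contains $p$ pairwise disjoint members. Taking $p=\lfloor q/(C_0\log k)\rfloor$ gives disjoint $B_1,\dots,B_p\in\caL$. Each has sum $s-\sum_{i\in R}a_i$ and size $m\in(k/2,k]$. Moreover $p=\Omega(n/(\log w\log\log w))$, and
\[\sum_j|B_j|=pm\ge pk/2=\Omega\!\left(n/\log\log w\right).\]
Together with $|B_j|\le k=O(\log w)$, this gives all requirements of \cref{lem:grouping-temp} plus the new one, after adjusting $C$.

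\textbf{Main obstacle.} The crux is the small-core step: balancing $q$ against the trivial upper bound $\binom nm$ so that maximality of $R$ forces $m>k/2$, while keeping $q$ as large as $n/O(\log w)$. The other place needing care is the exact form of the spread lemma being cited. If only \cref{lem:sunflower} may be used, I would instead apply \cref{lem:sunflower} to $\caL$. Then $|\caL|\ge (n/2k)^m(K/2)^r\ge(Cp\log k)^m$ for this $p$, so $\caL$ contains a sunflower with $p$ petals. I would argue that spreadness keeps its secondary core small. Indeed, a core $T$ of size $\ge m/2$ would need $p$ members of $\caL$ containing $T$, but $q$-spreadness allows fewer than $|\caL|q^{-m/2}$ of them, which should be less than $p$ for these parameters. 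Then the petals still have size $\ge m/2>k/4$.
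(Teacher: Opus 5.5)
Your main argument is correct, and it takes a genuinely different route from the paper. The paper uses \cref{lem:grouping-temp} purely as a black box. It applies that lemma repeatedly to the still-unused indices, each round keeping exactly $p^*=\lfloor n/(2C\log w\log\log w)\rfloor$ equal-sum sets of some common size $k_i\ge 1$; these may be tiny because of the core. It stops once $\sum_i k_i\ge C\log w$, and then concatenates the $j$-th set of every round into a single $B_j$. Concatenation preserves equal sums and equal sizes, and the common size lands in $[C\log w,2C\log w]$. At most $n/\log\log w\le n/2$ indices are ever removed, so every round still yields $p^*$ sets.

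You instead open up the proof of \cref{lem:sunflower}. Building the slack $K>8e$ into the spread parameter $q=n/(Kk)$, you compare $|\mathcal{L}|\ge(n/2k)^m(K/2)^r$ with $\binom nm\le(en/m)^m$. This forces the maximal heavy core to satisfy $|R|<k/2$, so a single application of the spread lemma already gives petals of size $m>k/2$. Your computation $(K/2)^r\le(2ek/m)^m$ and the monotonicity of $(2ek/j)^j$ are fine.

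What each route buys: yours is one-shot and explains \emph{why} the core is small. The Erd\H{o}s--S\'{a}rk\"{o}zy family is within a $c^k$ factor of all $k$-sets, so it cannot concentrate on a large core. The paper's route needs only the sunflower statement it quotes, with no knowledge of its proof.

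You should state the spread lemma precisely. You need the version with threshold $q\ge C_0p\log m$, not Rao's $C_0p\log(pm)$. Since $n$ is unbounded in terms of $w$, a $\log p$ term would bring $\log n$ into your lower bound on $p$. The $\log m$ version is true; it underlies the $(Cp\log k)^k$ bound of \cite{noteonsunflowers}. One derivation: split the ground set randomly into $2p$ parts and ask each part to contain a member only with constant probability. So the citation is legitimate.

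Your fallback, however, does not work, and you should drop it. Spreadness only bounds the number of members of $\mathcal{L}$ containing $T$ by $|\mathcal{L}|q^{-|T|}$. By your own lower bound this is at least $(n/2k)^{m-|T|}(K/2)^{r+|T|}\ge n/(2k)$ for every $|T|\le m-1$, which far exceeds $p\approx n/(KC_0k\log k)$. So nothing stops the sunflower produced by \cref{lem:sunflower} from having a core of size $m-1$ and petals of size $1$. With \cref{lem:sunflower} as a black box, the small-core step does not transfer, which is exactly why the paper iterates and concatenates instead.
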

\begin{proof}
   We again assume $n,w$ are large enough.
   We iteratively apply \cref{lem:grouping-temp} as follows. Define $p^* = \lfloor \frac{n}{2C\log w\log \log w}\rfloor$, where $C$ is the constant from \cref{lem:grouping-temp}.

  First, apply \cref{lem:grouping-temp} to $a_1,\dots,a_n$, and obtain disjoint subsets $B_1^{(1)},\dots,B^{(1)}_{p_1} \subseteq [n]$ such that $|B_j^{(1)}| = k_1$ and $\sum_{i\in B_j^{(1)}}a_i = s_1$ for all $j\in [p_1]$. 
  Since $p_1\ge p^*$, we can decrease $p_1$ down to $p^*$.
  
  Then, apply \cref{lem:grouping-temp} to the remaining numbers in $a_1,\dots,a_n$ indexed by $[n] \setminus (B^{(1)}_1\sqcup \dots \sqcup B^{(1)}_{p_1})$, and obtain disjoint subsets $B_1^{(2)},\dots,B^{(2)}_{p_2} \subseteq [n]\setminus (B^{(1)}_1\sqcup \dots \sqcup B^{(1)}_{p_1})$ such that $|B_j^{(2)}| = k_2$ and $\sum_{i\in B_j^{(2)}}a_i = s_2$ for all $j\in [p_2]$.
  As long as there remain at least $n/2$ numbers, we still have $p_2\ge p^*$, so we can decrease $p_2$ to $p^*$. 

Repeat this procedure until the $m$-th iteration, where $m$ is the smallest integer such that $k_1+k_2+\dots +k_m \ge C\log w$. We have $k_1+k_2+\dots +k_m < C\log w + k_m \le 2C\log w$. 
Note that the total size of subsets removed so far is \[\sum_{i=1}^m\sum_{j=1}^{p^*} |B_j^{(i)}| = p^*(k_1+k_2+\dots+k_m) \le \frac{n}{2C\log w\log \log w} (2C\log w)\le \frac{n}{\log \log w}\le \frac{n}{2},\]
assuming $w$ is large enough. 
 Hence, when invoking \cref{lem:grouping-temp}, we always have at least $n/2$ objects remaining, so $p_1,p_2,\dots$ are indeed lower-bounded by $p^*$.

  Finally, we return $B_j\coloneqq \bigcup_{i=1}^m B_j^{(i)}$ for all $j\in [p^*]$. Clearly, we still have that all $\sum_{i\in B_j}a_i$ are equal, and that $|B_1|=\dots = |B_{p^*}|= k_1+k_2+\dots +k_m \in [C\log w, 2C\log w]$. Thus, $|B_1|+\dots +|B_{p^*}| \ge p^*\cdot C\log w \ge \frac{n}{C'\log \log w}$ for some constant $C'>0$. Therefore, all requirements are satisfied (with a possibly different constant $C$).
\end{proof}

Finally, we iteratively apply \cref{lem:grouping-main} to partition \emph{all} the given integers into logarithmic-sized subsets which have logarithmically many distinct sums.
This is the lemma that we will use for designing our allocator.

\begin{lemma}
\label{lem:classifier}
There exist constants $C_1,C_2$ such that the following holds. For any $w\ge 4$, and any sequence of $n$ positive integers $a_1,a_2,\dots,a_n \in [w]$, there exist 
a partition 
\begin{equation}
   \label{eqn:types}
[n] = T_1 \sqcup T_2\sqcup \dots \sqcup T_{\ell},  \ \ell\le C_1\log 2n\log\log w, 
\end{equation}
 and $\ell$ positive integers $s_1,\dots,s_\ell$, where each $T_j$ is further partitioned into $p_j$ subsets,
\begin{equation}
   \label{eqn:bundles-of-a-type}
T_j = B_{j,1}\sqcup B_{j,2} \sqcup \dots \sqcup B_{j,p_j},
\end{equation}
such that for all $j$:
\begin{enumerate}
       \item For all $1\le k\le p_j$, $\sum_{i\in B_{j,k}}a_i = s_j$.
          \label{item:samesize}
    \item For all $1\le k\le p_j$, $|B_{j,k}| \le C_2\log w$.
          \label{item:smallbundle}
       \item  
          For all $1\le m\le n$, the number of $j \in [\ell]$ such that $p_j \ge m$ is at most $C_1\log(2n/m)\log \log w$.
          \label{item:pjtail}
\end{enumerate}
\end{lemma}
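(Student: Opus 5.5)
We will build the partition greedily by repeatedly applying \cref{lem:grouping-main} to the indices not yet covered. Let $U_0=[n]$. At round $t$, if $|U_{t-1}|$ is large enough (at least some threshold $n_0(w)$ depending only on $w$), apply \cref{lem:grouping-main} to $(a_i)_{i\in U_{t-1}}$. This gives disjoint equal-sum, equal-cardinality subsets $B_{t,1},\dots,B_{t,p_t}$, each of size at most $C\log w$. It also gives $p_t\ge |U_{t-1}|/(C\log w\log\log w)$ and $\sum_k|B_{t,k}|\ge |U_{t-1}|/(C\log\log w)$. Set $T_t=\bigsqcup_k B_{t,k}$, $s_t$ equal to the common sum, and $U_t=U_{t-1}\setminus T_t$. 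Then $|U_t|\le(1-\frac{1}{C\log\log w})|U_{t-1}|$. Items~\ref{item:samesize} and~\ref{item:smallbundle} hold for these types by construction, with $C_2\ge C$.

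Once $|U_t|<n_0$, we finish by making each remaining index a singleton type: $T_j=B_{j,1}=\{i\}$ with $s_j=a_i$ and $p_j=1$. Items~\ref{item:samesize} and~\ref{item:smallbundle} hold trivially for these. To keep $\ell$ small we want $n_0$ to be small. The cleanest choice is to note that the proofs of \cref{lem:grouping-temp,lem:grouping-main} only needed "$n$ large" so that $p^*\ge1$. So we either take $n_0=\Theta(\log w\log\log w)$, or simply absorb the small cases into the constants. In the latter case we must check that the singleton phase contributes $O(\log\log w)$ types at the small scales of item~\ref{item:pjtail}.

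\textbf{Counting types.} Since $|U_t|$ decays geometrically with ratio $1-1/(C\log\log w)$, the number of greedy rounds $t$ with $|U_{t-1}|\in[2^{r},2^{r+1})$ is $O(\log\log w)$ for each dyadic scale $r$. Hence the total number of greedy rounds is $O(\log(2n)\log\log w)$. For item~\ref{item:pjtail}, note that $p_t\le |U_{t-1}|$. So $p_t\ge m$ forces $|U_{t-1}|\ge m$, and only the scales $r$ with $2^{r+1}>m$ and $2^r\le n$ contribute. There are $O(\log(2n/m))$ such scales, each with $O(\log\log w)$ rounds, giving the bound $C_1\log(2n/m)\log\log w$. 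Singleton types have $p_j=1$, so they matter only for $m=1$, where the bound is already $\Theta(\log 2n\log\log w)$.

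\textbf{Main obstacle.} The delicate point is the endgame: the number of singleton types must fit within $\ell\le C_1\log 2n\log\log w$. That bound can be as small as $O(\log\log w)$ when $n$ is small relative to $w$. Singletons therefore cannot simply be used for all leftovers below a threshold like $\log w\log\log w$.

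I would first try to reprove \cref{lem:grouping-main} in a form valid for all $n\ge1$. The intended form is: there are disjoint equal-sum subsets of size at most $C\log w$ covering at least a $\frac{1}{C\log\log w}$ fraction of the indices, with $p\ge1$. When $n$ is below the threshold, one can take $p=1$ and $B_1$ equal to any $\min(n,\lfloor C\log w\rfloor)$ indices. This single subset covers all of them when $n\le C\log w$, and otherwise it covers at least a $1/(C\log\log w)$-fraction once $n\le C\log w\log\log w$. With this uniform version, the greedy geometric-decay count above applies down to $|U|=1$ with no separate singleton phase. Both the total count and the tail bound then follow from the dyadic-scale argument.
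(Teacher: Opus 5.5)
Your proposal is correct and is essentially the paper's proof: iterate \cref{lem:grouping-main} on the remaining indices, use its $\Omega(1/\log\log w)$-fraction guarantee for geometric decay, and use $p_j\le |T_j|\le |U_{j-1}|$ to get \cref{item:pjtail}. The ``uniform version'' you settle on for small $n$ (one bundle of $\min(n,\lfloor C\log w\rfloor)$ indices) is exactly what the paper's clause ``assume $n,w$ are large enough, otherwise the lemma holds trivially for a large constant $C$'' in \cref{lem:grouping-temp,lem:grouping-main} is meant to supply, so the singleton endgame you first considered can simply be dropped.
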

We remark that the first two properties in \cref{lem:classifier} are more important. 
\cref{item:pjtail} only serves to shave a near-logarithmic factor in our later application.

\begin{proof}
  Starting from all the indices $[n]$, we iteratively apply \cref{lem:grouping-main} to build the partition \cref{eqn:types}.
   Each application of \cref{lem:grouping-main} returns one set $T_j$ together with its decomposition \cref{eqn:bundles-of-a-type} which satisfies \cref{item:samesize} and \cref{item:smallbundle} by definition. 
   After each application, we remove the already processed part $T_j$, and continue with the remaining indices, until all of $[n]$ have been partitioned.
   
   By the guarantee of \cref{lem:grouping-main}, each iteration removes a subset $T_j$ which contains an $\Omega(1/(\log \log w))$ fraction of the remaining indices. Consequently, after $O(\log (2n/m)\log \log w)$ iterations, the number of remaining indices drops below $m$, and after that we must have $p_j\le |T_j|< m$. This proves \cref{item:pjtail}. In particular, the total number of iterations is $\ell = O(\log n\log \log w)$.
\end{proof}

\section{The allocator}
\label{sec:ds}

In this section, we present our allocator with polylogarithmic overhead, proving \cref{thm:main}. 
By \cref{obs:realinterval},  we model the memory as the real interval $[0,M)$ throughout this section. 

Our main lemma is an allocator that achieves polylogarithmic overhead when the objects are not tiny.
\begin{lemma}
   \label{lem:largemain}
   Let $\ell \ge 2$ be an integer.
   In the real-interval setting with load factor $1-\eps$ where all objects $x$ have sizes $\mu(x)>2^{-\ell}M$,
   there is a resizable allocator with (worst-case) expected overhead
$O(\ell^3(\log\eps^{-1}) (\log \ell + \log \log \eps^{-1})^2)$.
\end{lemma}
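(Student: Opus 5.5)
The plan is to adapt the proof of \cref{prop:fewtype}, with \emph{bundles} taking the place of objects. Bundles are produced by \cref{lem:classifier}, and the objects are split by size into $\ell$ classes.

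\textbf{Size classes and rounding.} Partition the objects into classes $\mathcal{C}_i=\{x: \mu(x)\in(2^{-i-1}M,2^{-i}M]\}$ for $i=0,\dots,\ell-1$. Inside class $i$, round each size up to an integer multiple of $\eps 2^{-i-2}M/\ell$. This inflates each size by a factor of at most $1+O(\eps/\ell)$, so the rounded sizes still have load factor $1-\eps/2$. The rounded sizes in one class become integers in $[w]$ with $w=O(\ell/\eps)$. Applying \cref{lem:classifier} to a class then gives bundles of $O(\log w)=O(\log\ell+\log\eps^{-1})$ objects each. It also gives at most $O(\log n\cdot\log\log w)$ distinct bundle sums (\emph{types}), with the tail bound of \cref{item:pjtail}. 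Inside the memory, each bundle is stored contiguously, with each object padded to its rounded size.

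\textbf{Data structure.} We keep the prefix property. The memory is divided into the $\ell$ size classes, and each class is further divided into levels $0,1,\dots,L$ as in \cref{prop:fewtype}, with $L=O(\log(\ell/\eps))$. A class has $n\le 2^{i+1}/(1-\eps)$ objects, so $\log n=O(\ell)$ and $L$ is of order $\log\eps^{-1}$ relative to the class's object size. Level $0$ of each class also holds a \emph{loose zone} for objects that are not currently bundled.

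The updates work as follows.
\begin{itemize}
\item An insertion puts the new object into the loose zone of its class, at the right end.
\item A deletion of $x$ that lies in a bundle $B$ of type $\tau$ first swaps $B$ with a type-$\tau$ bundle in level $0$; one exists by the second invariant of \cref{prop:fewtype}. Then $B$ is unbundled: its remaining objects join the loose zone, and $x$ is removed.
\end{itemize}
The cross-class layout (classes ordered from large to small objects, with smaller classes nearer the right end) must be arranged so that the gap-closing shifts stay cheap. I would follow \cite{Farach-ColtonKS24}'s treatment of unbounded ratios here: level $j$ of every class is interleaved, with all level-$j$ parts placed contiguously from left to right in decreasing $j$.

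\textbf{Rebuilds.} A random counter triggers rebuilds. A rebuild of levels $\le j$ across all classes gathers their objects, including the loose zone, and rebundles them with \cref{lem:classifier}. It then redistributes the bundles greedily by type, using the counts $(2,2,4,8,\dots)$.

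\textbf{Analysis.}
\begin{itemize}
\item \emph{Level sizes.} The first invariant bounds the total size of levels $<j$ by about $\sum_{\text{types}}\min(p_\tau,2^j)$ times the bundle size. \Cref{item:pjtail} then bounds this by $O(2^j\cdot \ell\log\log w\cdot \log w)$ times the object size; the $\log(2n/m)$ term in that item is what saves a factor.
\item \emph{Rebuild cost.} A level-$j$ rebuild happens with probability $2^{-j}$. Summed over $j\le L$ and over the $\ell$ classes, and normalised by the object size, this gives expected overhead $O(\ell^3\log\eps^{-1}(\log\ell+\log\log\eps^{-1})^2)$ after the dust settles.
\item \emph{Deletion cost.} The swap itself costs $O(\log w)$ bundle-objects, which is within the bound.
\end{itemize}

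\textbf{Main obstacle.} I expect the hardest part to be keeping the second invariant (availability of a same-type bundle in level $0$) through unbundling, and keeping the loose zone small. A deletion destroys a bundle of one type, and nothing creates new bundles of that type between rebuilds. I would handle this exactly like the counter slack in \cref{prop:fewtype}: each update consumes at most one bundle of one type, and the loose zone grows by at most $O(\log w)$ objects per update. Both quantities are then reset at the next level-$0$ rebuild, which keeps the invariants and the memory accounting valid with the $\eps/2$ slack.
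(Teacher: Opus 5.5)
Your rebuild step has a genuine gap. As written, a rebuild of levels $\le j$ gathers all objects there and rebundles them with \cref{lem:classifier}.

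In \cref{prop:fewtype} free redistribution is harmless, because a type is an object size and survives any rebuild. Here a type is an artificial label produced by one particular application of \cref{lem:classifier}. A fresh application to the gathered objects need not produce any bundle with the same sum as a type $\tau$ that still has bundles in levels $>j$. So right after such a rebuild, a type-$\tau$ bundle in level $j+1$ may have no partner below it, and a later deletion from it cannot perform the swap. In your scheme level $0$ is rebuilt on every update, so types in level $1$ lose their level-$0$ partners immediately. Your counter-slack argument accounts only for bundles consumed by deletions, not for types destroyed by rebuilds.

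The missing idea is the set $\caT_{\mathrm{old}}$ in $\rebu(j^*)$. Keep intact every bundle in levels $\le j^*$ whose type still occurs in levels $>j^*$, and unbundle and rebundle only the remaining bundles together with $Y$. The supply invariant (\cref{item:inva-supply}) for an old type at a level $j\le j^*$ then follows from the inductive hypothesis at level $j^*+1$. At the rebuild $c_i\bmod 2^{j^*+1-i}=2^{j^*-i}$, so at least $2^{j^*-i}$ old bundles of that type are available to the greedy placement.

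The price is that types created by rebuilds of different depths coexist. The number of types is then no longer controlled by a single application of \cref{lem:classifier}, which your level-size estimate assumes. You need the separate inductive invariant \cref{item:inva-fewtypes}: at most $C_3(\ell-j)(\log\ell+\log\log\eps^{-1})^2$ types of a given scale in levels $\ge j$. It is proved by charging old types to the invariant at $j^*+1$ and bounding new types via \cref{item:pjtail}, using $|Y_i|\le 2^{j^*-i}\cdot\poly(\ell,\log\eps^{-1})$. This accumulation is where one factor of $\ell$ and the factor $(\log\ell+\log\log\eps^{-1})^2$ come from.

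The multi-scale part is also not set up in a way that gives the bound.
\begin{itemize}
\item With only $L=O(\log(\ell/\eps))$ levels per class, a class of objects of size about $2^{-\ell}M$ can hold about $2^{\ell}$ objects, almost all in its top level. That level is rebuilt with probability about $2^{-L}$, so the expected cost is about $2^{-L}M$, exponentially larger in $\ell$ than the object size.
\item A single counter that rebuilds levels $\le j$ of all classes with probability $2^{-j}$ charges small-object updates for rebuilding large-object levels.
\end{itemize}
What works is $\ell$ global levels indexed by absolute size (after setting $M=2^\ell$, level $j$ may contain only objects of size $\le 2^{j+1}$), with a separate counter $c_i$ per scale. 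A scale-$i^*$ update always rebuilds levels $\le i^*$, and rebuilds levels $\le j^*$ with probability at most $2^{i^*-j^*}$. \cref{item:inva-nottoobig} bounds the total size of levels $\le j^*$ by $O(2^{j^*}\ell^2\log\eps^{-1}(\log\ell+\log\log\eps^{-1})^2)$, and summing over $j^*$ gives the $\ell^3$.

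A minor point: rounding to precision $\eps/\ell$ gives bundles of $O(\log\ell+\log\eps^{-1})$ objects. This loses a factor of $\log\ell/\log\eps^{-1}$ against the stated bound when $\ell\gg\eps^{-1}$. Rounding to relative precision $\eps$, with load factor $1-2\eps$, suffices.
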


Combined with previously known results, \cref{lem:largemain} implies our main theorem.
\begin{proof}[Proof of \cref{thm:main} assuming \cref{lem:largemain}]
Kuszmaul \cite{Kuszmaul23} gave a  resizable allocator for objects of size $\le \eps^{4}M$  with worst-case expected overhead $O(\log \eps^{-1})$.
Farach-Colton, Kuszmaul, Sheffield, and Westover
\cite[Section 4.2]{Farach-ColtonKS24} showed that, given another resizable allocator that handles objects of size larger than $\eps^{4}M$ with worst-case expected overhead $f(\eps)$, one can combine it
with Kuszmaul's allocator to obtain a resizable allocator that handles arbitrary object sizes and achieves worst-case expected overhead $O(f(\eps) + \log(\eps^{-1}))$.
Since \cref{lem:largemain} achieves $f(\eps)\le O(\log^4\eps^{-1}\cdot (\log \log \eps^{-1})^2)$ for $\ell = O(\log \eps^{-1})$, together they imply \cref{thm:main}.
\end{proof}

The rest of this section is devoted to the proof of \cref{lem:largemain}. 
As introduced in the overview section, the proof is a combination of our bundling idea based on \cite{erdossarkozy} (see \cref{sec:grouping}) and the substitution strategy with periodic rebuilds from \cite{Farach-ColtonKS24} (as illustrated in the warm-up proof of \cref{prop:fewtype}).
Additionally, we need to relax the bounded-ratio assumption of \cref{prop:fewtype} in the same way as \cite{Farach-ColtonKS24}.

For simplicity, we assume the load factor is $1-2\eps$ instead of $1-\eps$.
We also assume $\eps^{-1}$ is an integer. Both assumptions can be justified by scaling $\eps$ by at most a constant factor.

We rescale the length of the real memory interval $[0,M)$ to $M \coloneqq 2^\ell$.
Then, the assumption in \cref{lem:largemain} implies that each object $x$ has size $\mu(x)>1$.

\subsection{Basic definitions}

\paragraph*{Scales and inflation.}
A \emph{scale} refers to an interval $(2^{i},2^{i+1}]$ of object sizes, where $i\in \Z$. 
For the purpose of \cref{lem:largemain}, the relevant scales are $(1,2],(2,4],\dots, (2^{\ell-1},2^{\ell}]$, i.e., $i\in \{0,1,\dots,\ell-1\}$.

For an object $x$ with size $\mu(x) = s\in (2^{i},2^{i+1}]$, we inflate its size to $\mu(x)\coloneqq \lceil \frac{s}{\eps \cdot 2^{i+1}} \rceil \cdot  \eps \cdot 2^{i+1}$.
Since $\eps^{-1}$ is an integer, the inflated size is $\eps\cdot 2^{i+1}$ multiplied by an integer in $(\eps^{-1}/2,\eps^{-1}]$, and still lies in the interval $(2^{i},2^{i+1}]$.
 The inflation increases $\mu(x)$ by at most a factor of $(1+2\eps)$, so the new load factor becomes $(1-2\eps) \cdot (1+2\eps) < 1$, i.e., the total size of live objects remains smaller than $M$.
The inflation affects the overhead factor by at most $1+2\eps = O(1)$ multiplicatively.
We always use $\sz(x)$ to denote the size of $x$ after inflation.
Clearly, an allocation of the inflated objects automatically gives an allocation of the original objects.

After inflating the objects, our allocator maintains the prefix property, namely that all objects are stored without any gaps in between, starting from the left boundary of the memory.  
This property immediately implies that our allocator is resizable.

\paragraph*{Bundles and types.}
Our allocator partitions the live objects into \emph{bundles} of various \emph{types}.
The type of a bundle $B$ is denoted by $\typ(B)$.
Objects in the same bundle are always allocated contiguously in memory.
Objects in the same bundle must have the same scale $(2^i,2^{i+1}]$. 
Any two bundles $B,B'$ of the same type must have equal size, $\sum_{y\in B}\mu(y) = \sum_{y\in B'}\mu(y)$.
From time to time, the allocator may unbundle old bundles and form new bundles.

Given a subset $Y$ of objects,
we use the procedure $\crea(Y)$ (\cref{alg:createbundles}) to partition $Y$ into bundles, and assign  newly created types to these bundles. 
This procedure invokes \cref{lem:classifier} for each scale $(2^i,2^{i+1}]$ separately  to bundle the objects of that scale in $Y$.
In \cref{lem:classifier},
each subset $B_{j,k}\subset T_j$ in \cref{eqn:bundles-of-a-type} represents a bundle of objects, and $T_j$ corresponds to the type of that bundle.

\begin{algorithm}
\DontPrintSemicolon
\caption{$\crea(Y)$, where $Y$ is a set of objects}
\label{alg:createbundles}
$\caB \gets \varnothing $ \tcp{the collection of created bundles}
\For{$i\in \{0,1,\dots,\ell-1\}$}{ %
   $Y_i \coloneqq \{y\in Y: \mu(y) \in (2^{i},2^{i+1}]\}$\\
   Let $Y_i = \{y_1,y_2,\dots,y_{|Y_i|}\}$, and define $a_m\coloneqq \mu(y_m)/(\eps\cdot 2^{i+1})$.\\
   Apply \cref{lem:classifier} with $w\coloneqq \eps^{-1}$ to integers $a_1,\dots,a_{|Y_i|}\in [w]$, and obtain the partition of $[|Y_i|]$ as in \cref{eqn:types,eqn:bundles-of-a-type}
\\
\For{each $T_j$ in \cref{eqn:types}}{
   $\tau_{\mathrm{new}} \gets \tau_{\mathrm{new}} + 1$ \tcp{Create a new type indexed by $\tau_{\mathrm{new}}$ (a global counter).}
   \For{each $B_{j,k}\subset T_j$ in \cref{eqn:bundles-of-a-type}}{
      Create a bundle $B \coloneqq \{ y_m : m \in B_{j,k}\}$ with $\typ(B) \coloneqq \tau_{\mathrm{new}}$\\
      $\caB \gets \caB \sqcup \{B\}$.
   }
}
}
\Return{$\caB$}
\end{algorithm}

\paragraph*{Levels.}

The live objects allocated in memory are partitioned into $\ell$ \emph{levels}, where each level consists of a contiguously allocated (possibly empty) set of bundles.
The levels are indexed decreasingly by $\ell-1,\ell-2,\dots,1,0$ from left to right.

We maintain that level $j$ can only contain objects $x$ with $\mu(x)\le 2^{j+1}$. In other words, an object of scale $(2^{i},2^{i+1}]$ can only be placed in levels $\ell-1,\ell-2,\dots,i+1,i$. 
In particular, an object of the smallest scale $(1,2]$ may be placed in any level.

For a bundle type $\tau$, we define its \emph{leftmost level}, denoted $\lowlevel(\tau)$, as the largest $j\le \ell-1$ such that level $j$ contains a bundle of type $\tau$.          (If no type-$\tau$ bundles currently exist in memory, then $\lowlevel(\tau) \coloneqq  -\infty$.)

\subsection{Implementation}
At the very beginning, run $\textsc{Initialize}$ (\cref{algo:initialize}).
\paragraph*{Insertion and deletion.}
Now we describe how our allocator handles updates (i.e., insertions and deletions).
The implementations are given in \cref{alg:insert} and \cref{alg:delete}, respectively.
To slightly unify the descriptions of these two procedures, we introduce a global variable $Y$ to represent the set of objects that are yet to be added to memory. More specifically: 
\begin{itemize}
    \item  For insertion, we set $Y$ to contain the current object being inserted. 
    \item  In the case of deleting $x$, whose scale is $(2^{i^*},2^{i^*+1}]$, we first perform the substitution strategy, so that the bundle $B$ containing $x$ now appears in level $i^*$ (namely the rightmost level allowed to contain an object of this scale).
    Then, remove the entire bundle $B$ from memory (creating a temporary gap in level $i^*$), and set $Y \gets B\setminus \{x\}$, indicating that these objects should be moved back into memory later.
\end{itemize}

Both types of updates finish by invoking the $\rebu(j^*)$ procedure (\cref{alg:rebuild}) for some $j^* \in \{0,1,\dots,\ell-1\}$, whose behavior is to reorganize objects in levels $j^*,j^*-1,\dots,1,0$ together with the objects from $Y$, without modifying the levels to the left of $j^*$.
Here, $j^*$ is determined in a way so that levels further to the left are less frequently rebuilt, as follows:
For each scale $(2^{i}, 2^{i+1}]$, maintain a counter $c_i$, initialized uniformly at random from $\Z \cap [0,2^{\ell-1-i})$.\footnote{This is the only randomized part in our proof of \cref{lem:largemain}. If we instead initialize $c_i\gets 0$ for all $i$, then we would obtain a \emph{deterministic} allocator for object sizes $[\poly(\eps)M,M)$
 with polylogarithmic \emph{amortized} overhead. We do not know how to extend this amortized derandomization to all object sizes in $(0,M)$, since the tiny objects are handled by Kuszmaul's allocator \cite{Kuszmaul23} which seems inherently randomized.}
At the end of each update of object scale $(2^{i^*}, 2^{i^*+1}]$, we first increment the counter $c_{i^*}$ and then invoke $\rebu(j^*)$ for the largest integer $j^*\in [i^*,\ell-1]$ such that $2^{j^*-i^*}$ divides $c_{i^*}$.

\begin{algorithm}
\DontPrintSemicolon
\caption{$\textsc{Initialize}$}
\label{alg:initialize}
\For{$i\in \{0,1,\dots,\ell-1\}$}{$c_i \gets $ random integer from $[0,2^{\ell-1-i}) $ \tcp{  a global counter}}
$\tau_{\mathrm{new}} \gets 0 $ \tcp{  a global counter}
\label{algo:initialize}
\end{algorithm}

\begin{algorithm}
\DontPrintSemicolon
\caption{$\inse(x)$}
\label{alg:insert}
Suppose $\sz(x) \in (2^{i^*},2^{i^*+1}]$\\
$c_{i^*}\gets c_{i^*}+1$\\
$Y \gets \{x\}$ \tcp{ a global variable }
$\rebu(j^*)$, where $j^*$ is the largest integer $j^*\in [i^*,\ell-1]$ such that $2^{j^*-i^*}$ divides $c_{i^*}$
\end{algorithm} 

\begin{algorithm}
\DontPrintSemicolon
\caption{$\dele(x)$}
\label{alg:delete}
Suppose $\sz(x) \in (2^{i^*},2^{i^*+1}]$\\
$c_{i^*}\gets c_{i^*}+1$\\
Let $B$ be the bundle containing $x$ \\
\If{$B$ is not in level $i^*$}
{
\tcp{$B$ must be in levels $\{\ell-1,\ell-2,\dots,i^*+1\}$}
   Let $B'$ be a bundle in level $i^*$ such that $\typ(B')=\typ(B)$ (which must exist by \cref{lem:alwayshasqr})
   \label{line:find-bundle-to-swap}
   \\
   Swap $B$ and $B'$ in memory %
}
\tcp{Now, $B$ is in level $i^*$, and $x$ is in $B$.}
Remove $B$ from memory\\
$Y \gets B\setminus \{x\}$ 
\tcp{ a global variable }
$\rebu(j^*)$, where $j^*$ is the largest integer $j^*\in [i^*,\ell-1]$ such that $2^{j^*-i^*}$ divides $c_{i^*}$
\end{algorithm} 

\begin{algorithm}
\DontPrintSemicolon
\caption{$\rebu(j^*)$ where $0\le j^*\le \ell-1$}
\label{alg:rebuild}

$\caT_{\text{old}}  \coloneqq \{\text{type }\tau: \lowlevel(\tau)> j^*\}$\\
$\caB \gets \varnothing$ \tcp{a collection of bundles, initially empty}

\For{each bundle $B$ in levels $\{j^*,j^*-1,\dots,0\}$}{
   Remove $B$ from the memory \\
\lIf(\texttt{ // Unbundle $B$ and add all its objects to $Y$}){$\typ(B)\notin \caT_{\text{old}}$}{$Y \gets Y \sqcup B$} \lElse(\texttt{ // $B$ remains as an old bundle, which we add to $\caB$}){$\caB \gets \caB \sqcup \{B\}$}
}
\tcp{Now, the memory in levels $\{j^*,j^*-1,\dots,0\}$ has become empty.}
$\caB \gets \caB \sqcup \crea(Y)$, and then reset $Y\gets \varnothing$.\\

\For{each bundle type $\tau$}{
$\caB_{\tau} \coloneqq \{B \in \caB: \typ(B)=\tau\}$\\
  Let $(2^{i},2^{i+1}]$ be the scale of objects in type-$\tau$ bundles \\
Define sequence $(n_i,n_{i+1},\dots,n_{j^*})$ by $n_{j} =\begin{cases} 
  2^{\max\{j-i,1\}} &  j\neq j^*\\
   +\infty & j=j^*
\end{cases}$ \tcp{$(n_i,n_{i+1},\dots,n_{j^*})= (2,2,4,8,16,\dots,+\infty)$  }
\For{$j\gets i,i+1,\dots,j^*$\label{line:putbundlesfor}}{If there are $n$ unassigned bundles in $\caB_{\tau}$, assign $\min\{n,n_{j}\}$ of them to level $j$ in memory\label{line:putbundles}}
}
\end{algorithm}

\paragraph*{Rebuild.}
During $\rebu(j^*)$ (\cref{alg:rebuild}), we rebuild the levels $j^*,j^*-1,\dots,1,0$ and also add the objects from $Y$ into them, without modifying the remaining levels $\ell-1,\ell-2,\dots,j^*+1$ on the left.
For a bundle $B$ in levels $\{j^*,j^*-1,\dots,1,0\}$, if there is another bundle $B'$ in levels $\{\ell-1,\ell-2,\dots,j^*+1\}$ with $\typ(B')=\typ(B)$, then we keep $B$ as an old bundle (since it may still be useful for performing the substitution strategy in the future); otherwise, we unbundle $B$ and add its objects to $Y$. 
We partition the objects in $Y$ into new bundles.
We then place the bundles (both new and old)  back into memory following a right-to-left greedy rule similar to that in the warm-up proof of \cref{prop:fewtype}. Here, we assign each bundle to one of the levels in
$\{j^*,j^*-1,\dots,1,0\}$; bundles within the same level may be placed in arbitrary order.

We have the following basic observations:
\begin{observation}
   \label{observation:rebuildsuffix}
 $\rebu(j^*)$ does not modify levels $\{\ell-1,\ell-2,\dots,j^*+1\}$ in memory. 
\end{observation}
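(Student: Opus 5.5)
This follows by inspecting \cref{alg:rebuild} line by line and checking that every memory operation it performs stays within levels $\{j^*,j^*-1,\dots,0\}$. There are three kinds of operations: removing bundles, computing bookkeeping quantities, and assigning bundles back into memory.

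First, the only removals in \cref{alg:rebuild} occur in the loop over bundles in levels $\{j^*,\dots,0\}$, so no bundle in levels $\ge j^*+1$ is removed or moved there. Second, computing $\caT_{\text{old}}$ and calling $\crea(Y)$ are purely combinatorial and touch no memory. In particular, $\crea$ only forms new bundles out of objects already in $Y$. These objects are either the pending objects from the update, or objects unbundled from levels $\le j^*$. Consequently, no bundle residing in levels $>j^*$ is ever unbundled or retyped.

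Third, the placement loop at \cref{line:putbundles} assigns bundles only to levels $j\in[i,j^*]$. The greedy assignment terminates at level $j^*$ because $n_{j^*}=+\infty$, so every bundle in $\caB_\tau$ gets placed at some level $\le j^*$. If $i>j^*$, the loop is empty and $\caB_\tau$ should be empty. This needs a short check: every object in $Y$ or in the removed levels has scale $(2^i,2^{i+1}]$ with $i\le j^*$. For objects from levels $\le j^*$, this holds by the level constraint. For objects in $Y$ coming from the update, it holds because $j^*\ge i^*$.

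Finally, I need the levels to be laid out contiguously from left to right in the order $\ell-1,\dots,0$, so that rewriting the suffix occupied by levels $j^*,\dots,0$ leaves the prefix occupied by levels $\ell-1,\dots,j^*+1$ unchanged, positions included. Since the prefix property is maintained, the left levels start at $0$ and keep their positions, and the rebuilt levels are written immediately after them. The only subtle point I anticipate is the scale condition in the third step, which is what guarantees that no bundle needs to go to a level above $j^*$. The rest is immediate from the pseudocode.
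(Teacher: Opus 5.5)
Your proposal is correct and matches the paper, which states this observation without proof because it is immediate from the pseudocode of \cref{alg:rebuild}. Only bundles in levels $\{j^*,\dots,0\}$ are removed, $\crea$ acts only on $Y$, and every bundle in $\caB$ is reassigned to some level in $[i,j^*]$. Your additional checks are sound details that the paper leaves implicit: that every bundle in $\caB$ has scale $i\le j^*$, and that the rebuilt levels are laid out to the right of level $j^*+1$. For the second check, the prefix property you use should be understood as holding inductively from the end of the previous update.
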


\begin{observation}
After every update ($\inse(\cdot)$ or $\dele(\cdot)$) finishes, the memory satisfies the prefix property.
\end{observation}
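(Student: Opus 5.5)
The plan is an induction over updates. The invariant is that the live objects occupy exactly $[0,\sum_{x\in X}\sz(x))$ with no gaps, and the levels $\ell-1,\dots,1,0$ appear left to right as contiguous blocks. The claim is trivial at initialization (empty memory). Assuming the invariant holds before an update, I will track what \inse\ and \dele\ do to memory, step by step, until $\rebu(j^*)$ finishes.

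For \inse, nothing in memory is touched before $\rebu(j^*)$; the new object only sits in $Y$. For \dele, the first step swaps $B$ with a same-type bundle $B'$. Since $\typ(B)=\typ(B')$, the two bundles have equal total size, and each is stored contiguously, so exchanging them keeps the occupied region unchanged and keeps every level contiguous. The bundle $B$ now lies in level $i^*\le j^*$ (because $j^*\ge i^*$), and removing it creates a gap only inside a level that is rebuilt. So just before the rebuild, levels $\ell-1,\dots,j^*+1$ still form a gap-free prefix $[0,P)$. This is because the prefix property held before the update and all changes happened inside levels $\le j^*$, which lie to the right of $P$.

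$\rebu(j^*)$ then removes every bundle in levels $j^*,\dots,0$. By \cref{observation:rebuildsuffix}, levels $>j^*$ still occupy exactly $[0,P)$. It remains to specify the placement in \cref{line:putbundles}: each level's assigned bundles are written contiguously, starting at $P$, in the order level $j^*$, then $j^*-1$, down to level $0$, each starting where the previous one ends. (The algorithm leaves the within-level order arbitrary, but the positions of the levels are determined by this convention.) Every bundle in $\caB$ gets assigned to some level, because $n_{j^*}=+\infty$. Hence the memory becomes the gap-free block $[P,P+\sum_{B\in\caB}\sum_{y\in B}\sz(y))$ concatenated after $[0,P)$, which is again a prefix. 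Everything fits inside $[0,M)$ because the total inflated size of the live objects is below $M$, as shown in the inflation paragraph.

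The only delicate point is checking that no object is lost or duplicated. The objects of $\caB$ must be exactly the live objects that were in levels $\le j^*$ or in $Y$ after the update. Every removed bundle is either kept in $\caB$ or unbundled into $Y$, and $\crea(Y)$ partitions $Y$ into bundles. So the multiset of placed objects equals the current live set minus the objects in levels $>j^*$, and the occupied length matches. This closes the induction.
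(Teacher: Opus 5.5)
Your proof is correct and follows the same approach as the paper. The equal-size swap preserves the gap-free layout, and removing $B$ leaves a gap only in level $i^*\le j^*$, so levels $\ell-1,\dots,j^*+1$ form a gap-free prefix and $\rebu(j^*)$ repacks everything else contiguously after it. One small wording point: the swap can change a level above $j^*$, so ``all changes happened inside levels $\le j^*$'' should say ``all changes to the occupied region,'' which is what your swap argument actually establishes.
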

\begin{proof}
Right before invoking $\rebu(j^*)$ in $\dele(x)$,  the levels $\ell-1,\ell-2,\dots,i^*+1$ satisfy the prefix property, because the deletion may only creates a gap in level $i^*$. Since $j^*\ge i^*$, after $\rebu(j^*)$, the memory satisfies the prefix property again.

For $\inse(x)$, the proof is straightforward.
\end{proof}

\subsection{Invariants}
Let $C_1\ge 1,C_2 \ge 1$ be the constants from \cref{lem:classifier}. 
Let $C_3$ be a constant (depending on $C_1,C_2$) such that 
\begin{equation}
 C_3 \ge 3C_1\log(12C_2C_3).
\end{equation}

Below, we state the three main invariants which the allocator satisfies at the end of every update ($\inse(\cdot)$ or $\dele(\cdot)$).
\cref{item:inva-fewtypes} ensures that there are few distinct types of bundles at any time.
The remaining two invariants are analogous to those appearing in the warm-up proof of \cref{prop:fewtype}:
\cref{item:inva-nottoobig} gives an upper bound on the total size of each level, and
\cref{item:inva-supply} guarantees enough supply of bundles of each type
for performing the substitution strategy.
\begin{property}
         \label{item:inva-fewtypes}
For every $i,j$ with $\ell \ge j\ge i\ge 0$, the number of distinct types among bundles of scale $(2^{i},2^{i+1}]$ in levels $\{\ell-1,\ell-2,\dots,j\}$ is at most 
\[C_3(\ell-j) (\log \ell + \log\log \eps^{-1})^2.\]
\end{property}

\begin{property}
         \label{item:inva-nottoobig}
         Let $s(i,j)$ denote the total size of objects of scale $(2^{i},2^{i+1}]$ that are in levels $\{j-1,j-2,\dots,i\}$. %
   
   Then, for every $i,j$ with $\ell\ge j> i\ge 0$, 
   \begin{equation}
      \label{eqn:ineqsij}
    s(i,j) \le  2^{j+1}\cdot C_2\log \eps^{-1} \cdot C_3\ell (\log\ell + \log\log \eps^{-1})^2  + (c_i\bmod 2^{j-i})\cdot 2^{i+1}. 
   \end{equation}
   In particular,
   \[s(i,j) \le 2^{j}\cdot 3C_2C_3\ell (\log\eps^{-1})(\log \ell + \log\log \eps^{-1})^2 - 2^{i+1}.\]
\end{property}

\begin{property}
         \label{item:inva-supply}
    Let $v(\tau,j)$ denote the number of type-$\tau$ bundles in levels $\{j-1,j-2,\dots,0\}$.
    
    Then, for every bundle type $\tau$ with object scale $(2^{i},2^{i+1}]$, and every $j$ such that $ \lowlevel(\tau) \ge j > i$,
   \begin{equation}
      \label{eqn:ineqvtj}
    v(\tau,j) \ge 2^{j-i} - (c_i\bmod 2^{j-i}) \ge 1.
   \end{equation}
\end{property}

\begin{corollary}
   \label{lem:alwayshasqr}
   At \cref{line:find-bundle-to-swap} of $\dele(x)$ (\cref{alg:delete}), the substitute bundle $B'$ always exists.
\end{corollary}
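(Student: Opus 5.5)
The corollary follows from \cref{item:inva-supply}, applied to the memory state just before $\dele(x)$ began. Invariants are only guaranteed at the end of updates, so we first relate that state to the one at \cref{line:find-bundle-to-swap}. Between the end of the previous update and that line, the only change is the increment $c_{i^*}\gets c_{i^*}+1$; the memory and the levels are untouched. Let $\tau=\typ(B)$, whose objects have scale $(2^{i^*},2^{i^*+1}]$, and suppose $B$ lies in some level $j_0\ge i^*+1$.

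First, $\lowlevel(\tau)\ge j_0\ge i^*+1$, so \cref{item:inva-supply} applies with $j=i^*+1$ to the state at the end of the previous update. Let $c'$ be the value of $c_{i^*}$ at that time. Then $v(\tau,i^*+1)\ge 2-(c'\bmod 2)\ge 1$, so some type-$\tau$ bundle lies in levels $\{i^*,i^*-1,\dots,0\}$.

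Second, a type-$\tau$ bundle consists of objects of scale $(2^{i^*},2^{i^*+1}]$. By the level rule, objects of this scale may only lie in levels $\ge i^*$. Hence the bundle found above must be in level exactly $i^*$, and it serves as $B'$.

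The only subtle point is the counter increment, and it is harmless: the invariant was established with the old counter $c'$, and the bound $2^{j-i}-(c\bmod 2^{j-i})\ge 1$ holds for every value of $c$, so which counter value we use does not matter. One must also check that the swap, which is a size-preserving exchange because bundles of equal type have equal total size, uses only this existence claim and nothing else. If the paper proves the invariants inductively, this corollary is used within that induction, so it must rely only on the invariants from the previous update, which the argument above does. No real obstacle is expected.
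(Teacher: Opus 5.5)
Your proposal is correct and matches the paper's proof. Both use $\lowlevel(\tau)\ge i^*+1$ to apply \cref{item:inva-supply} with $j=i^*+1$, obtaining $v(\tau,i^*+1)\ge 1$, and then use the level rule (objects of scale $(2^{i^*},2^{i^*+1}]$ lie only in levels $\ge i^*$) to conclude that the substitute bundle is in level $i^*$. Your remark that the invariant is applied to the state at the end of the previous update, and that the counter increment is harmless, is a detail the paper leaves implicit.
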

\begin{proof}
   Let $\tau = \typ(B)$, and recall that the scale of $\tau$ is $(2^{i^*},2^{i^*+1}]$.
   Since $B$ 
   can only be in levels $\{\ell-1,\ell-2,\dots,i^*+1\}$, we have $\lowlevel(\tau) \ge i^*+1$. Hence, by \cref{item:inva-supply} with $\tau$ and  $j=i^*+1$, we have $ v(\tau,i^*+1) \ge 1$.
   Note that $v(\tau,i^*+1)$ counts exactly the number of type-$\tau$ bundles in level $i^*$, since objects of scale $(2^{i^*},2^{i^*+1}]$ can only appear in levels $\{\ell-1,\ell-2,\dots,i^*\}$. Therefore, there exists at least one type-$\tau$ bundle $B'$ in level $i^*$. 
\end{proof}

We use time step $t$ to refer to the state at the end of the $t$-th update ($\inse(\cdot)$ or $\dele(\cdot)$) of the update sequence. 
Initially at time $0$, the memory is empty and \cref{item:inva-fewtypes,item:inva-nottoobig,item:inva-supply} all hold vacuously.
We now use induction on $t$ to prove that these invariants hold at every time step $t \ge 1$. 
\begin{proof}[Proof of \cref{item:inva-fewtypes}]
  To prove \cref{item:inva-fewtypes} for $i,j$ at any time step $t$, suppose the most recent execution of $\rebu(j^*)$ such that $j^* \ge j$ finished at time $t'\le t$; if no such execution of $\rebu(j^*)$ has occurred, let $t'=0$.
Since the levels $\{\ell-1,\ell-2,\dots,j\}$ have never been modified after time $t'$, it suffices to prove the claimed upper bound at time $t'$.
If $t'=0$, then the bound immediately holds. Henceforth, assume $t'\ge 1$.
  
Focus on the execution of $\rebu(j^*)$ at time $t'$.  We now separately bound the number of old types $\tau \in \caT_{\mathrm{old}}$ and new types $\tau \notin \caT_{\mathrm{old}}$ of scale $(2^{i},2^{i+1}]$, and add them up.
\begin{itemize}
   \item 
      Recall that $\tau$ is an old type  if and only if $\lowlevel(\tau)\ge j^*+1$. 
      Since \cref{item:inva-fewtypes} held at time step $t'-1$ for the levels $\{\ell-1,\ell-2,\dots,j^*+1\}$, this implies that the number of old types of scale $(2^{i},2^{i+1}]$ is at most $C_3(\ell-j^*-1)(\log \ell + \log\log \eps^{-1})^2$.

  \item It remains to bound the number of new types of scale $(2^{i},2^{i+1}]$ created by $\crea(Y)$.
We focus on the subset $Y_i\subseteq Y$ containing objects of scale $(2^{i},2^{i+1}]$ only.
By definition, all objects of $Y_i$ were in levels $\{j^*,j^*-1,\dots,i\}$ at time $t'-1$, with the possible exception of one additional object if the $t'$-th update is an insertion. 
Since \cref{item:inva-nottoobig} held for scale $(2^i,2^{i+1}]$ and levels $\{j^*,j^*-1,\dots,i\}$ at time $t'-1$, this implies that the objects in $Y_i$ have total size at most
  \[2^{j^*+1}\cdot 3C_2C_3\ell (\log\eps^{-1})(\log \ell + \log\log \eps^{-1})^2.\]
  Consequently,
  \[|Y_i|\le 2^{j^*-i}\cdot 6C_2C_3\ell (\log\eps^{-1})( \log \ell + \log\log \eps^{-1})^2.\]
  For a new type $\tau$,  if a level $j\in [i,j^*]$ receives at least one type-$\tau$ bundle at the end of the \textbf{for} loop in \cref{line:putbundlesfor,line:putbundles}, then the number of type-$\tau$ bundles, $|\caB_{\tau}|$, must be at least $1 + n_{j-1}+n_{j-2}+\dots + n_{i} = \begin{cases}
  1 + 2^{j-i} & i< j \\ 1 & i=j
  \end{cases} \ge 2^{j-i}$.
  On the other hand, by definition of $\crea(Y)$ and \cref{lem:classifier}, \cref{item:pjtail}, the number of types $\tau$ with at least $2^{j-i}$ bundles is at most 
   \begin{align*}
 & C_1\log\!\left(2|Y_i|/2^{j-i}\right)\log \log \eps^{-1}\\
 & \le C_1\log\!\left(2^{j^*-j}\cdot 12C_2C_3\ell (\log\eps^{-1})(\log \ell + \log\log \eps^{-1})^2\right)\log \log \eps^{-1}\\
 & \le C_1(j^*-j + \log(12C_2C_3) + 3\log\ell +3\log \log \eps^{-1})\log\log\eps^{-1}\\
 & \le C_3(j^*-j+1)(\log \ell + \log \log \eps^{-1})^2. 
  && \text{(by $C_3 \ge 3C_1\log(12C_2C_3)$)}
  \end{align*}
Thus, the number of new types of scale $(2^{i},2^{i+1}]$ in level $j$ is at most $C_3(j^*-j+1)(\log \ell + \log \log \eps^{-1})^2$.
\end{itemize}
Summing up the counts of old and new types proves the claimed upper bound $C_3(\ell-j) (\log \ell + \log\log \eps^{-1})^2$.
\end{proof}

\begin{proof}[Proof of \cref{item:inva-nottoobig}]
To prove the claimed bound on $s(i,j)$ at any time step $t$, suppose the most recent execution of $\rebu(j^*)$ such that $j^*\ge j$ finished at time $t'\le t$; if no such execution of $\rebu(j^*)$ has occurred, let $t'=0$.
By definition, $t'$ is no earlier than the most recent time when the counter $c_i$ became divisible by $2^{j-i}$. Hence, after time $t'$, there have been at most $(c_i\bmod 2^{j-i})$ updates of object scale $(2^i,2^{i+1}]$, where $c_i$ denotes the counter at current time $t$. 

Let $s'(i,j)$ denote the total size  of objects of scale $(2^i,2^{i+1}]$ in levels $\{j-1,j-2,\dots,i\}$ at time $t'$. If $t'=0$, then $s'(i,j)=0$. Now we bound $s'(i,j)$ in the $t'\ge 1$ case.
Focus on the execution of $\rebu(j^*)$ at time $t'$. By the placement rule at the end of $\rebu(j^*)$, since $j^*>j-1$, each bundle type $\tau$ of object scale $(2^i,2^{i+1}]$ contributes at most
$n_i+n_{i+1}+\dots + n_{j-1} =  2^{j-i}$
bundles to the levels $\{j-1,j-2,\dots,i\}$. 
By \cref{item:inva-fewtypes}, the number of distinct bundle types of scale $(2^{i},2^{i+1}]$ is at most $C_3\ell(\log \ell + \log \log \eps^{-1})^2$.
By \cref{lem:classifier}, every bundle created by $\crea(\cdot)$ contains at most $C_2\log \eps^{-1}$ objects. 
By multiplying these quantities together, we get the following upper bound:
\[  s'(i,j) \le C_3 \ell (\log \ell + \log \log \eps^{-1})^2\cdot 2^{j-i}   \cdot C_2 \log \eps^{-1} \cdot 2^{i+1}. \]

After time $t'$, the levels $\{\ell-1,\ell-2,\dots,j\}$ have never been modified; in particular, we have not moved any objects from levels $\{\ell-1,\ell-2,\dots,j\}$ to levels $\{j-1,j-2,\dots,0\}$. Thus, the increase $s(i,j)-s'(i,j)$ can only be caused by the at most $(c_i\bmod 2^{j-i})$ objects of scale $(2^i,2^{i+1}]$ inserted after time $t'$. Thus, $s(i,j) \le s'(i,j) + (c_i\bmod 2^{j-i})\cdot 2^{i+1}$, finishing the proof of \cref{item:inva-nottoobig}.
\end{proof}

\begin{proof}[Proof of \cref{item:inva-supply}]
To prove the claimed bound on $v(\tau,j)$ at any time step $t$, suppose the most recent execution of $\rebu(j^*)$ such that $j^*\ge j$ finished at time $t'\le t$; if no such execution of $\rebu(j^*)$ has  occurred, let $t'=0$.
Recall from the assumptions that type-$\tau$ bundles have objects of scale $(2^{i}, 2^{i+1}]$, where $i<j$.
As in the proof of \cref{item:inva-nottoobig}, we know that after time $t'$ there have been at most $(c_i \bmod 2^{j-i})$ updates of object scale $(2^i,2^{i+1}]$.

After time $t'$, the levels $\{\ell-1,\ell-2,\dots,j\}$ have never been modified. Hence, by our assumption that $\lowlevel(\tau)\ge j$ holds at time $t$, we know $\lowlevel(\tau)\ge j$ also held at time $t'$. In particular, this means $t'\ge 1$.

Let $v'(\tau,j)$ denote the number of type-$\tau$ bundles in levels $\{j-1,j-2,\dots,0\}$ at time $t'$.
We now show that $v'(\tau,j)\ge 2^{j-i}$. To show this, focus on the execution of $\rebu(j^*)$ at time $t'$, and separately analyze two cases depending on whether $\tau$ is a new type:
\begin{itemize}
   \item Case $\tau \notin \caT_{\mathrm{old}}$ (i.e., $\tau$ is a new type created by $\crea$):
   
   Suppose the \textbf{for} loop at \cref{line:putbundlesfor,line:putbundles} put all the type-$\tau$ bundles in the levels $\{j_0,j_0-1,\dots,i\}$ but not in level $j_0+1$. Then, $ \lowlevel(\tau) = j_0  \in [i,j^*]$.
   Since we assumed $j\le \lowlevel(\tau) = j_0$, we must have $v'(\tau,j) = n_i + n_{i+1}+ \dots + n_{j-1} =  2^{j-i}$. %

      \item Case $\tau \in \caT_{\mathrm{old}}$:
      
      We know $\lowlevel(\tau)\ge j^*+1$ by the definition of $\caT_{\mathrm{old}}$. 
By induction, we can assume as an inductive hypothesis that \cref{item:inva-supply} held at time $t'$ for $v'(\tau,\cdot )$ whenever the second argument is strictly larger than $j$.
In particular,  since $\lowlevel(\tau) \ge j^*+1>j$, this implies \[v'(\tau,j^*+1) \ge 2^{j^*+1-i}-(c_i' \bmod 2^{j^*+1-i}),\]
where $c_i'$ is the counter at time $t'$. By definition of $j^*$ at time $t'$, we know $c'_i$ is divisible by $2^{j^*-i}$ but not by $2^{j^*-i+1}$, so $c_i' \bmod 2^{j^*+1-i} = 2^{j^*-i}$. 
Hence $v'(\tau,j^*+1) \ge 2^{j^*+1-i}-2^{j^*-i} = 2^{j^*-i}$; in other words, in the execution of $\rebu(j^*)$, the number of old type-$\tau$ bundles in levels $\{j^*,j^*-1,\dots,0\}$ is at least $|\caB_{\tau}| \ge 2^{j^*-i}$. 
          Thus, by the placement rule at the end of $\rebu(j^*)$, the number of bundles assigned to levels $\{j-1,j-2,\dots,i\}$ is 
          $v'(\tau,j) = \min\{|\caB_{\tau}|,n_i+n_{i+1}+\dots+n_{j-1}\} = \min\{|\caB_{\tau}|, 2^{j-i}\} \ge \min\{2^{j^*-i}, 2^{j-i}\} = 2^{j-i}$. %
\end{itemize}
Thus, in both cases we have $v'(\tau,j)\ge 2^{j-i}$ as claimed.

After time $t'$, the levels $\{\ell-1,\ell-2,\dots,j\}$ have never been modified; in particular, we have not moved any objects from levels $\{j-1,j-2,\dots,0\}$ to levels $\{\ell-1,\ell-2,\dots,j\}$. Thus, the decrease $v'(\tau,j)-v(\tau,j)$ can only be caused by the at most $(c_i\bmod 2^{j-i})$ objects of scale $(2^i,2^{i+1}]$ deleted after time $t'$, each of which may destroy one type-$\tau$ bundle. 
Hence, $v(\tau,j) \ge v'(\tau,j) - (c_i\bmod 2^{j-i})$, finishing the proof of  \cref{item:inva-supply}. 
\end{proof}

Therefore, our allocator is correct.
Finally, we bound the worst-case expected overhead.
\begin{proof}[Proof of \cref{lem:largemain}]
By definition, $\rebu(j^*)$ only moves objects in levels $\{j^*,j^*-1,\dots,0\}$, whose total size can be bounded by summing up \cref{item:inva-nottoobig} for $j=j^*+1$ over all scales $(2^i,2^{i+1}]$ where $0\le i \le j^*$, giving the upper bound
$ O(2^{j^*} \ell^2(\log\eps^{-1})(\log \ell + \log\log \eps^{-1})^2)$.   
The total size of objects moved by $\inse(\cdot)$ or $\dele(\cdot)$ is dominated by that of $\rebu(j^*)$.
   
  For every update of scale $(2^{i^*},2^{i^*+1}]$ in the update sequence,
     since we randomly initialized the counter $c_{i^*}$, the probability that it triggers $\rebu(j^*)$ equals 
    $\begin{cases}2^{i^*-j^*-1} & i^*\le j^*<\ell-1 \\ 2^{i^*-\ell+1} & j^*=\ell-1\end{cases} \le 2^{i^*-j^*}$.
   Hence, the expected switching cost for this update is 
   \begin{align*}
&   \sum_{i^*\le j^*\le \ell-1}2^{i^*-j^*} \cdot   O(2^{j^*} \ell^2(\log\eps^{-1})(\log \ell + \log\log \eps^{-1})^2)\\
&= O(2^{i^*}\ell^3(\log\eps^{-1}) (\log \ell + \log \log \eps^{-1})^2).
   \end{align*}
 Dividing by the size $\Theta(2^{i^*})$ of the updated object gives the expected overhead bound $O(\ell^3(\log\eps^{-1}) (\log \ell + \log \log \eps^{-1})^2)$  as claimed.
\end{proof}

\section{Lower bounds}
\label{sec:lb}

In this section, we prove \cref{thm:loglowerbound} and \cref{thm:mainsecondmoment}.
Both our lower bounds are proved with an oblivious adversary, i.e., our hard update sequences are fixed before starting the allocator, instead of being generated adaptively based on the current memory state.

\paragraph*{Bounded ratio property.}
The object sizes in our hard update sequences are always in the interval $[\mu,C\mu]$ for some $\mu>0$ and an absolute constant $C> 1$. 
This allows us to work with the more convenient notion of (unnormalized) \emph{switching cost}, namely the total size of changed (moved/inserted/deleted) objects for handling an update $x$, and only lose a constant factor~$C$ when normalized by the size of $x$ to get the overhead factor.

\paragraph*{Full-memory assumption.}
In our proofs, we design hard instances in the original discrete setting (instead of the real-interval setting used in \cref{sec:ds}) with a memory of $M$ slots indexed by $0,1,\dots,M-1$. 
Moreover, by \cref{obs:makememoryfull}, \cref{obs:item2}, we are allowed to make \emph{all the $M$ memory slots fully occupied} in the hard instances.
Formally, we actually prove the following two theorems which imply \cref{thm:loglowerbound} and \cref{thm:mainsecondmoment}. Here, all hidden constants are absolute.

\begin{theorem}
   \label{thm:loglowerbound-discrete}
     For every $M_0 \in \Z^+$, there exists an integer $M\in \Theta(M_0)$ such that any allocator that handles instances with $M$ memory slots
     and object sizes $\Theta(M/\log M)$ must have  worst-case expected switching cost $\Omega (M)$ (against an oblivious adversary).
\end{theorem}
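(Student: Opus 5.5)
I will follow the construction sketched in the overview. Fix $k$ with $2^{2k}\approx M_0/\log M_0$ up to constants. The base sizes are $t_i \coloneqq 2^{i+k}+2^i$ for $i\in[k]$. For distinct $a,b$ the replacement sizes are $t'_a\coloneqq 2^{a+k}+2^b$ and $t'_b\coloneqq 2^{b+k}+2^a$, so $t'_a+t'_b=t_a+t_b$. To make all sizes $\Theta(2^{2k})$, I pad them: $s_i\coloneqq P+t_i$ and $s'_a\coloneqq P+t'_a$, $s'_b\coloneqq P+t'_b$, with $P\coloneqq 2^{3k}$. The role of $P$ is to force every equal-sum comparison to use the same number of objects on both sides. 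Set $M\coloneqq\sum_i s_i=\Theta(k\,2^{3k})$. Then $\log M=\Theta(k)$ and every size is $\Theta(M/\log M)$. Since $M$ can be chosen within a constant factor of any $M_0$ by adjusting $k$ and a constant, this gives the required $M\in\Theta(M_0)$. (If needed, one can pad with a few extra filler objects to hit $\Theta(M_0)$ exactly.)

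The first step is the combinatorial property. Let $X\subseteq\{s_i\}$ and $X'\subseteq(\{s_i\}\setminus\{s_a,s_b\})\sqcup\{s'_a,s'_b\}$ have equal sums, with $s_a\in X$. I will show $s_b\in X$.
\begin{itemize}
\item Since $P$ exceeds $k$ times every $t$-value, equal sums force $|X|=|X'|$. Subtracting $|X|\cdot P$ then reduces the problem to equal sums of the corresponding $t$-values.
\item All $t$-values are below $2^{2k+1}$. For indices $c\in[k]$, the bits $2^{c}$ and $2^{c+k}$ occupy distinct binary positions, and each multiset has at most $k$ elements with every coefficient in $\{0,1\}$. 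I will argue that both the high parts (exponents $c+k$) and the low parts (exponents $c$) have no carries that interfere. Concretely, each position is used at most once per side, so the sums are just binary representations of index sets.
\item Matching the high bits, $t_a\in X$ contributes $2^{a+k}$, so $X'$ must contain $t'_a$, which is the only element of $X'$ with high bit $a$.
\item Matching the low bits, $t'_a$ contributes $2^b$, so $X$ must contain an element with low bit $b$. The only such element is $t_b$, hence $s_b\in X$.
\end{itemize}

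The second step is the adversary. Insert objects of sizes $s_1,\dots,s_k$, which fills memory exactly; call this state $\phi$. Pick a uniformly random pair $a\ne b$. Delete the objects of sizes $s_a$ and $s_b$, then insert objects of sizes $s'_a$ and $s'_b$; memory is full again, in state $\phi'$. Consider the maximal changed interval $[L,R)$ between $\phi$ and $\phi'$ that contains the $s_a$-object of $\phi$. This object does not survive, so it lies in a changed interval. Inside a maximal changed interval, the objects of $\phi$ and of $\phi'$ exactly tile it, because memory is full and the boundary objects are unchanged. Hence their size multisets have equal sums, and the property gives that the $s_b$-object also lies in $[L,R)$. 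Consequently $R-L$ is at least the distance between the $s_a$- and $s_b$-objects in $\phi$.

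Under a fixed $\phi$, which is independent of $(a,b)$ by obliviousness (though $\phi$ may be random, we condition on it), a random pair of the $k$ objects of roughly equal size is at distance $\Omega(M)$ in expectation. Therefore $\Ex[\Delta(\phi,\phi')]\ge\Omega(M)$. Total switching cost across the four updates is at least $\Delta(\phi,\phi')$, so some update among them has expected switching cost $\Omega(M)$. For this I need only the fact that the update sequence is fixed once $(a,b)$ is fixed: averaging over the adversary's randomness yields a deterministic sequence.

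I expect the main obstacle to be the rigorous no-carry digit argument combined with the padding. The delicate part is that $X'$ may contain both $t'_a$ and $t'_b$, as well as multiple elements whose bits could in principle combine. I would handle this by noting that both sides are sums of distinct members of a set in which every binary position $0..2k$ appears at most twice across both primed and unprimed elements. I would then compare the high half and the low half separately, using $\sum_c 2^c<2^{k+1}$ to separate the halves. A secondary check is that $\Delta$ lower-bounds switching cost; this is asserted in the overview and will be proved in \cref{subsec:lbdefs}.
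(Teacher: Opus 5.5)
Your proposal is correct and follows the paper's proof essentially step for step. It uses the same binary construction with two swapped low bits, the same no-carry bit-reading argument forcing the maximal changed interval around the $s_a$-object to also contain the $s_b$-object, and the same averaging over a random pair against the oblivious state $\phi$ (expected $(k+4)/3$ objects spanned).

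There is one bookkeeping slip. With $P=2^{3k}$ the sizes are $\Theta(2^{3k})$, not $\Theta(2^{2k})$, so $k$ should be chosen with $k2^{3k}\approx M_0$; alternatively, the paper's smaller padding $2^{2k+1}$ also works, since reading bits $1,\dots,2k$ never needs $|X|=|X'|$. You should also drop the optional filler objects, since their sizes would enter the subset sums and could break the combinatorial property.
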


\begin{theorem}
   \label{thm:squaredlowerbound-discrete}
     For every $M_0 \in \Z^+$, there exists an integer $M\in \Theta(M_0)$ such that any allocator that handles instances with $M$ memory slots and object sizes $\Theta(M^{4/7})$ must have worst-case expected squared switching cost $\Omega (M^{9/7})$ (against an oblivious adversary).
\end{theorem}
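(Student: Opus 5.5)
The plan is to implement the ``finger object plus surprise inspection'' strategy from the technical overview. Set $\mu\coloneqq\Theta(M^{4/7})$ and choose three sizes $a,b,c\in[\mu,2\mu]$ with no additive coincidences among the relevant small combinations. Concretely, we require that all integers $ia+jb+kc$ with $i,j,k\in\Z\cap[0,O(M/\mu)]$ are distinct. We also reserve room so that the sum of $a$-, $b$- and $c$-objects always lies in $M-\Theta(\mu)$; the remainder is filled by a single \emph{finger object} $f$ of size $\Theta(\mu)$, so the memory is always full. A suitable choice is $a,b,c$ built from the base-$\Theta(M/\mu)$ digits of a number near $\mu$, padded into $[\mu,2\mu]$.

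The first step is \cref{lem:finger}. Between two full states $\phi,\phi'$, any maximal changed interval that contains neither finger object only permutes its contents. This follows from the distinctness property: the interval is fully tiled in both states by $a,b,c$-objects of equal total length. Hence the expected switching cost $\Ex[\Delta(\phi,\phi')]$ charges, up to $O(\mu)$, the length of every interval where the multiset of objects changes.

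The hard sequence works as follows.
\begin{enumerate}
\item Insert $Q$ objects of size $c$ and $\Theta(M/b)$ objects of size $b$, then the finger.
\item Repeatedly delete one $b$-object, insert $O(1)$ $a$-objects, and resize the finger (one delete/insert of $f$) to keep memory full.
\item At a uniformly random iteration $T\le N=\Theta(M/\mu)$, unknown to the allocator, switch to deleting all $c$-objects one by one.
\end{enumerate}
Let $K$ be the worst-case expected squared switching cost per update. We want to derive $K=\Omega(M^{9/7})$ by three potential arguments.

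The first is \cref{lem:c-clustered}. During the inspection, each deletion of a $c$-object changes the multiset only in an interval containing a finger. Every $c$-object at distance $D$ from the finger trajectory therefore forces cost $\gtrsim D$ to be paid in some step. By Cauchy--Schwarz over the $Q$ inspection steps, $\sum_{c}\mathrm{dist}(c,f)\lesssim Q\sqrt{K}$. Because $T$ is uniform, this holds in expectation at a typical time of the replacement phase. The second is \cref{lem:c-position-sum}. The potential $\sum_c$ (position of $c$) changes per step by at most $\sum_k(R_k-L_k)^2/\mu\lesssim S_t^2/\mu$, exactly as in the proof of \cref{prop:warmup}. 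Combining this with the clustering bound, the finger's displacement over $N$ steps is at most $W\lesssim \sqrt{K}+NK/(Q\mu)$. The third is \cref{lem:a-sum}. Apply the potential $\sum_{a}|p-\lambda|$ with $p$ the finger region. The final configuration forces total potential $\Omega(M^2/\mu)$, while each step changes it by $O(S_t^2/\mu)$ plus a drift term $O(W\cdot M/\mu)$ from the movement of $p$. This gives $N K/\mu + W M/\mu\gtrsim M^2/\mu$.

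The main obstacle is this last combination. Finger drift must be charged against the $c$-clustering bound, which holds only in expectation at the random time $T$. To get there I would first condition on $T$ and use linearity to convert the per-time clustering statement into an averaged one over the replacement phase. Then I would balance $Q$ (taking $Q=M^{\Theta(1)}$ polynomially below $M/\mu$) so that the constraints $NK\gtrsim M^2$, $W\ll M$ and the clustering bound yield $K\gtrsim M^{9/7}$ with $\mu=M^{4/7}$. I would verify this by checking that the exponents $4/7$ and $9/7$ make the two lower-bound terms match.
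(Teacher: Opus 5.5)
Your outline follows the paper's strategy (finger object, surprise inspection, $c$-clustering, $c$-position potential, $a$-potential). However, the arithmetic construction you propose does not support the arguments built on it.

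\textbf{Main gap: finger continuity.} For consecutive full states $\phi,\phi'$, you need that a maximal changed interval contains the finger of $\phi$ if and only if it contains the finger of $\phi'$. Your version of \cref{lem:finger} only treats intervals containing \emph{neither} finger, and that is not enough.

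Without continuity, one step may have two distant changed intervals $I_1\ni f_\phi$ and $I_2\ni f_{\phi'}$. Here $f_\phi+\Sigma A_1=\Sigma B_1$ and $\Sigma A_2=f_{\phi'}+\Sigma B_2$ for multisets of $a,b,c$-objects. The finger, and any number of $c$-objects, then jump across memory at cost only $|I_1|+|I_2|$. This has two consequences:
\begin{itemize}
\item The bound $|\lambda_{i-1}-\lambda_i|\le\sum_k(R_k-L_k)^2/\mu$ fails. It needs every changed interval to hold equally many $c$-objects in both states, which the subtraction argument gives only when a single interval holds both fingers.
\item Your ``finger trajectory'' and the confinement of $p$ become meaningless.
\end{itemize}

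With three sizes and $M$ just the leftover, nothing excludes such identities. The finger size $M-n_aa-n_bb-n_cc$ is a signed combination of $a,b,c$ (with bounded coefficients) whenever $M$ is, and you place no constraint on $M$.

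The paper fixes this with a fourth generator $d$ and $M\coloneqq Pa+Qc+d$. Every finger then has the form $(P-i)a-hb+(Q-j)c+d$. Uniqueness of $k_1a+k_2b+k_3c+k_4d$ over $k_1,k_2\in[-3P,3P]$, $k_3\in[0,2Q]$, $k_4\in\{0,1\}$ (note the signed $b$-coefficient) shows that a subset sum involves a finger iff $k_4=1$ (\cref{cor:sizestructure}).

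\textbf{Your distinctness condition is impossible as stated.} With $\mu=M^{4/7}$ there are $\Theta((M/\mu)^3)=\Theta(M^{9/7})$ triples $(i,j,k)$, yet all values $ia+jb+kc$ lie in $[0,O(M)]$. The $c$-coefficient may range only over $[0,O(Q)]$. This is why the paper has $M=\Theta(P^2Q)$ and $\mu=\Theta(PQ)$.

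\textbf{The clustering step also needs repair.} Your bound $\sum_c\mathrm{dist}(c,f)\lesssim Q\sqrt K$ does not follow, since one expensive step can move many $c$-objects at once. Given finger continuity, the argument bounds only the \emph{maximum} distance. All $c$-objects of $\phi_{i,Q}$ lie in the single changed interval between $\phi_{i,Q}$ and $\phi_{i,0}$ containing both fingers, whose length is at most $\Delta(\phi_{i,Q},\phi_{i,0})$.

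Clustering ``at a typical time'' is also not enough: your $a$-potential needs the finger confined at every spine step. The paper handles this as follows:
\begin{itemize}
\item It applies Yao's principle to the spine sum plus $Q^{-1}$ times the branch sum.
\item For the resulting deterministic allocator, the whole-tree bound \cref{eqn:sumPQ} controls each individual branch.
\item This gives radius $\rho=10\sqrt P\,QF\mu$ at every $i$ (\cref{lem:c-clustered}), hence $|p_{i,Q}-p_{0,Q}|\le4\rho$ for all $i$ (\cref{lem:fingerclose}).
\item A potential with a fixed dead zone around $p_{0,Q}$ then finishes the proof under $Q^2\ge5\sqrt PF$ and $\sqrt P\ge800QF$, i.e.\ $Q=\Theta(F^2)$ and $P=\Theta(F^6)$.
\end{itemize}
Your final balancing is only asserted; you should derive it from explicit constraints of this kind.
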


We suspect that the theorems above could be strengthened to hold for \emph{all} $M\ge M_0$, but our current proofs do not seem to directly imply that.

We can immediately derive \cref{thm:loglowerbound} from \cref{thm:loglowerbound-discrete} as follows.
\begin{proof}[Proof of \cref{thm:loglowerbound} assuming \cref{thm:loglowerbound-discrete}]
 Given $\eps>0$ sufficiently small, pick an integer $M\in \Theta(1/\eps)$ so that $M \le 1/(3\eps)$ and \cref{thm:loglowerbound-discrete} holds for $M$.
 Since the object size in
 \cref{thm:loglowerbound-discrete}
 is $\Theta(M/\log M)$, the required expected overhead factor is $\Omega(M)/\Theta(M/\log M) = \Omega(\log M) = \Omega(\log\eps^{-1})$.
 By \cref{obs:makememoryfull}, \cref{obs:item2} (which can apply to expected overhead as well), we conclude that any allocator that supports load factor $1-\eps \ge 1-\frac{1}{3M}$ must also incur expected overhead at least~$\Omega(\log \eps^{-1})$.
\end{proof}
Analogously, \cref{thm:squaredlowerbound-discrete} implies \cref{thm:mainsecondmoment} (we omit the details).

\cref{thm:loglowerbound-discrete} and \cref{thm:squaredlowerbound-discrete} are proved in \cref{sec:loglb} and \cref{sec:polylb} respectively.

\subsection{Basic definitions}
\label{subsec:lbdefs}
We start by introducing a few definitions which will be used in both \cref{sec:loglb} and \cref{sec:polylb}.

\paragraph*{Size profile $S$ and difference $\delta(S_1,S_2)$.}
In order to design hard input sequences of updates, we actually design hard sequences of \emph{full-memory size profiles} (which could then be translated into sequences of updates).

A \emph{full-memory size profile} (or \emph{size profile} for short) is a multiset $S$ of positive integers whose sum equals $M$, which correspond to the sizes of all live objects at a certain time step. (The labels of the objects are unimportant, since we do not have to distinguish two objects of the same size.)
Define the \emph{difference} between two multisets $S_1,S_2$ as $\delta(S_1,S_2) \coloneqq \sum_{x} |m_{S_1}(x)-m_{S_2}(x)|$, where $m_S(x)$ denotes the multiplicity of $x$ in the multiset $S$.

A sequence of size profiles $(S_1,S_2,\dots,S_k)$ naturally induces a shortest update sequence that realizes them in order (starting from an empty memory), whose length is given by $\delta(\varnothing,S_1)+\sum_{i=1}^{k-1} \delta(S_i,S_{i+1})$.\footnote{If there are multiple such shortest update sequences, we fix an arbitrary one of them.}

As an example, for two size profiles $S_1=\{2,3\}, S_2=\{1,2,2\}$ with memory size $M=5$, the sequence $(S_1,S_2)$ can be realized by the following shortest update sequence starting from an empty memory, whose length is $\delta(\varnothing,S_1)+\delta(S_1,S_2) = 2+ 3 = 5$: 
\begin{itemize}
   \item Insert $x$ with $\mu(x)=2$.
   \item Insert $y$ with $\mu(y)=3$ (realizing size profile $S_1$).
   \item Delete $y$.
 \item Insert $z$ with $\mu(z)=1$.
 \item Insert $w$ with $\mu(w) = 2$ (realizing size profile $S_2$).
\end{itemize}

\paragraph*{Memory state $\phi$, difference $\Delta(\phi,\phi')$, and maximal changed intervals.}
To analyze the necessary cost incurred by the allocator, we are primarily interested in the memory states (i.e., allocations) made by the allocator when the memory is full.

At full memory, the \emph{memory state} $\phi$ can be uniquely described by the list of live object sizes ordered by their locations from left to right.
For example, in the full memory state $\phi = (2,5,3,1,3)$ (with $M = 14$ and size profile $\{1,2,3,3,5\}$), the size-$5$ object has location $2$, i.e., it is allocated to the memory slots with indices in the interval $[2,2+5)$. See visualization in \cref{fig:examplephi}.

We now define the \emph{difference} between two full memory states, which is a lower bound on the total switching cost required to transform one state to the other.

Recall that a size-$\mu$ object is said to have \emph{location} $p$ if it is allocated to the interval $[p,p+\mu)$.

\begin{definition}[$\Delta(\phi,\phi')$]
   \label{defn:delta}
Given two full memory states $\phi,\phi'$, we say a size-$\mu$ object at location $p$  in $\phi$ is \emph{unchanged}, if $\phi'$ also has a size-$\mu$ object at location $p$; otherwise, we say the object is \emph{changed}.
Then, define the \emph{difference} $\Delta(\phi,\phi')$ as the total size of changed objects in $\phi$.
\end{definition}

\begin{observation}
   \label{obs:diff-lb-switchcost}
For two full memory states $\phi,\phi'$, the total switching cost to transform $\phi$ to $\phi'$ is at least $\Delta(\phi,\phi')$.
\end{observation}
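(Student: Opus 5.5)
We will show that every changed object of $\phi$ must either be moved or deleted during the transformation. Hence the total switching cost is at least the total size of changed objects of $\phi$, which is $\Delta(\phi,\phi')$ by \cref{defn:delta}.

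Fix a changed object $y$ of $\phi$, of size $\mu$ at location $p$. Consider the (possibly multi-update) sequence of operations that transforms $\phi$ into $\phi'$. There are two cases.

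\emph{Case 1: $y$ is deleted} in one of these updates. Then the overhead of that deletion counts its full size $\mu$, via the leading term $1$ in the overhead formula, which after multiplying by $\mu(y)$ contributes exactly $\mu(y)$ to the unnormalized switching cost.

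\emph{Case 2: $y$ survives to $\phi'$.} If $y$ were never moved, then $\phi'$ would contain $y$ itself, a size-$\mu$ object at location $p$. That would make $y$ unchanged, a contradiction. So $y$ is moved at least once, and each such move contributes $\mu(y)$ to the switching cost of the update in which it occurs.

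Note that the identity of objects matters here. If $y$ is deleted and some other size-$\mu$ object is later placed at $p$, then $y$ is still paid for through its deletion in Case 1. The definition of "unchanged" only uses sizes and locations, so it can only make fewer objects count as changed, never more. This means the bound holds regardless of which labeled objects end up where.

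Summing over the distinct changed objects of $\phi$, each is charged at least once and with its own size, so the charges do not double count. This gives total switching cost $\ge \sum_{y\text{ changed}}\mu(y)=\Delta(\phi,\phi')$. The only subtle point is this bookkeeping of identities versus sizes, which we have handled above; everything else is immediate from the definitions.
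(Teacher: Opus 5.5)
Your proof is correct and is exactly the reasoning behind this observation, which the paper states without proof: a changed object of $\phi$ cannot survive unmoved, so it is either deleted or moved in some update. Since each changed object claims a term of the switching cost indexed by itself, no term is counted twice, and the sum is at least $\Delta(\phi,\phi')$.
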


We now define the useful notion of \emph{maximal changed intervals}, which we borrow from the previous work \cite{Farach-ColtonKS24}. 
\begin{definition}[Maximal changed intervals]
   \label{defn:maximalchanged}
In the same setup as \cref{defn:delta},
we say $[L,R)\subseteq [0,M)$ is a \emph{maximal changed interval} if it is an inclusion-maximal interval such that every object in $\phi$ intersecting this interval is changed. 

By definition, $[0,M)$ is partitioned into the disjoint union of the unchanged objects and the maximal changed intervals. 
In particular, $\Delta(\phi,\phi')$ equals the total length of the maximal changed intervals.
\end{definition}

We remark that the definitions of $\Delta(\phi,\phi')$, unchanged objects, and maximal changed intervals are symmetric with respect to $\phi$ and $\phi'$.

See \cref{fig:examplephi} for an example with visualization that illustrates the definitions above.

\subsection{Logarithmic lower bound for expected overhead}
\label{sec:loglb}

In this section, we prove \cref{thm:loglowerbound-discrete} (which implies \cref{thm:loglowerbound}).

We will define a hard distribution over sequences of full-memory size profiles. 
We first specify the memory size $M$, and the family of size profiles that we will use.

Let parameter $k$ be a positive integer to be determined later. 

\begin{definition}[Size profile $S_\pi$]
For a permutation $\pi\colon [k]\to [k]$, define the size profile \[S_{\pi} = \{2^{2k+1} + 2^{k+i} + 2^{\pi(i)}: i\in \{1,2,\dots,k\}\},\]
which consists of $k$ distinct-sized objects of total size 
\[M:= k\cdot 2^{2k+1} + \sum_{i=1}^{k}2^{k+i} + \sum_{i=1}^k 2^{i} = (k+1)2^{2k+1} - 2  .\] %

Note that all object sizes are in the interval $[2^{2k+1},2^{2k+2})$.
\end{definition}
By the definition above, the number of memory slots is $M \in \Theta(k2^{2k}) $.
Hence, given $M_0\ge 1$, we can choose some $k \in \frac{1}{2}(\log M_0 - \log\log M_0) + O(1)$ so that $M \in \Theta(M_0)$. Then, the object sizes are in $[\mu,2\mu)$ where $\mu \in \Theta(M/\log M)$.
Recall that our goal is to prove a worst-case expected switching cost lower bound of $\Omega(M)$ in this full-memory setting.

We have the following simple but crucial lemma:
\begin{lemma}
   \label{lem:binary}
  For two permutations $\pi,\pi'\colon [k]\to [k]$, suppose two non-empty subsets $X\subseteq S_{\pi}, X'\subseteq S_{\pi'}$ have equal sum $\sum_{x\in X} x = \sum_{x'\in X'} x'$. 
 Then, there is a set $J \subseteq [k]$ such that 
 \begin{align}
  X &= \{2^{2k+1}+2^{k+i}+2^{\pi(i)} : i\in J\},\label{eqn:tempx}  \\ X' &= \{2^{2k+1}+2^{k+i}+2^{\pi'(i)} : i\in J\}, \label{eqn:tempxprime}
 \end{align}
   and 
   \begin{equation}
    \{\pi(i):i\in J\}=\{\pi'(i):i\in J\}. \label{eqn:piJ}
   \end{equation}
  \end{lemma}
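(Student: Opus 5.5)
We compare the two sums digit by digit in binary, exploiting that the three components of each object size occupy disjoint bit ranges. Write $X=\{2^{2k+1}+2^{k+i}+2^{\pi(i)}: i\in J\}$ and $X'=\{2^{2k+1}+2^{k+i}+2^{\pi'(i)}: i\in J'\}$ for some sets $J,J'\subseteq[k]$. This is possible because every element of $S_\pi$ is determined by its index $i$: the $2^{k+i}$ terms are distinct across $i$, so the elements are distinct. Then
\[
\sum_{x\in X}x = |J|\cdot 2^{2k+1} + \sum_{i\in J}2^{k+i} + \sum_{i\in J}2^{\pi(i)},
\]
and similarly for $X'$. The goal is to show $J=J'$ and \cref{eqn:piJ}.

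The key observation is that $\sum_{i\in J}2^{\pi(i)}$ is a sum of distinct powers $2^1,\dots,2^k$, since $\pi$ is injective. Hence this term lies in $[0,2^{k+1})$ and its binary digits occupy positions $1,\dots,k$. Likewise $\sum_{i\in J}2^{k+i}$ has binary digits exactly in positions $\{k+i:i\in J\}\subseteq\{k+1,\dots,2k\}$, and is less than $2^{2k+1}$. The top term $|J|\cdot 2^{2k+1}$ is a multiple of $2^{2k+1}$.

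So the sum has a unique decomposition into three blocks: the quotient by $2^{2k+1}$, which equals $|J|$; the bits in positions $k+1,\dots,2k$, which encode $J$; and the bits in positions $1,\dots,k$, which encode $\pi(J)$. No carries occur between blocks because the two lower parts together are $<2^{2k+1}$ and are supported on disjoint bit positions.

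Equating the two sums and reading off the three blocks gives $|J|=|J'|$, then $J=J'$ from positions $k+1,\dots,2k$, and finally $\{\pi(i):i\in J\}=\{\pi'(i):i\in J\}$ from positions $1,\dots,k$. Substituting $J'=J$ yields \cref{eqn:tempx}, \cref{eqn:tempxprime} and \cref{eqn:piJ}.

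The only step needing care is the no-carry claim. It rests on two facts. First, $\pi$ is a permutation, so the low powers are distinct and produce no carries within the low block. Second, the $k+i$ exponents are distinct. For the no-carry claim we need not even use the first block $|J|$ separately: comparing the sums modulo $2^{2k+1}$ already settles the two lower blocks. Nonemptiness of $X,X'$ is not needed for the argument.
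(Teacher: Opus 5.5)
Your proof is correct and follows the paper's route. Both read off $J$ from bit positions $k+1,\dots,2k$ and $\{\pi(i):i\in J\}$ from positions $1,\dots,k$, using that no carries cross between these blocks; your explicit no-carry argument (distinct low powers because $\pi$ is injective, lower part below $2^{2k+1}$) spells out the step the paper only asserts.
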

\begin{proof}
  Consider the binary representation of $\sum_{x\in X} x$, restricted to the part between the $(k+1)$-st bit and the $2k$-th bit (inclusive; the $i$-th bit has binary weight $2^i$).
  By definition of $S_\pi$,  the positions of 1s among the bits in this part uniquely determine the set $J\subseteq [k]$
 that satisfies \cref{eqn:tempx}, 
 since there are no carries from the lower binary bits in the summation.
  Since this sum is the same as $\sum_{x'\in X'} x'$, we know that the same $J$ should also satisfy \cref{eqn:tempxprime}.
  
  Similarly, the part 
between the $1$-st bit and the $k$-th bit in the binary representation of
$\sum_{x\in X} x$ can uniquely determine the set $\{\pi(i): i\in J\}$. 
Since this sum is the same as $\sum_{x'\in X'} x'$, we conclude that \cref{eqn:piJ} must hold.
\end{proof}

Now we are ready to prove \cref{thm:loglowerbound-discrete}.
\begin{proof}[Proof of \cref{thm:loglowerbound-discrete}]
We define a distribution over length-two sequences of size profiles, $(S_{\iota},S_{\pi'})$,
where $\iota \colon [k] \to [k]$ is the identity permutation $\iota(i)=i$, and  
let $\pi'\colon [k] \to [k]$ be modified from $\iota$ by swapping  two random coordinates, that is,  we independently pick two uniformly random \emph{distinct} indices $a,b\in [k]$ and
define 
\[ \pi'(i)\coloneqq \begin{cases}
  a & \text{if }i=b,\\
  b & \text{if }i=a,\\
  i & \text{otherwise.}\\
\end{cases}\]
Note that $\delta(S_{\iota},S_{\pi'})=4$, since the adversary can transform $S_\iota$ to $S_{\pi'}$ by first deleting the two objects of sizes 
\[x\coloneqq 2^{2k+1} + 2^{k+a} + 2^{a}, y\coloneqq 2^{2k+1} + 2^{k+b} + 2^{b}\], and then inserting another two objects of sizes
\[x'\coloneqq 2^{2k+1} + 2^{k+b} + 2^{a}, y'\coloneqq 2^{2k+1} + 2^{k+a} + 2^{b}.\]
Initially, the size profile $S_{\iota}$ can be realized by inserting $k$ objects into the empty memory.

We run the randomized allocator on the random update sequence induced by the sequence of size profiles $(S_\iota,S_{\pi'})$. 
We compare the full memory states $\phi_{\iota}$ and $\phi_{\pi'}$ corresponding to the size profiles $S_{\iota}$ and $S_{\pi'}$ respectively.
\begin{claim}
There is a maximal changed interval between $\phi_{\iota}$ and $\phi_{\pi'}$ that contains both the size-$x$ and size-$y$ objects of $\phi_{\iota}$.
\end{claim}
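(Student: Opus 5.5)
The plan is to take the maximal changed interval $[L,R)$ that contains the size-$x$ object of $\phi_\iota$, and show that it must also contain the size-$y$ object.

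First, the size-$x$ object of $\phi_\iota$ is changed. Its size $x=2^{2k+1}+2^{k+a}+2^{a}$ does not occur in $S_{\pi'}$: the unique element of $S_{\pi'}$ with middle bit $2^{k+a}$ is $2^{2k+1}+2^{k+a}+2^{\pi'(a)}=2^{2k+1}+2^{k+a}+2^{b}\neq x$. So $\phi_{\pi'}$ has no size-$x$ object at all, and the object lies inside some maximal changed interval $[L,R)$.

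Next, I will show that $[L,R)$ is tiled exactly by objects of $\phi_\iota$, and also exactly by objects of $\phi_{\pi'}$. By \cref{defn:maximalchanged}, $[0,M)$ is partitioned into the unchanged objects and the maximal changed intervals. Since an unchanged object occupies the same slots in both states, each endpoint $L$ and $R$ is either $0$, $M$, or the boundary of an unchanged object. The memory is full in both states, so:
\begin{itemize}
\item the objects of $\phi_\iota$ meeting $[L,R)$ lie entirely inside it, by maximality and because unchanged objects are disjoint from it;
\item by the symmetry of the definitions noted after \cref{defn:maximalchanged}, the same holds for the objects of $\phi_{\pi'}$.
\end{itemize}
Let $X\subseteq S_\iota$ be the sizes of the $\phi_\iota$-objects in $[L,R)$, and $X'\subseteq S_{\pi'}$ the sizes of the $\phi_{\pi'}$-objects there. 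Both tile $[L,R)$, so both sum to $R-L$, and $x\in X$. Since objects in each profile have distinct sizes, these are genuine sets.

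Now apply \cref{lem:binary} with $\pi=\iota$. It gives $J\subseteq[k]$ with $X=\{2^{2k+1}+2^{k+i}+2^{i}:i\in J\}$ and $\{i:i\in J\}=\{\pi'(i):i\in J\}$, that is, $\pi'(J)=J$. From $x\in X$ we get $a\in J$, hence $\pi'(a)=b\in J$. Therefore $y=2^{2k+1}+2^{k+b}+2^{b}\in X$, so the size-$y$ object of $\phi_\iota$ lies in the same interval $[L,R)$.

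The only delicate step is the tiling claim, specifically that no object of $\phi_{\pi'}$ straddles $L$ or $R$. This is what lets \cref{lem:binary} apply to genuine equal-sum subsets, and it should follow directly from the partition statement in \cref{defn:maximalchanged} together with full memory.
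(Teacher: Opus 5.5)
Your proof is correct and follows the paper's argument. You take the maximal changed interval containing the size-$x$ object, note that its length is a subset sum of both $S_\iota$ and $S_{\pi'}$, and apply \cref{lem:binary} to force $b\in J$ and hence $y\in X$. Deriving $b\in J$ via $\pi'(a)=b\in\pi'(J)=J$, and spelling out why both states tile $[L,R)$ exactly, are valid elaborations of steps the paper states more tersely.
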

\begin{proof}
The size-$x$ and size-$y$ objects appear in $\phi_{\iota}$ but no longer appear in $\phi_{\pi'}$, so both of them are changed objects, and each should be in some maximal changed interval (\cref{defn:maximalchanged}).
Let $[L,R)$ be the maximal changed interval that contains the size-$x$ object in $\phi_{\iota}$.
This interval consists of a subset of changed objects in $\phi_{\iota}$ including the size-$x$ object. Hence,
 the interval length $R-L$ equals the sum of some subset $X\subseteq S_{\iota}$, where $X \ni x$. 
Similarly, this interval length $R-L$ also 
equals the sum of some subset $X' \subseteq S_{\pi'}$.
Hence, $\sum_{x\in X} x = \sum_{x'\in X'} x'$.

We can now apply \cref{lem:binary} to $\iota,\pi'$ and $X,X'$, and obtain $J\subseteq [k]$. 
From $x \in X$ and \cref{eqn:tempx}, we get $a\in J$; in particular, $\iota(a) \in \{\iota(i) : i \in J\}$. 
Then, by \cref{eqn:piJ}, we have $\iota(a) \in \{\pi'(i) : i \in J\}$ as well.
Since $\iota(a) = a = \pi'(b)$,  this implies $\pi'(b)\in \{\pi'(i) : i \in J\}$, and thus $  b\in J$.
By $b\in J$ and \cref{eqn:tempx}, we get $y\in X$.
Therefore, the size-$y$ object is also contained in the interval $[L,R)$, which proves the claim.
\end{proof}

By the claim above, all objects in $\phi_\iota$ located 
between the size-$x$ and size-$y$ objects (inclusive)
are changed objects, whose sizes should contribute to $\Delta(\phi_{\iota},\phi_{\pi'})$. 
Since $a\neq b \in [k]$ are uniformly randomly chosen by the adversary and are unknown to the allocator in advance, one can show that the expected number of objects in $\phi_{\iota}$ located between the size-$x$ and size-$y$ objects (inclusive) equals $\frac{k+4}{3}$. 
Since all object sizes are in $[\mu,2\mu] = [2^{2k+1},2^{2k+2}]$, we get that $\Delta(\phi_{\iota},\phi_{\pi'})$ has expectation at least $\frac{k+4}{3} \mu$.
Hence, by \cref{obs:diff-lb-switchcost}, the expected total switching cost to transform $\phi_{\iota}$ to $\phi_{\pi'}$ is at least
$\frac{k+4}{3} \mu$.
Since this transformation is completed within $\delta(S_{\iota},S_{\pi'})=4$ updates,
we know the worst-case expected switching cost achieved by the allocator must be at least $\frac{k+4}{3} \mu/ \delta(S_{\iota},S_{\pi'}) \ge \Omega(M)$. 
This finishes the proof of \cref{thm:loglowerbound-discrete}.
\end{proof}

\subsection{Polynomial lower bound for expected squared overhead}
\label{sec:polylb}
In this section, we prove \cref{thm:squaredlowerbound-discrete} (which implies \cref{thm:mainsecondmoment}).
We described the intuition in the overview section (\cref{subsec:overview}). Now
we outline the structure of the formal proof:
\begin{itemize}
   \item  In \cref{subsub:sizeprof}, we construct the family $\{S_{i,j}\}$ of size profiles used in our proof. We establish the property of finger objects as mentioned in the overview section.
   \item  In \cref{subsub:tree}, we use $\{S_{i,j}\}$ to design hard update sequences. 
   These sequences are organized as a tree $\caT$, where each node corresponds to an $S_{i,j}$.
   The tree $\caT$ has a spine and many branches. 
    The  ``surprise inspection'' mentioned in the overview intuitively corresponds to
   branching off from a random point on the spine.
   \item  In \cref{subsub:yao}, assuming a good randomized allocator $\caA^{\mathrm{rand}}$ exists, we use Yao's principle to fix a certain good deterministic allocator $\caA$. Then, each node on the tree $\caT$ is associated with a memory state, namely the state reached by running $\caA$ on the update sequence induced by the root-to-node path on $\caT$. 
      From the performance of $\caA$, we obtain certain inequalities involving the differences between these memory states.
   \item  In \cref{subsub:mainproof}, we use the properties of finger objects to show that these inequalities cannot hold, reaching a contradiction.  Hence the purported allocator $\caA^{\mathrm{rand}}$ cannot exist, finishing the proof.
\end{itemize}

\subsubsection{Size profiles and their properties}
\label{subsub:sizeprof}
We construct the family of size profiles that we will use in our hard instances, and prove its main property in \cref{lem:finger}.  
The construction is based on two integer parameters $P> Q\ge 1$, which will be determined later. 

We need a standard construction of several integers whose  small-coefficient linear combinations are all distinct:
\begin{lemma}
   \label{lem:fournumbers}
  Given $n_1\ge n_2 \ge n_3 \ge n_4\ge 1$, there exist positive integers $a_1,a_2,a_3,a_4 \in \Theta(n_2n_3n_4)$, such that the integers $k_1a_1+k_2a_2+k_3a_3 + k_4a_4$ for $k_i \in \Z \cap [-n_i,n_i]$ are all distinct.
  Moreover, $a_1< a_2< a_3< a_4 \le 2a_1$.
\end{lemma}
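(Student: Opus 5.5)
We construct the $a_i$ in a mixed-radix way: a common large base term, plus small perturbations living in separate digit ranges. Concretely, set $D \coloneqq (2n_2+1)(2n_3+1)(2n_4+1)$ and $N \coloneqq 4 D$, and define
\[
a_1 \coloneqq N,\quad a_2 \coloneqq N + 1,\quad a_3 \coloneqq N + (2n_4+1)\cdot 0 + \dots
\]
More carefully, I would choose $a_i \coloneqq N + d_i$ with $d_1 \coloneqq 0$, $d_4 \coloneqq 1$, $d_3 \coloneqq (2n_4+1)$, $d_2 \coloneqq (2n_4+1)(2n_3+1)$. These offsets play the role of place values of a mixed-radix system with digit ranges $[-n_4,n_4]$, $[-n_3,n_3]$, $[-n_2,n_2]$. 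Relabeling is needed to get $a_1<a_2<a_3<a_4$. For that I would instead put the largest offset on $a_4$ and zero offset on $a_1$, and keep $d_2<d_3<d_4$ by reassigning which coefficient range each place value serves. Equivalently, since the bound $n_1\ge n_2\ge n_3\ge n_4$ only matters through the product, I can assign place value $1$ to $a_2$ with radix $2n_2+1$, place value $2n_2+1$ to $a_3$ with radix $2n_3+1$, and place value $(2n_2+1)(2n_3+1)$ to $a_4$. Then $d_2<d_3<d_4<D$, and all $a_i\in[N,N+D]\subseteq\Theta(n_2n_3n_4)$ with $a_4\le 2a_1$.

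For distinctness, suppose two coefficient vectors give equal sums. Their difference $(e_1,\dots,e_4)$ satisfies $|e_i|\le 2n_i$ and $\sum e_i a_i=0$, i.e.\ $N(e_1+e_2+e_3+e_4) + e_2d_2+e_3d_3+e_4d_4 = 0$. The key step is showing this forces $E\coloneqq e_1+\dots+e_4=0$. The perturbation part has absolute value at most $2n_2 d_2+2n_3d_3+2n_4 d_4 < 2D$, whereas a nonzero $E$ makes the first term at least $N=4D$. So $E=0$.

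The main obstacle is exactly this step, and it is where $n_1$ is dangerous: $e_1$ can be as large as $2n_1$, which is unbounded relative to $D$. The saving point is that $e_1$ enters only through $E$, and $E$ is pinned to $0$ by the size comparison above. That comparison uses only the bounds on $e_2,e_3,e_4$, so $n_1$ never appears.

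With $E=0$, we are left with $e_2d_2+e_3d_3+e_4d_4=0$, where the digits satisfy $|e_2|\le 2n_2 < 2n_2+1$, and so on. A standard mixed-radix uniqueness argument finishes it:
\begin{itemize}
\item reducing mod $2n_2+1$ gives $e_2\equiv 0$, so $e_2=0$ since $|e_2|\le 2n_2$;
\item dividing by $2n_2+1$ and reducing mod $2n_3+1$ gives $e_3=0$;
\item then $e_4=0$.
\end{itemize}
Finally $e_1=-(e_2+e_3+e_4)=0$, so the coefficient vectors coincide.

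For the size claims, I take $N=4D$ so that $a_i\in[4D,5D]$, giving $a_4\le \frac54 a_1\le 2a_1$ and $a_i=\Theta(n_2n_3n_4)$. The ordering $a_1<a_2<a_3<a_4$ holds because $0<1<2n_2+1<(2n_2+1)(2n_3+1)$.
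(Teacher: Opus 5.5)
Your proof is correct and is essentially the paper's mixed-radix construction. The only real variation is how you neutralize the unbounded coefficient $e_1$: you use a size comparison forcing $e_1+e_2+e_3+e_4=0$ (which is why you need the base $N=4D$; the perturbation is in fact at most $D-1$). The paper instead takes $a_1=(2n_2+1)(2n_3+1)(2n_4+1)$ itself, which is divisible by every radix, uses cumulative offsets, and peels off $m_4,m_3,m_2$ by congruences until only $m_1a_1=0$ remains. In a final write-up, delete the two abandoned attempts before your final choice $(d_1,d_2,d_3,d_4)=(0,1,2n_2+1,(2n_2+1)(2n_3+1))$, since the ``More carefully'' offsets as first written would give $a_2>a_3>a_4$.
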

\begin{proof}
   Let $n_i' \coloneqq  2n_i+1$ for all $i\in [4]$. Let
   \begin{align*}
   a_1 & \coloneqq n_2'n_3'n_4'\\
   a_2 & \coloneqq a_1 + n_3'n_4'\\
   a_3 & \coloneqq a_2 + n_4'\\
   a_4 & \coloneqq a_3 + 1,
   \end{align*}
   which clearly satisfy the ``moreover'' part of the lemma statement.

  Suppose to the contrary that 
$k_1a_1+\dots + k_4a_4 =k'_1a_1+\dots + k'_4a_4$ for some $(k_1,\dots,k_4)\neq (k'_1,\dots,k'_4)$ where $k_i,k'_i \in \Z\cap [-n_i,n_i]$. Then, 
$m_1a_1+\dots + m_4a_4 = 0 $, where $m_i = k_i-k_i' \in [-2n_i,2n_i]$ and $m_i$ are not all zero. 
Since $a_1\equiv a_2\equiv a_3 \equiv 0 \pmod{n'_4}$ and $a_4\equiv 1\pmod{n'_4} $,  we obtain $ 0 =  m_1a_1+\dots +m_4a_4\equiv m_4 \pmod{n'_4} $. Since $|m_4| \le 2n_4 = n'_4-1$, we must have $m_4=0$.

Then, $m_1a_1+m_2a_2+m_3a_3=0$. We divide both sides by $n'_4$, and note that $\frac{a_1}{n'_4} \equiv \frac{a_2}{n'_4}\equiv 0\pmod{n'_3}$, $\frac{a_3}{n'_4}\equiv 1 \pmod{n'_3}$. Using the same argument as the previous paragraph, we get $m_3=0$. Iterating the argument again gives $m_2=0$, at which point we conclude $m_1,\dots,m_4$ must all be zero, a contradiction. Thus $a_1,a_2,a_3,a_4$ satisfy the desired conditions.
\end{proof}

Let $P> Q\ge 1$ be integer parameters to be determined later. 

Use \cref{lem:fournumbers} to construct four positive integers $a,b,c,d$ such that
the integers
\begin{equation}
   \label{eqn:unique}
  k_1a+k_2b+k_3c+k_4d \ :\  k_1,k_2 \in \Z\cap [-3P,3P], k_3 \in \Z \cap [0,2Q],k_4\in \{0,1\}
\end{equation}
are all distinct, and 
\begin{equation}
   \label{eqn:rangesize}
 \mu \coloneqq a<b<c<d \le 2\mu
\end{equation}
 where $\mu \in \Theta(PQ)$.
We define the number of memory slots in our hard instance to be
\begin{equation}
  M\coloneqq Pa + Qc + d \in [P\mu, 3P\mu]  \subset \Theta(P^2Q).
  \label{eq:total-memory}
\end{equation}

\begin{definition}[Size profile $S_{i,j}$]
   \label{defn:sizeprof}
   For integers $0\le i\le P, 0\le j\le Q$, define the full-memory size profile $S_{i,j}$ as the multiset consisting of:
   \begin{itemize}
      \item     $i$ size-$a$ objects,
         \item  $j$ size-$c$ objects,
\item  $h$ size-$b$ objects, where $h\coloneqq h(i,j)\coloneqq \lfloor (M-d-ia-jc)/{b}\rfloor$ (by \cref{eq:total-memory}, $0\le h\le 3P$ must hold), and
 \item one remaining object (termed the \emph{finger object}) of size $M - ia - jc - hb$.
   \end{itemize}
\end{definition}
The finger object size $f$ in \cref{defn:sizeprof} satisfies
\[f-d  = (M - d- ia - jc) - \lfloor (M-d-ia-jc)/{b}\rfloor \cdot b\in [0,b),\] that is, $f\in [d,d+b)$. Combining this bound with \cref{eqn:rangesize} yields the following basic observations:
\begin{observation}
   \label{obs:boundedratio}
 Every object in $S_{i,j}$ has size in $[\mu,4\mu]$.
\end{observation}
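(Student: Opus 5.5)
The plan is a direct case check over the four kinds of objects in $S_{i,j}$, using only \cref{eqn:rangesize} and the bound $f\in[d,d+b)$ established just before the observation. Recall that \cref{eqn:rangesize} gives $\mu = a < b < c < d \le 2\mu$.

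\emph{Size-$a$, size-$b$ and size-$c$ objects.} Each of these has size in $[a,d]\subseteq[\mu,2\mu]$, which already lies inside $[\mu,4\mu]$.

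\emph{The finger object.} Its size $f$ satisfies $f\ge d>\mu$ for the lower bound. For the upper bound, $f<d+b\le 2\mu+2\mu=4\mu$. So $f\in[\mu,4\mu]$ as well.

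The only ingredient that needs attention is the range of $f$, which was derived immediately before the statement from the definition of $h(i,j)$ as a floor. That derivation needs $M-d-ia-jc\ge 0$ so that $h\ge 0$. This holds because $i\le P$, $j\le Q$, and $M=Pa+Qc+d$ by \cref{eq:total-memory}. Given this, there is no real obstacle: the observation is immediate from the two displayed bounds.
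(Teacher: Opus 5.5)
Your proof is correct and matches the paper's argument exactly: the paper also just combines $f\in[d,d+b)$ with $\mu=a<b<c<d\le 2\mu$. One small remark: $f-d\in[0,b)$ follows from the floor definition whatever the sign of $M-d-ia-jc$, so your nonnegativity check is needed only to ensure $h\ge 0$ (that is, that $S_{i,j}$ is well defined), not for the range of $f$.
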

\begin{observation}
   The finger object in $S_{i,j}$ has size different from $a$, $b$, and $c$ (which justifies its distinguished role).
\end{observation}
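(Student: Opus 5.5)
The claim follows directly from \cref{defn:sizeprof} and the bounds already recorded on $f$. So the plan is simply to compare $f$ with each of $a,b,c$ using the inequality $f\in[d,d+b)$.

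First I will re-derive $f\ge d$. By construction $h=\lfloor (M-d-ia-jc)/b\rfloor$, so
\[
f-d=(M-d-ia-jc)-hb\in[0,b).
\]
This uses that $M-d-ia-jc\ge 0$, which holds since $i\le P$, $j\le Q$ and $M=Pa+Qc+d$.

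Then I use \cref{eqn:rangesize}, which gives $a<b<c<d$. Combining, $f\ge d>c>b>a$, so $f$ is strictly larger than each of $a$, $b$, $c$, and hence differs from all three.

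There is no real obstacle here. The only point to verify is the nonnegativity of $M-d-ia-jc$, which makes the floor, and hence $h\ge 0$, well-defined; the bound $i\le P$, $j\le Q$ handles it. In passing I will note that $f$ may coincide with $d$. This is harmless, and \cref{eqn:unique} with $k_4\in\{0,1\}$ is designed to accommodate exactly one such object.
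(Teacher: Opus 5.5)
Your proposal is correct and matches the paper's argument: the paper likewise derives $f\in[d,d+b)$ from the definition of $h$ and combines it with $a<b<c<d$ from \cref{eqn:rangesize}. One small correction: $x-\lfloor x/b\rfloor b\in[0,b)$ holds for every real $x$, so the nonnegativity of $M-d-ia-jc$ is needed only for $h\ge 0$, not for $f\ge d$.
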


We now describe the arithmetic structure of subset sums of object sizes in $S_{i,j}$:
\begin{lemma}
   \label{cor:sizestructure}
Let $X$ be any subset of the multiset $S_{i,j}$. Then, there exists a unique tuple 
$(k_1,k_2,k_3,k_4)$ in the range $k_1\in \Z\cap [0,2P], k_2 \in \Z\cap [-3P,3P], k_3\in \Z\cap [0,2Q], k_4\in \{0,1\}$, such that the sum of $X$ equals $k_1a+k_2b+k_3c+k_4d$. 
Moreover, $k_4=1$ if and only if $X$ contains the finger object of $S_{i,j}$.
\end{lemma}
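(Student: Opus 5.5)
Existence comes from an explicit decomposition of the sum, and uniqueness from the distinctness property \cref{eqn:unique}. The only subtle point is that the finger object's size $f$ is not one of $a,b,c,d$. We handle it by writing $f$ in terms of $a,b,c,d$ using the definition of $M$.

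For existence, let $X$ contain $x_a\le i\le P$ size-$a$ objects, $x_b\le h\le 3P$ size-$b$ objects, and $x_c\le j\le Q$ size-$c$ objects, plus possibly the finger object. If $X$ omits the finger object, the tuple $(x_a,x_b,x_c,0)$ works and lies in the stated range. If $X$ contains the finger object, we use $M=Pa+Qc+d$ to write
\[
f = M-ia-jc-hb = (P-i)a + (Q-j)c + d - hb .
\]
The sum of $X$ is then
\[
(x_a+P-i)a + (x_b-h)b + (x_c+Q-j)c + d .
\]
The coefficients satisfy the following bounds:
\begin{itemize}
\item $x_a+P-i\in[0,2P]$, since $0\le x_a\le i\le P$;
\item $x_b-h\in[-3P,3P]$, since $0\le x_b\le h\le 3P$;
\item $x_c+Q-j\in[0,2Q]$, since $0\le x_c\le j\le Q$;
\item the coefficient of $d$ is $k_4=1$.
\end{itemize}
So the tuple lies in the stated range, which gives existence.

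For uniqueness, the range $k_1\in[0,2P]$ is contained in $[-3P,3P]$. Hence every admissible tuple is among those in \cref{eqn:unique}, and two distinct admissible tuples give distinct integers. So at most one admissible tuple represents the sum of $X$.

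The ``moreover'' part follows directly from this construction. The unique tuple is the one built above, and that tuple has $k_4=1$ exactly when $X$ contains the finger object. I expect no real obstacle; the only care needed is checking that the coefficient ranges (especially $h\le 3P$, which \cref{defn:sizeprof} provides) fit inside those allowed in \cref{eqn:unique}.
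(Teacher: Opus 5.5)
Your proposal is correct and follows the paper's proof essentially step for step. It uses the same two-case existence argument with $f=(P-i)a-hb+(Q-j)c+d$, derives uniqueness from the distinctness of the integers in \cref{eqn:unique} (the admissible range being a subrange of that one), and obtains the ``moreover'' part from the case distinction combined with uniqueness.
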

\begin{proof}
Given $X\subseteq S_{i,j}$, we first show that there exists \emph{some} $(k_1,k_2,k_3,k_4)$ from the specified range such that 
the sum of $X$ equals $k_1a+k_2b+k_3c+k_4d$.

Recall from \cref{defn:sizeprof} that the numbers of size-$a$, size-$b$, and size-$c$ objects in $S_{i,j}$ are $i\le P$, $h\le 3P$, and $j\le Q$ respectively.
Denote the number of size-$a$, size-$b$, and size-$c$ objects contained in $X \subseteq S_{i,j}$ by $m_1\in [0,P],m_2\in [0,3P],$ and $m_3\in [0,Q]$, respectively.  
Now consider two cases:
\begin{itemize}
   \item If $X$ does not contain the finger object of $S_{i,j}$, then the sum of $X$ equals $m_1a+m_2b+m_3c$, so $(k_1,k_2,k_3,k_4)\coloneqq (m_1,m_2,m_3,0)$ satisfies the requirement and is in the specified range.
      \item Otherwise, $X$ contains the finger object of size $f$. By \cref{defn:sizeprof} and \cref{eq:total-memory}, $f = M-ia-jc-hb = (P-i)a-hb+(Q-j)c+d$. 
      Thus, the sum of $X$ equals $f + m_1a+m_2b+m_3c = (P-i+m_1)a + (m_2-h)b + (Q-j+m_3)c + d$, so $(k_1,k_2,k_3,k_4)\coloneqq (P-i+m_1,m_2-h,Q-j+m_3,1)$ 
      satisfies the requirement and is in the specified range.
\end{itemize}

Hence, the desired $(k_1,k_2,k_3,k_4)$ in the specified range always exists. The uniqueness of this tuple in this range then follows from the fact that all integers in \cref{eqn:unique} are distinct.

The ``moreover'' part of the lemma statement follows from the case distinction above and the uniqueness of the tuple $(k_1,k_2,k_3,k_4)$.
\end{proof}

\cref{cor:sizestructure} implies the following crucial lemma, which is key to our proof:
\begin{lemma}
   \label{lem:finger}
  Let  $\phi,\phi'$ be two memory states corresponding to two size profiles $S_{i,j},S_{i',j'}$ respectively.
 Let $[L,R)$ be a maximal changed interval between $\phi,\phi'$. 
 
 If $[L,R)$ in $\phi$ does not contain the finger object of $S_{i,j}$, then the multiset of object sizes in $[L,R)$ in $\phi'$ is the same as that in $\phi$ (in particular, $[L,R)$ in $\phi'$ does not contain the finger object of $S_{i',j'}$).
\end{lemma}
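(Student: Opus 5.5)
The plan is to exploit the fact that a maximal changed interval $[L,R)$ is exactly tiled by whole objects in both states. First I will check that the objects of $\phi$ meeting $[L,R)$ lie entirely inside it. Suppose some changed object of $\phi$ sticks out past $L$. Then the point just left of $L$ is covered by a changed object, and $[L,R)$ could be extended, contradicting maximality; the same argument works at $R$. Hence $L$ and $R$ are boundaries between objects in $\phi$ (or endpoints of memory).

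I also need the same tiling in $\phi'$. Since the memory is full in $\phi'$ as well, $[0,M)$ is partitioned in $\phi'$ into the unchanged objects together with the remaining objects of $\phi'$. By the symmetry of \cref{defn:delta} noted after \cref{defn:maximalchanged}, the unchanged objects of $\phi'$ are exactly those of $\phi$, occupying the same locations. Consequently the complement of the unchanged objects, and in particular $[L,R)$, is tiled by objects of $\phi'$ as well. Let $X\subseteq S_{i,j}$ and $X'\subseteq S_{i',j'}$ be the multisets of sizes of the objects of $\phi$ and $\phi'$ inside $[L,R)$. Both sum to $R-L$.

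Next I apply \cref{cor:sizestructure} to both sets. It writes $\sum X = k_1a+k_2b+k_3c+k_4d$ and $\sum X'=k_1'a+k_2'b+k_3'c+k_4'd$ with coefficients in the specified ranges. All integers in \cref{eqn:unique} are distinct, so equal sums force $(k_1,\dots,k_4)=(k_1',\dots,k_4')$.

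By hypothesis $X$ avoids the finger object, so $k_4=0$, hence $k_4'=0$. The ``moreover'' part then says $X'$ avoids the finger object of $S_{i',j'}$. In the non-finger case the tuple is exactly the triple of counts of size-$a$, size-$b$ and size-$c$ objects. Therefore $X$ and $X'$ contain the same numbers of objects of each size $a,b,c$ and no other objects, so they are equal as multisets.

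The only delicate step is the tiling claim for $\phi'$, which relies on symmetry and full memory. The arithmetic part is immediate from \cref{cor:sizestructure}.
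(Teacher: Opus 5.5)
Your proposal is correct and follows the paper's proof. Both arguments equate the sums of $X$ and $X'$ (each equal to $R-L$), use the uniqueness and the ``moreover'' part of \cref{cor:sizestructure} to force $k_4'=0$, and then match the counts of size-$a$, $b$, $c$ objects. Your only addition is the explicit check that $[L,R)$ is tiled by whole objects in both $\phi$ and $\phi'$; the paper takes this for granted from \cref{defn:maximalchanged} and its symmetry remark, and your argument for it (maximality at the endpoints, plus the shared unchanged objects in full memory) is sound.
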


See \cref{fig:finger} for an example. 
By taking the contrapositive and using the symmetry between $\phi$ and $\phi'$ in the statement above, we have the following corollary:  $[L,R)$ in $\phi$ contains the finger object of $S_{i,j}$ if and only if  $[L,R)$ in $\phi'$ contains the finger object of $S_{i',j'}$.

\begin{proof}[Proof of \cref{lem:finger}]
Let $X \subseteq S_{i,j}$ and 
$X' \subseteq S_{i',j'}$ denote the multisets of sizes of the objects contained in $[L,R)$ in $\phi$ and $\phi'$ respectively.
Then the sum of $X$ and the sum of $X'$ both equal $R-L$.

Suppose $[L,R)$ in $\phi$ does not contain the finger object. Then, $R-L=k_1a+k_2b+k_3c+k_4d$, where $k_4=0$, and  $k_1\le P,k_2 \le 3P,k_3\le Q$ denote the number of size-$a$, size-$b$, size-$c$ objects in $X$ respectively.
By the ``moreover'' part of \cref{cor:sizestructure} applied to $X'$ and $k_4=0$, we conclude $X'$ does not contain the finger object of $S_{i',j'}$. 
 Therefore, $R-L =  k_1'a+k_2'b+k_3'c+0\cdot d$ where
 $k_1'\le P,k_2' \le 3P,k_3'\le Q$ denote the number of size-$a$, size-$b$, size-$c$ objects in $X'$ respectively.
 By the uniqueness in \cref{cor:sizestructure}, we must have $k_i=k_i'$ for all $i\in [3]$, i.e.,   the multisets $X$ and $X'$ are equal.
\end{proof}

\subsubsection{Hard update sequences}
\label{subsub:tree}

We now proceed to the construction of hard sequences of size profiles.

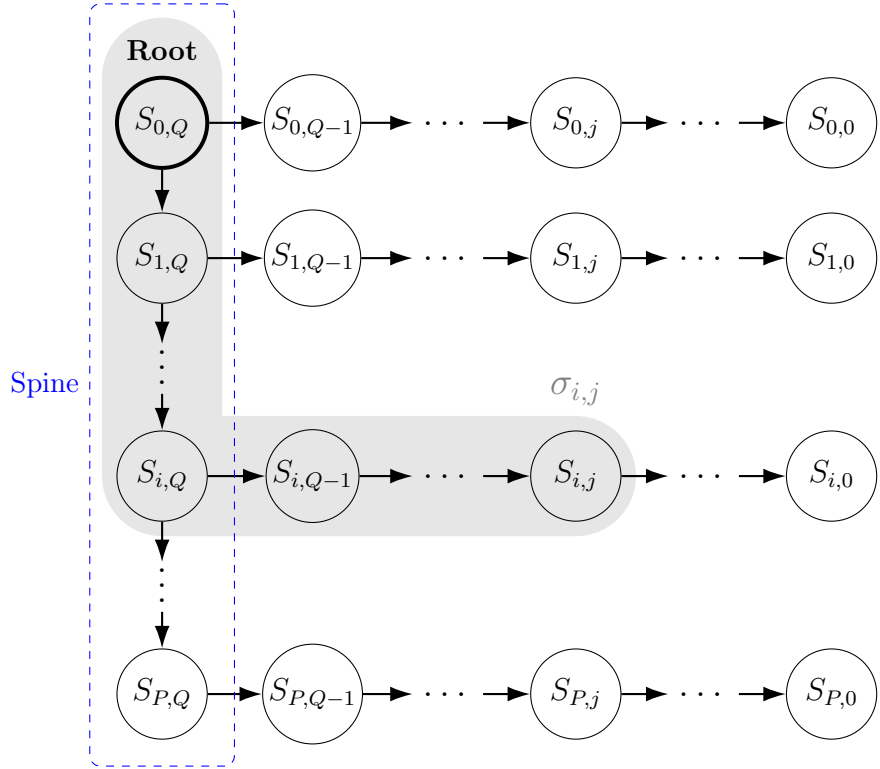
\begin{figure}[htbp]
\centering

\begin{tikzpicture}[
    node_style/.style={
        draw, 
        circle, 
        minimum size=1.2cm, 
        inner sep=1pt, 
        font=\large
    },
    dots_style/.style={
        draw=none, 
        rectangle, 
        font=\Large
    },
    edge_style/.style={
        draw, 
        ->, 
        >={Latex[length=3mm, width=2mm]}, 
        thick
    }
]

\def\xQ{0}           %
\def\xQm{2}        %
\def\xDa{3.8}        %
\def\xj{5.5}         %
\def\xDb{7.2}        %
\def\xZ{8.9}         %

\def\yZero{0}        %
\def\yOne{-1.8}      %
\def\yDa{-3.1}       %
\def\yI{-4.7}        %
\def\yDb{-6.0}       %
\def\yP{-7.6}        %

\node[node_style, line width=1.5pt] (S0Q) at (\xQ, \yZero) {$S_{0,Q}$};
\node[above=0.07cm of S0Q, font=\bfseries] (sroot) {Root};
\node[node_style] (S0Q-1) at (\xQm, \yZero) {$S_{0,Q-1}$};
\node[dots_style] (d0a)   at (\xDa, \yZero) {$\dots$};
\node[node_style] (S0j)   at (\xj,  \yZero) {$S_{0,j}$};
\node[dots_style] (d0b)   at (\xDb, \yZero) {$\dots$};
\node[node_style] (S00)   at (\xZ,  \yZero) {$S_{0,0}$};

\node[node_style] (S1Q)   at (\xQ, \yOne) {$S_{1,Q}$};
\node[node_style] (S1Q-1) at (\xQm, \yOne) {$S_{1,Q-1}$};
\node[dots_style] (d1a)   at (\xDa, \yOne) {$\dots$};
\node[node_style] (S1j)   at (\xj,  \yOne) {$S_{1,j}$};
\node[dots_style] (d1b)   at (\xDb, \yOne) {$\dots$};
\node[node_style] (S10)   at (\xZ,  \yOne) {$S_{1,0}$};

\node[dots_style] (vdotQ) at (\xQ, \yDa) {$\vdots$};

\node[node_style] (SiQ)   at (\xQ, \yI) {$S_{i,Q}$};
\node[node_style] (SiQ-1) at (\xQm, \yI) {$S_{i,Q-1}$};
\node[dots_style] (dia)   at (\xDa, \yI) {$\dots$};
\node[node_style] (Sij)   at (\xj,  \yI) {$S_{i,j}$}; 
\node[dots_style] (dib)   at (\xDb, \yI) {$\dots$};
\node[node_style] (Si0)   at (\xZ,  \yI) {$S_{i,0}$};

\node[dots_style] (vdotQ2) at (\xQ, \yDb) {$\vdots$};

\node[node_style] (SPQ)   at (\xQ, \yP) {$S_{P,Q}$};
\node[node_style] (SPQ-1) at (\xQm, \yP) {$S_{P,Q-1}$};
\node[dots_style] (dPa)   at (\xDa, \yP) {$\dots$};
\node[node_style] (SPj)   at (\xj,  \yP) {$S_{P,j}$};
\node[dots_style] (dPb)   at (\xDb, \yP) {$\dots$};
\node[node_style] (SP0)   at (\xZ,  \yP) {$S_{P,0}$};

\begin{scope}[every edge/.style={edge_style}]

    \path (S0Q) edge (S1Q);
    \path (S1Q) edge[shorten >=-10pt] (vdotQ); 
    \path (vdotQ) edge (SiQ); 
    
    \path (SiQ) edge[shorten >=-10pt] (vdotQ2);
    \path (vdotQ2) edge (SPQ);

    \path (S0Q) edge (S0Q-1);
    \path (S0Q-1) edge (d0a);
    \path (d0a) edge (S0j);
    \path (S0j) edge (d0b);
    \path (d0b) edge (S00);

    \path (S1Q) edge (S1Q-1);
    \path (S1Q-1) edge (d1a);
    \path (d1a) edge (S1j);
    \path (S1j) edge (d1b);
    \path (d1b) edge (S10);

    \path (SiQ) edge (SiQ-1);
    \path (SiQ-1) edge (dia);
    \path (dia) edge (Sij);
    \path (Sij) edge (dib);
    \path (dib) edge (Si0);

    \path (SPQ) edge (SPQ-1);
    \path (SPQ-1) edge (dPa);
    \path (dPa) edge (SPj);
    \path (SPj) edge (dPb);
    \path (dPb) edge (SP0);
\end{scope}

\node[draw,blue, dashed, rounded corners, fit=(sroot) (SPQ), inner sep=10pt, label=left:\textcolor{blue}{Spine}] {};

\begin{scope}[on background layer]
    \draw[line width=1.6cm, color=gray!20, line cap=round, line join=round] 
        ([yshift=6mm]S0Q.center) -- (SiQ.center) -- (Sij.center);
\end{scope}

\node[above=0.2cm of Sij, font=\Large] (sigmaij) {\textcolor{gray}{$\sigma_{i,j}$}};
\end{tikzpicture}

\caption{The rooted tree $\caT$ defined in \cref{defn:sigma}. The sequence $\sigma_{i,j}$ (shaded in gray) is the tree path from the root node $S_{0,Q}$ to the node $S_{i,j}$.
The tree has a \emph{spine} (shown in blue dashed box) $(S_{0,Q},S_{1,Q},\dots, S_{P,Q})$, and each node on the spine is the root of a \emph{branch} $(S_{i,Q},S_{i,Q-1},\dots,S_{i,0})$. }
\label{figure:sigmatree}
\end{figure}

  \begin{definition}[Tree $\caT$ and sequence $\sigma_{i,j}$]
     \label{defn:sigma}
    Define a rooted tree $\caT$ with $(P+1)(Q+1)$ nodes uniquely labeled by the size profiles $S_{i,j}$ with $0\le i\le P,0\le j\le Q$ (\cref{defn:sizeprof}), as follows (see an illustration in \cref{figure:sigmatree}): The root node is $S_{0,Q}$. For each $1\le i\le P$, the parent of $S_{i,Q}$ is $S_{i-1,Q}$. For each $0\le i\le P$ and $0\le j\le Q-1$, the parent of $S_{i,j}$ is $S_{i,j+1}$. 

 Define $\sigma_{i,j}$ as the sequence of the size profiles corresponding to the path on $\caT$ from the root to $S_{i,j}$, i.e., 
  \[ \sigma_{i,j} \coloneqq  (S_{0,Q},S_{1,Q},\dots,S_{i,Q}, S_{i,Q-1},S_{i,Q-2},\dots,S_{i,j+1},S_{i,j}). \]
  \end{definition}

  Observe that the difference $\delta(\cdot,\cdot)$ between two adjacent size profiles on $\caT$ is always $O(1)$:
\begin{observation}
   \label{obs:deltaconstant}
For all $0\le i\le P,1\le j\le Q$,  $\delta (S_{i,j},S_{i,j-1})\le 5$. 

For all $1\le i\le P$,  $\delta (S_{i,Q},S_{i-1,Q})\le 5$. 
\end{observation}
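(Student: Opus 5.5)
The plan is to compare the two multisets directly, counting multiplicities of each size. In \cref{defn:sizeprof}, every profile $S_{i,j}$ consists of four kinds of objects: $i$ copies of $a$, $j$ copies of $c$, $h(i,j)$ copies of $b$, and one finger object of size $f(i,j)\in[d,d+b)$. By \cref{eqn:rangesize}, $a<b<c<d$. Since $f\ge d>c$, the finger size is distinct from $a$, $b$ and $c$.

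I will bound $\delta$ size class by size class. The size-$a$ class contributes $|i-i'|$. The size-$c$ class contributes $|j-j'|$. The size-$b$ class contributes $|h-h'|$. The finger contributes at most $2$: it is $0$ if the two finger sizes coincide, and otherwise one finger is deleted and the other inserted.

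For the branch step $(S_{i,j},S_{i,j-1})$, the $a$-count is unchanged and the $c$-count drops by one. Write $h(i,j)=\lfloor (M-d-ia-jc)/b\rfloor$. Passing from $j$ to $j-1$ increases the numerator by $c$. Since $b<c\le 2\mu<2b$ (using $d\le 2a$ and $a<b$), the floor increases by between $\lfloor c/b\rfloor=1$ and $\lceil c/b\rceil = 2$. So $|h-h'|\le 2$.

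For the spine step $(S_{i,Q},S_{i-1,Q})$, the $a$-count changes by one and the numerator increases by $a<b$. Hence the floor changes by at most $1$.

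Adding the contributions: the branch step gives $1+2+2=5$, and the spine step gives at most $1+1+2=4\le 5$.

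The only step needing care is the $b$-count change in the branch case. There I use the general fact that $\lfloor (x+c)/b\rfloor-\lfloor x/b\rfloor\le \lceil c/b\rceil$, combined with $c<2b$.

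One check is whether $M-d-ia-jc\ge 0$, so that the floors behave as stated. This holds because $i\le P$, $j\le Q$ and \cref{eq:total-memory} give $M-d-ia-jc=(P-i)a+(Q-j)c\ge 0$.
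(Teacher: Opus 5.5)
Your proof is correct and matches the paper's argument: count multiplicity changes class by class, giving $1$ for the size-$c$ objects, at most $2$ for the size-$b$ objects (via $c<2b$), and at most $2$ for the two finger objects, for a total of $5$. Your explicit treatment of the spine step gives the sharper bound $4$, since the numerator of $h$ increases by $a<b$; this spells out the case the paper dismisses as ``the same argument.''
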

  \begin{proof}
We prove the first claim only (the second claim follows from the same argument).
  By \cref{defn:sizeprof}, $S_{i,j}$ has  one more size-$c$ object than $S_{i,j-1}$, and the same number of size-$a$ objects. The difference between the counts of size-$b$ objects is \[0\le h(i,j-1)-h(i,j) =
 \lfloor (M-d-ia-jc+c)/{b}\rfloor - \lfloor (M-d-ia-jc)/{b}\rfloor \le 2,\] 
 where the last step is due to $c/b\le 2\mu/\mu =2$. Finally, $S_{i,j}$ and $S_{i,j-1}$ each have a finger object of possibly different size. Summing these differences together gives $\delta (S_{i,j},S_{i,j-1})\le 1 + 2 + 2 =  5$.
  \end{proof}

\subsubsection{Fixing a deterministic allocator}
\label{subsub:yao}
We will use Yao's principle to find a good deterministic allocator, and then analyze its behavior. 
Before that, we make a few more definitions.

In a consecutive sequence of $\delta$ updates, if the allocator incurs switching costs $C_1,C_2,\dots,C_\delta$, respectively, then we can relate the total squared switching cost to the total switching cost by Cauchy--Schwarz inequality:
\begin{equation}
   \label{eqn:CS}
 (C_1+\dots + C_\delta)^2 \le \delta(C_1^2+\dots +C_\delta^2).
\end{equation}

\begin{definition}[Memory state $\phi_{i,j}$ and total squared switching cost $\kappa(\cdot,\cdot)$]
For a deterministic allocator $\caA$, we use $\phi_{i,j}^{\caA}$ to denote the memory state after running $\caA$ on the update sequence induced by the size profile sequence $\sigma_{i,j}$ (which ends at $S_{i,j}$; see \cref{defn:sigma}).

If $S_{i',j'}$ is the parent of $S_{i,j}$ on the tree $\caT$, then we define $\kappa^{\caA}(\phi^{\caA}_{i',j'},\phi^{\caA}_{i,j})$ to be the total squared switching cost incurred by $\caA$ when transforming state $\phi^{\caA}_{i',j'}$  to state $\phi^{\caA}_{i,j}$.\end{definition}

Using Cauchy--Schwarz inequality (\cref{eqn:CS}) and \cref{obs:diff-lb-switchcost}, we have
\begin{equation}
 \Delta(\phi^{\caA}_{i',j'},\phi^{\caA}_{i,j})^2 \le  \delta(S_{i',j'},S_{i,j}) \cdot  \kappa^{\caA}(\phi^{\caA}_{i',j'},\phi^{\caA}_{i,j}).
 \label{eqn:tempcs}
\end{equation}

Suppose there is a randomized allocator $\caA^{\mathrm{rand}}$ (i.e., a probability distribution over deterministic allocators $\caA$) with worst-case expected squared switching cost at most $F^2\mu^2$ on each update. 
If $S_{i',j'}$ is the parent of $S_{i,j}$ on the tree $\caT$, then running $\caA\sim \caA^{\mathrm{rand}}$ on the deterministic update sequence induced by $\sigma_{i,j}$ yields
\[
 \Ex_{\caA \sim\caA^{\mathrm{rand}} } [\kappa^{\caA}(\phi^{\caA}_{i',j'},\phi^{\caA}_{i,j})] \le \delta(S_{i',j'},S_{i,j})\cdot F^2\mu^2,\] %
which, combined with \cref{eqn:tempcs}, gives
\begin{equation}
 \Ex_{\caA \sim\caA^{\mathrm{rand}} } [\Delta(\phi^{\caA}_{i',j'},\phi^{\caA}_{i,j})^2] \le \delta(S_{i',j'},S_{i,j})^2\cdot F^2\mu^2\le 25F^2\mu^2, 
\label{eqn:expsqeach}
\end{equation}
where the last step is due to \cref{obs:deltaconstant}.

Summing \cref{eqn:expsqeach} along the spine of $\caT$ gives
\begin{equation}
   \label{eqn:randsump}
 \Ex_{\caA \sim\caA^{\mathrm{rand}} } \left [\sum_{i=1}^P\Delta(\phi^{\caA}_{i-1,Q},\phi^{\caA}_{i,Q})^2\right ] \le 25PF^2\mu^2.
\end{equation}
Summing \cref{eqn:expsqeach} over all edges on all the $(P+1)$ branches of $\caT$ gives 
\begin{equation}
   \label{eqn:randsumpq}
 \Ex_{\caA \sim\caA^{\mathrm{rand}} } \left [\sum_{i=0}^P\sum_{j=0}^{Q-1}\Delta(\phi^{\caA}_{i,j+1},\phi^{\caA}_{i,j})^2\right ] \le 25(P+1)QF^2\mu^2.
\end{equation}
By adding \cref{eqn:randsump} and $Q^{-1}$ times \cref{eqn:randsumpq}, applying the linearity of expectation, followed by averaging (Yao's principle), we conclude that there exists a deterministic allocator $\caA$ such that
\[ \sum_{i=1}^P\Delta(\phi^{\caA}_{i-1,Q},\phi^{\caA}_{i,Q})^2 + \frac{1}{Q}\sum_{i=0}^P\sum_{j=0}^{Q-1}\Delta(\phi^{\caA}_{i,j+1},\phi^{\caA}_{i,j})^2 \le 25(2P+1)F^2\mu^2 \le 100PF^2\mu^2.\]
In the following, we fix this deterministic allocator $\caA$, and drop the superscript $\caA$ for simplicity.
In particular, the inequality above implies
\begin{equation}
   \label{eqn:sumP}
  \sum_{i=1}^P \Delta(\phi_{i-1,Q},\phi_{i,Q})^2 \le 100PF^2\mu^2,
\end{equation}
and
\begin{equation}
   \label{eqn:sumPQ}
  \sum_{i=0}^{P} \sum_{j=0}^{Q-1}\Delta(\phi_{i,j+1},\phi_{i,j})^2 \le 100P QF^2\mu^2.
\end{equation}

The rest of our proof amounts to showing that the above two inequalities would lead to a contradiction in a certain parameter regime. We summarize this as the following lemma:
\begin{lemma}
   \label{lem:goal}
  Assume
\begin{align}
   Q^2 & \ge 5\sqrt{P}F,
   \label{eqn:assu1}
   \\
   \sqrt{P} &\ge 800QF.
   \label{eqn:assu3}
\end{align}
Then, there cannot exist memory states $\{\phi_{i,j}\}_{0\le i\le P, 0\le j\le Q}$ such that $\phi_{i,j}$ corresponds to the size profile $S_{i,j}$ and both \cref{eqn:sumP,eqn:sumPQ} hold.
\end{lemma}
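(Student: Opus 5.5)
The plan follows the three-step strategy from the overview. Throughout, write $\Delta^{\mathrm{sp}}_i\coloneqq\Delta(\phi_{i-1,Q},\phi_{i,Q})$ for spine steps and $\Delta^{\mathrm{br}}_{i,j}\coloneqq\Delta(\phi_{i,j+1},\phi_{i,j})$ for branch steps, and let $p_{i,j}$ be the location of the finger object in $\phi_{i,j}$.

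\textbf{Basic tool.} By \cref{lem:finger} and the corollary after it, between two adjacent states exactly one maximal changed interval $I$ contains the finger in both states. Every other maximal changed interval only permutes its objects. Comparing the tuples $(k_1,k_2,k_3,k_4)$ of \cref{cor:sizestructure} for $I$, the number of size-$a$ (resp.\ size-$c$) objects in $I$ changes exactly by the change of $i$ (resp.\ $j$). Hence for any potential of the form $\sum_{\text{size-}a\text{ objects}} g(\text{location})$ with $g$ $1$-Lipschitz, a permuting interval of length $L$ changes the potential by at most $(L/\mu)\cdot L$. Since the intervals are disjoint, the total over all intervals is at most $\Delta^2/\mu$, up to an additive term for the finger interval. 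This is the same calculation as in the proof of \cref{prop:warmup}.

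\textbf{Step 1 ($c$-objects cluster near the finger).} Fix a spine node $i$ and define $\Psi_{i,j}\coloneqq\sum_{\text{size-}c\text{ objects}}|\mathrm{loc}-p_{i,j}|$ in $\phi_{i,j}$. Then $\Psi_{i,0}=0$. Along a branch step, $|\Psi_{i,j+1}-\Psi_{i,j}|\le (\Delta^{\mathrm{br}}_{i,j})^2/\mu+Q\,|p_{i,j+1}-p_{i,j}|+O(\Delta^{\mathrm{br}}_{i,j})$, and the finger displacement is at most $\Delta^{\mathrm{br}}_{i,j}$ because both fingers lie in the same changed interval. Summing over $j$ and then over $i$, and applying Cauchy--Schwarz together with \cref{eqn:sumPQ} ($\sum_{i,j}\Delta^{\mathrm{br}}_{i,j}\le\sqrt{(P+1)Q\cdot 100PQF^2\mu^2}=O(PQF\mu)$), gives $\sum_i \Psi_{i,Q}\le O(PQF^2\mu)+O(PQ^2F\mu)=O(PQ^2F\mu)$ under \cref{eqn:assu3}. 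So on average over $i$, the $Q$ size-$c$ objects lie within average distance $O(QF\mu)$ of the finger in $\phi_{i,Q}$.

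\textbf{Step 2 (the finger barely moves along the spine).} Let $\Gamma_i$ be the signed sum of the locations of the $Q$ size-$c$ objects in $\phi_{i,Q}$. Along spine steps the $c$-count is the same in every interval, including the finger interval, so the basic tool gives $\sum_i|\Gamma_i-\Gamma_{i-1}|\le\sum_i(\Delta^{\mathrm{sp}}_i)^2/\mu\le 100PF^2\mu$ by \cref{eqn:sumP}. Since $|\Gamma_i-Qp_{i,Q}|\le\Psi_{i,Q}$, dividing by $Q$ gives $|p_{i,Q}-p_{i',Q}|\le 100PF^2\mu/Q+(\Psi_{i,Q}+\Psi_{i',Q})/Q$. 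Combining with Step 1 via Markov, at least half of the spine indices $i$ have the finger within a window of width $W=O(PF^2\mu/Q+PQF\mu\cdot\text{const}/P)$ around a fixed center $p^\star$. To also control the remaining indices, I will instead bound the telescoped drift directly through consecutive steps. There $|p_{i,Q}-p_{i-1,Q}|\le\Delta^{\mathrm{sp}}_i$, and Cauchy--Schwarz on \cref{eqn:sumP} gives $\sum_i\Delta^{\mathrm{sp}}_i\le 10PF\mu$. I expect \cref{eqn:assu1} and \cref{eqn:assu3} to be exactly what makes $W\ll M/\!\,(\text{const})\asymp P\mu$.

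\textbf{Step 3 (contradiction via $a$-potential).} Let $\Phi_i\coloneqq\sum_{\text{size-}a\text{ objects}}|\mathrm{loc}-p^\star|$ in $\phi_{i,Q}$. Then $\Phi_0=0$, while $\Phi_P\ge\Omega(P^2\mu)$, since $P$ objects of size $\mu$ cannot all sit within distance $P\mu/8$ of one point. In each spine step, the permuting intervals contribute at most $(\Delta^{\mathrm{sp}}_i)^2/\mu$. The finger interval gains one $a$-object and rearranges at most $\Delta^{\mathrm{sp}}_i/\mu+1$ of them, each at distance at most $|p_{i,Q}-p^\star|+\Delta^{\mathrm{sp}}_i$ from $p^\star$, contributing $O\big((\Delta^{\mathrm{sp}}_i/\mu+1)(|p_{i,Q}-p^\star|+\Delta^{\mathrm{sp}}_i)\big)$. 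Summing, using \cref{eqn:sumP} and Step 2, gives $\Phi_P\le O(PF^2\mu+PW+F\cdot PW)$. With the assumed parameter relations this is $o(P^2\mu)$, which is a contradiction.

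\textbf{Main obstacle.} I expect Step 2 to be the hardest part: converting the averaged clustering from Step 1 into a uniform bound on the finger's location for all (or enough) spine indices. In particular, the factor $\Delta/\mu+1$ on the finger interval in Step 3 multiplies the distance $|p_{i,Q}-p^\star|$, so I need a bound on that distance even at the bad indices. The first thing to try is summing Steps 1--3 with weights, rather than applying Markov, so that only averages of $\Psi_{i,Q}/Q$ enter.
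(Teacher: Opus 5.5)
There is a genuine gap, and the decisive one is in Step~3. Steps~1--2 correctly give $\sum_i\Psi_{i,Q}=O(PQ^2F\mu)$ and $|p_{i,Q}-p_{i',Q}|\le 100PF^2\mu/Q+(\Psi_{i,Q}+\Psi_{i',Q})/Q$. However, this controls the finger only for most $i$, and your fallback $\sum_i\Delta^{\mathrm{sp}}_i\le 10PF\mu$ is vacuous, since it exceeds $M\le 3P\mu$.

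Worse, even uniform control would not rescue Step~3 as written. Your estimate $O\bigl((\Delta^{\mathrm{sp}}_i/\mu+1)(|p_{i,Q}-p^\star|+\Delta^{\mathrm{sp}}_i)\bigr)$ contains the cross term $(\Delta^{\mathrm{sp}}_i/\mu)\,|p_{i,Q}-p^\star|$, whose sum you can only bound by $10PF\cdot W$. Both \cref{eqn:assu1} and \cref{eqn:assu3} hold with equality at $Q=4000F^2$, $P=160Q^3$. There, the window your bound certifies is at least $100PF^2\mu/Q=P\mu/40$, and the paper's uniform window $4\rho$ from \cref{lem:fingerclose} is $P\mu/20$. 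So this term is bounded only by $FP^2\mu/4$, which never beats the lower bound $\Phi_P\ge P(P-1)\mu/4$. The claim that your total is $o(P^2\mu)$ is false.

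The paper avoids both problems. First, it gets worst-case clustering by comparing $\phi_{i,Q}$ with $\phi_{i,0}$ directly rather than step by step (\cref{lem:c-clustered}). Since $\phi_{i,0}$ has no size-$c$ objects, \cref{lem:finger} puts every size-$c$ object of $\phi_{i,Q}$ in the one finger-containing maximal changed interval. That interval has length at most $\Delta(\phi_{i,Q},\phi_{i,0})\le\sum_j\Delta^{\mathrm{br}}_{i,j}\le\sqrt{Q\cdot 100PQF^2\mu^2}=\rho$, bounding a single branch by all of \cref{eqn:sumPQ}. With $|\Gamma_0-\Gamma_i|\le 100PF^2\mu$ and \cref{eqn:assu1}, this gives $|p_{i,Q}-p_{0,Q}|\le 4\rho$ for every $i$.

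Second, it kills the cross term with a potential vanishing on $[p_{0,Q}-8\rho,\,p_{0,Q}+8\rho)$. A positive-potential $a$-object in the finger interval forces $R_f-L_f\ge 4\rho\ge|p_{i,Q}-p_{0,Q}|$. Hence all potentials there are at most $2(R_f-L_f)$, and that interval costs only $O((R_f-L_f)^2/\mu)$.

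Alternatively, your averaged route does work if you also pair the $a$-objects inside the finger interval; any pairing will do, since all lie in $[L_f,R_f)$. Only the new $a$-object is then unpaired, so $|\Phi_i-\Phi_{i-1}|\le(\Delta^{\mathrm{sp}}_i)^2/\mu+\Delta^{\mathrm{sp}}_i+|p_{i,Q}-p^\star|$. Now only $\sum_i|p_{i,Q}-p^\star|$ enters. Take $p^\star\coloneqq p_{i_0,Q}$ with $i_0$ minimizing $\Psi_{i_0,Q}$; your Steps~1--2 then bound this sum by $(P+1)\bigl(P\mu/40+O(QF\mu)\bigr)$. Since $QF\le\sqrt{P}/800$, this and the remaining $O(P^{3/2}\mu)$ terms stay below $P(P-1)\mu/4$.
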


We now quickly derive \cref{thm:squaredlowerbound-discrete} from \cref{lem:goal}.
\begin{proof}[Proof of \cref{thm:squaredlowerbound-discrete} assuming \cref{lem:goal}]
Given $M_0\ge 1$, we choose $F = \lceil M_0^{1/14}\rceil$.
Then, we set the parameters $P>Q\ge 1$ (used for the definitions in \cref{subsub:sizeprof,subsub:tree}) as follows to satisfy \cref{eqn:assu1} and \cref{eqn:assu3}:
\begin{align}
   Q &\coloneqq 4000 F^2,
   \label{eqn:assu4}\\
   P &\coloneqq 160 Q^3 = \Theta(F^6).
\label{eqn:assuP}
\end{align}
By \cref{eq:total-memory}, the number of memory slots in our construction is $M = \Theta(P^2 Q) = \Theta(F^{14}) \in \Theta(M_0)$,  and each object has size $\Theta(\mu) = \Theta(PQ)= \Theta(F^8) = \Theta(M^{4/7})$, as required. %

Let $\mathcal{A}^{\mathrm{rand}}$ be a randomized allocator for instances with $M$ memory slots and object sizes $\Theta(M^{4/7})$. If it achieves worst-case expected squared switching cost at most $F^2\mu^2$, then by earlier discussions in \cref{subsub:yao}, there exist memory states $\{\phi_{i,j}\}$ (corresponding to size profiles $S_{i,j}$) such that both \cref{eqn:sumP,eqn:sumPQ} hold, but this is impossible due to \cref{lem:goal}. Hence, $\mathcal{A}^{\mathrm{rand}}$ must have worst-case expected squared switching cost larger than $F^2\mu^2 \ge \Omega(M^{9/7})$, as claimed.
This finishes the proof of \cref{thm:squaredlowerbound-discrete}.
\end{proof}

\subsubsection{Analysis}

\label{subsub:mainproof}

It remains to prove \cref{lem:goal}. We will assume the claimed memory states $\{\phi_{i,j}\}_{0\le i\le P, 0\le j\le Q}$ exist, and derive a contradiction.
In doing so, we will crucially use the property of the finger object (\cref{lem:finger}).

\begin{definition}[$p_{i,j}$]
Let $p_{i,j}$ denote the location of the finger object in the memory state $\phi_{i,j}$.
\end{definition}

We first show that all the size-$c$ objects in the memory state $\phi_{i,Q}$ should be close to the finger object:
\begin{lemma}
   \label{lem:c-clustered}
Define a radius parameter
\begin{equation}
   \label{defn:rho}
   \rho \coloneqq 10\sqrt{P}QF\mu.
\end{equation}
Then,  for all $0\le i \le P$, in the memory state $\phi_{i,Q}$, the interval $[p_{i,Q}-\rho,\ p_{i,Q}+\rho)$ must fully contain all the $Q$ size-$c$ objects.
\end{lemma}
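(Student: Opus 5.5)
The plan is to work along the single branch $(S_{i,Q},S_{i,Q-1},\dots,S_{i,0})$ of $\caT$. Along it, all $Q$ size-$c$ objects are deleted one at a time, so the final state $\phi_{i,0}$ contains no size-$c$ object. Write $\Delta_s\coloneqq\Delta(\phi_{i,s+1},\phi_{i,s})$ for $0\le s\le Q-1$. Restricting \cref{eqn:sumPQ} to this branch gives $\sum_s \Delta_s^2\le 100PQF^2\mu^2$. Cauchy--Schwarz then bounds the total length of all maximal changed intervals over the whole branch:
\[ D\coloneqq \sum_{s=0}^{Q-1}\Delta_s \le \sqrt{Q\cdot 100PQF^2\mu^2} = 10\sqrt P QF\mu=\rho. \]
The rest of the argument is geometric and uses only $D\le\rho$ and \cref{lem:finger}.

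Let $E$ be the union, over all $Q$ steps of the branch, of all maximal changed intervals of that step, so $E$ has total measure at most $D\le\rho$. Call the changed interval of a step that contains the finger object its \emph{finger interval}; by the corollary of \cref{lem:finger} it contains the finger in both states. Hence the finger intervals of steps $s$ and $s+1$ both contain $p_{i,s}$. The union $U$ of all finger intervals is therefore a single interval containing $[p_{i,Q},p_{i,Q}+f)$. I will pick points $q'<p_{i,Q}<q$ with $q,q'\notin E$, $q\le p_{i,Q}+\rho$ and $q'\ge p_{i,Q}-\rho$. Such points exist by a measure count: $E\cap[p_{i,Q}-\rho,p_{i,Q}+\rho]$ has measure at most $\rho$. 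It cannot cover all of $(p_{i,Q},p_{i,Q}+\rho]$ unless it has (essentially) nothing on the left side, and symmetrically. The boundary/measure-zero cases are handled by the slack in the inequalities, or by the integrality of locations together with the fact that the finger itself already occupies length $f\ge\mu$ on the right of $p_{i,Q}$. Since $q,q'\notin E\supseteq U$ and $U$ is an interval containing $p_{i,Q}$, every finger interval lies strictly between $q'$ and $q$.

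Next I will show that the number $N_s(q)$ of size-$c$ objects with location $\ge q$ in $\phi_{i,s}$ is the same for all $s$. In each step, $q$ is not in the interior of any changed interval, so every changed interval lies entirely to one side of $q$. The intervals to the right of $q$ are non-finger intervals. By \cref{lem:finger} they preserve the multiset of sizes inside them, so the count of $c$-objects right of $q$ is unchanged. Unchanged objects keep their locations. A size-$c$ object straddling $q$ would have to be unchanged in every step, and so would survive to $\phi_{i,0}$, which has no size-$c$ objects; hence no size-$c$ object straddles $q$. Since $N_0(q)=0$, we get $N_Q(q)=0$, so in $\phi_{i,Q}$ every size-$c$ object ends at or before $q\le p_{i,Q}+\rho$. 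The symmetric argument at $q'$ shows every size-$c$ object starts at or after $q'\ge p_{i,Q}-\rho$, which proves the lemma.

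The main obstacle I expect is the middle step: showing cleanly that good cut points $q,q'$ exist on both sides within distance $\rho$. This needs the connectivity of the union of finger intervals, so that no deletion can happen on the far side of $q$. It also needs care at the tight measure bound $D\le\rho$. If the boundary case proves annoying, I would fall back on a constant-factor slack, since the constant $10$ in $\rho$ is presumably chosen to absorb it.
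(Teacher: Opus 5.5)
Your argument is correct, but it takes a longer route than the paper. The paper does no geometric reasoning about intermediate steps. Since \cref{lem:finger} holds for any pair of profiles $S_{i,j},S_{i',j'}$, it applies the lemma directly to the two endpoints $\phi_{i,Q}$ and $\phi_{i,0}$ of the branch. Every size-$c$ object of $\phi_{i,Q}$ is changed, because $\phi_{i,0}$ has none. The maximal changed interval containing such an object cannot be finger-free, since otherwise $\phi_{i,0}$ would have a size-$c$ object there. Hence all $Q$ of them lie in the single finger-containing interval $[L,R)\ni p_{i,Q}$. Its length is at most $\Delta(\phi_{i,Q},\phi_{i,0})\le\sum_j\Delta(\phi_{i,j+1},\phi_{i,j})\le\rho$, using the triangle inequality for $\Delta$ and the same Cauchy--Schwarz step you use.

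You instead apply \cref{lem:finger} only to adjacent states. In place of the triangle inequality you use the union $E$, the cut points $q'<p_{i,Q}<q$, connectivity of the finger intervals, and invariance of the count $N_s(q)$. In exchange you get a slightly stronger, uniform conclusion: two walls within distance $\rho$ of $p_{i,Q}$ that no changed interval crosses anywhere on the branch. So every intermediate $\phi_{i,s}$ also keeps its size-$c$ objects in $[q',q)$. The price is the boundary bookkeeping.

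Two small points need tightening.
\begin{itemize}
\item Every step must have a finger interval, i.e., the finger must be changed in each step. This holds because consecutive finger sizes on a branch differ: $b<c\le 2a<2b$, so $b\nmid c$.
\item For the right cut point $q$, the finger's footprint $[p_{i,Q},p_{i,Q}+f)$ gives no slack, since apart from its first slot it lies inside $(p_{i,Q},p_{i,Q}+\rho]$. What actually works is integrality: slot $p_{i,Q}$ lies in $E$ but outside $\{p_{i,Q}+1,\dots,p_{i,Q}+\lfloor\rho\rfloor\}$, leaving at most $\lfloor\rho\rfloor-1$ slots of $E$ there. Alternatively, the single-branch sum is strictly below the bound in \cref{eqn:sumPQ}, giving $D<\rho$. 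The footprint is what provides slack on the left side.
\end{itemize}
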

\begin{proof}
Compare the memory states $\phi_{i,Q}$ and $\phi_{i,0}$.
Since $\phi_{i,0}$ contains no size-$c$ objects, we know every size-$c$ object in $\phi_{i,Q}$ must be contained in some maximal changed interval $[L_i,R_i)$ between these two states; moreover, since $[L_i,R_i)$ in $\phi_{i,0}$ 
contains no size-$c$ objects, \cref{lem:finger} implies that $[L_i,R_i)$ must contain the finger objects in both memory states.
In other words, all the size-$c$ objects of $\phi_{i,Q}$ are fully inside the maximal changed interval $[L,R)$ that contains the finger objects. 
 Since $p_{i,Q}\in [L,R)$, %
it then suffices to show $R-L\le \rho$. 

By \cref{defn:maximalchanged}, $R-L \le \Delta(\phi_{i,Q},\phi_{i,0})$.
Then, a very crude application of \cref{eqn:sumPQ} gives
\begin{align*}
  100PQF^2\mu^2 &\ge  \sum_{j=0}^{Q-1}\Delta(\phi_{i,j+1},\phi_{i,j})^2 
  \qquad && \text{(by \cref{eqn:sumPQ})} \\
  & \ge \frac{1}{Q} \Big (\sum_{j=0}^{Q-1}\Delta(\phi_{i,j+1},\phi_{i,j})\Big )^2
  \qquad && \text{(Cauchy--Schwarz)} 
  \\
  & \ge \frac{\Delta(\phi_{i,Q},\phi_{i,0})^2}{Q} \qquad && \text{(triangle inequality)} \\
  & \ge \frac{(R-L)^2}{Q}.
\end{align*}
Therefore, $(R-L)^2 \le 100 PQ^2F^2\mu^2 \le \rho^2$ as claimed.  
\end{proof}

The following lemma intuitively says that it is expensive to move all of the size-$c$ objects by a large total distance.
\begin{lemma}
  \label{lem:c-position-sum}
  Let $\lambda_i$ denote the sum of the locations of all the $Q$ size-$c$ objects in $\phi_{i,Q}$. Then,
for every $1\le i \le P$,
\[|\lambda_{i-1}-\lambda_{i}| \le \Delta(\phi_{i-1,Q},\phi_{i,Q})^2/\mu.\]

In particular, summing up this inequality and comparing with \cref{eqn:sumP} imply
  \[ |\lambda_0-\lambda_i| \le \sum_{i'=1}^{i} |\lambda_{i'-1}-\lambda_{i'}| \le 100 PF^2\mu \]
  for all $0\le i\le P$.
\end{lemma}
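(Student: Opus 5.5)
The plan is to compare $\phi_{i-1,Q}$ and $\phi_{i,Q}$ through their maximal changed intervals, as in the warm-up proof of \cref{prop:warmup}, and to track only the size-$c$ objects.

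Unchanged size-$c$ objects sit at the same location in both states, so they contribute nothing to $\lambda_{i-1}-\lambda_i$. Every changed size-$c$ object lies in some maximal changed interval $[L,R)$. Moreover, $[0,M)$ is partitioned into unchanged objects and maximal changed intervals, so $\lambda_{i-1}-\lambda_i$ splits as a sum over the maximal changed intervals. For each interval $[L,R)$ I will write $\lambda^{(L,R)}$ and $\lambda'^{(L,R)}$ for the sums of locations of the size-$c$ objects inside it, in $\phi_{i-1,Q}$ and $\phi_{i,Q}$ respectively.

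\textbf{Step 1 (equal counts).} Both profiles $S_{i-1,Q}$ and $S_{i,Q}$ contain exactly $Q$ size-$c$ objects. If $[L,R)$ does not contain the finger object in $\phi_{i-1,Q}$, then \cref{lem:finger} says the two multisets of sizes in $[L,R)$ coincide, so the size-$c$ counts agree. There is at most one interval containing the finger object, and by the corollary of \cref{lem:finger} it is the same interval in both states. Since the totals over all of $[0,M)$ agree (both are $Q$) and all other intervals agree, the counts in this remaining interval agree as well. This global counting step, needed because \cref{lem:finger} does not directly cover the finger interval, is the only slightly delicate point.

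\textbf{Step 2 (bounding each interval).} Suppose $[L,R)$ holds $m$ size-$c$ objects in each state. Then $m\le (R-L)/c\le (R-L)/\mu$. Pair the objects of the two states arbitrarily. Each pair has location difference less than $R-L$, because both locations lie in $[L,R)$. Hence
\[
|\lambda^{(L,R)}-\lambda'^{(L,R)}|\le (R-L)^2/\mu .
\]

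\textbf{Step 3 (summing).} Summing over the maximal changed intervals and using $\sum_k x_k^2\le(\sum_k x_k)^2$ for nonnegative $x_k$ gives
\[
|\lambda_{i-1}-\lambda_i|\le \Big(\sum (R-L)\Big)^2/\mu=\Delta(\phi_{i-1,Q},\phi_{i,Q})^2/\mu,
\]
where the equality holds because $\Delta$ equals the total length of the maximal changed intervals (\cref{defn:maximalchanged}).

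For the ``in particular'' part, apply the triangle inequality along the spine and then \cref{eqn:sumP}. The sum runs over $i'\le i\le P$ with nonnegative terms, so it is at most the full sum $\sum_{i'=1}^P\Delta(\phi_{i'-1,Q},\phi_{i',Q})^2/\mu$. By \cref{eqn:sumP} this is at most $100PF^2\mu^2/\mu=100PF^2\mu$.
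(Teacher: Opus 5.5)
Your proposal is correct and follows essentially the same argument as the paper. Both use \cref{lem:finger} to get equal size-$c$ counts in every finger-free maximal changed interval, and a global count of $Q$ to cover the finger interval. Both then bound each interval's contribution by $m_k(R_k-L_k)\le (R_k-L_k)^2/\mu$, finish with $\sum_k (R_k-L_k)^2\le \Delta^2$, and use \cref{eqn:sumP} for the summed bound. The only cosmetic difference is that the paper pairs the size-$c$ objects by their global sorted order, whereas you pair them arbitrarily within each interval.
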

\begin{proof}
   Consider all the maximal changed intervals $[L_k,R_k)$ between $\phi_{i-1,Q}$ and $\phi_{i,Q}$. By \cref{lem:finger}, any  $[L_k,R_k)$ \emph{without} the finger objects must contain the same number of size-$c$ objects in $\phi_{i-1,Q}$ and $\phi_{i,Q}$. 
Since $\phi_{i-1,Q}$ and $\phi_{i,Q}$ have the same number of size-$c$ objects, by subtracting, we conclude that any $[L_k,R_k)$ (even if containing the finger objects) has the same number of size-$c$ objects in both memory states.  

   Let $q_1<q_2<\dots <q_Q$ and $q_1'<q_2'<\dots <q_Q'$ denote the locations of size-$c$ objects in $\phi_{i-1,Q}$ and $\phi_{i,Q}$, respectively.
   By the previous paragraph, the two objects at $q_j$ in $\phi_{i-1,Q}$ and at $q'_j$ in $\phi_{i,Q}$ either
   both belong to the same maximal changed interval, or are unchanged with $q_j=q'_j$.

We have 
   \[|\lambda_{i-1}-\lambda_i| = \left \lvert  \sum_{j=1}^Qq_j - \sum_{j=1}^{Q}q'_j\right \rvert \le \sum_{j=1}^Q| q_j-q'_j|.\]
For each non-zero term $|q_j-q'_j|$, if the two corresponding objects are in $[L_k,R_k)$, we bound this term by $|q_j-q'_j|\le R_k-L_k$.
Let $m_k$ denote the number of size-$c$ objects in $[L_k,R_k)$ in $\phi_{i-1,Q}$, which must satisfy $m_k \le (R_k-L_k)/c\le (R_k-L_k)/\mu$. Then,  we obtain the upper bound
 \[ \sum_{j=1}^Q|q_j-q'_j| \le \sum_{k} m_k (R_k - L_k) \le \sum_k (R_k-L_k)^2/\mu.\]
Since the total length of maximal changed intervals equals the difference between two memory states (\cref{defn:maximalchanged}), we have 
  \[  \sum_{k}(R_k-L_k)^2  \le \Big ( \sum_{k}(R_k-L_k)\Big )^2 = \Delta(\phi_{i-1,Q},\phi_{i,Q})^2.\]
  Chaining the three displayed inequalities finishes the proof.
\end{proof}

The previous two lemmas together imply that the finger object should always be confined to a small interval:
\begin{lemma}
   \label{lem:fingerclose}
  For every $0\le i\le P$,  $|p_{0,Q} - p_{i,Q}| \le 4\rho$.
\end{lemma}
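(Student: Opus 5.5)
By \cref{lem:c-clustered}, in every state $\phi_{i,Q}$ all $Q$ size-$c$ objects lie in the window $[p_{i,Q}-\rho,\,p_{i,Q}+\rho)$. Each such object's location $q$ therefore satisfies $|q-p_{i,Q}|\le\rho$. The idea is to compare the average location $\lambda_i/Q$ of the size-$c$ objects with the finger position $p_{i,Q}$, and then use \cref{lem:c-position-sum} to show that this average barely moves along the spine.

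First, I average the pointwise bound over the $Q$ objects. Summing gives $|\lambda_i - Q p_{i,Q}| \le Q\rho$, so
\[ \Bigl|p_{i,Q} - \tfrac{\lambda_i}{Q}\Bigr| \le \rho \quad \text{for every } 0\le i\le P. \]
Applying this at index $i$ and at index $0$ and using the triangle inequality gives
\[ |p_{0,Q}-p_{i,Q}| \le 2\rho + \frac{|\lambda_0-\lambda_i|}{Q}. \]

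Next, the "in particular" part of \cref{lem:c-position-sum} gives $|\lambda_0-\lambda_i|\le 100PF^2\mu$. It then suffices to show
\[ \frac{100PF^2\mu}{Q} \le 2\rho = 20\sqrt{P}QF\mu, \]
which is equivalent to $5\sqrt{P}F \le Q^2$. This is exactly assumption \cref{eqn:assu1} of \cref{lem:goal}, so the two terms combine to give $|p_{0,Q}-p_{i,Q}|\le 4\rho$.

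There is no real obstacle beyond matching the constants. The one point to check is that \cref{lem:c-clustered} places the whole objects, not just their left endpoints, in the window. Since locations are left endpoints and the window is half-open, every location is still within distance $\rho$ of $p_{i,Q}$, so the pointwise bound used above holds.
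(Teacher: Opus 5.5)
Your proof is correct and uses the same ingredients as the paper's: \cref{lem:c-clustered} to place every size-$c$ object within distance $\rho$ of the finger, the bound $|\lambda_0-\lambda_i|\le 100PF^2\mu$ from \cref{lem:c-position-sum}, and assumption \cref{eqn:assu1} to make the constants work. The only difference is presentation: the paper argues by contradiction (if the fingers were more than $4\rho$ apart, every size-$c$ object would shift by more than $2\rho$, forcing $|\lambda_0-\lambda_i|>2Q\rho$), whereas you argue directly through the centroid $\lambda_i/Q$, which is the same computation rearranged.
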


\begin{proof}
By \cref{lem:c-clustered}, all size-$c$ objects in $\phi_{i,Q}$ are fully inside $[p_{i,Q}-\rho,\,p_{i,Q}+\rho)$. Similarly, all size-$c$ objects in $\phi_{0,Q}$ are fully inside $[p_{0,Q}-\rho,\,p_{0,Q}+\rho)$.

Suppose to the contrary that $|p_{0,Q}-p_{i,Q}|>4\rho$; here, assume $p_{0,Q}-p_{i,Q}>4\rho$ without loss of generality (the other case  follows similarly).
Then, for any two size-$c$ objects in $\phi_{0,Q}$ and in $\phi_{i,Q}$, at locations $q$ and $q'$ respectively, it holds that $q-q'\ge  (p_{0,Q}-\rho) - (p_{i,Q}+\rho) > 2\rho$.
Thus, \[\lambda_0-\lambda_i> Q\cdot 2\rho = 80\sqrt{P}Q^2F\mu.\]
On the other hand, \cref{lem:c-position-sum} states that \[|\lambda_0-\lambda_i| \le 100PF^2\mu.\]
Together, we get $20\sqrt{P}Q^2F\mu < 100PF^2\mu$, which contradicts our assumption \cref{eqn:assu1} that  
   $Q^2  \ge 5\sqrt{P}F$. Hence, we must have $|p_{0,Q}-p_{i,Q}|\le 4\rho$.
\end{proof}

Using \cref{lem:fingerclose}, the final plan is to show that it is expensive to insert all the $P$ size-$a$ objects, as a large fraction of them will end up far from the finger object.
To formalize this intuition, we need to define a suitable potential for the size-$a$ objects, and show that the total potential cannot change significantly.

\begin{definition}[$\Phi(\cdot)$]
   Define interval $[L^*,R^*)\coloneqq 
[p_{0,Q}-8\rho,\ p_{0,Q}+8\rho)$.
   If a size-$a$ object has location $q$, we say its \emph{potential} is 
   \[
     \Phi(q) \coloneqq \begin{cases}
       q-R^* & \text{if }q> R^*,\\
       L^*-q & \text{if }q\le L^*,\\
       0 & \text{otherwise.}
     \end{cases}
   \]
   Note that $\Phi(q)\ge 0$, and 
   \begin{equation}
      \label{eqn:phicontract}
|\Phi(q)-\Phi(q')|\le |q-q'|.
   \end{equation}
\end{definition}

\begin{lemma}
   \label{lem:a-sum}

  For $0\le i\le P$, let $\eta_i$ denote the sum of potentials $\Phi(\cdot)$ of all the size-$a$ objects in $\phi_{i,Q}$.
Then, $|\eta_{i-1} - \eta_{i}| \le
4\Delta(\phi_{i-1,Q},\phi_{i,Q})^2/\mu$.
\end{lemma}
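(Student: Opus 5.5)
We follow the template of the proof of \cref{lem:c-position-sum}, but now the numbers of size-$a$ objects differ by one between $\phi_{i-1,Q}$ and $\phi_{i,Q}$ (the latter has $i$ of them). Consider the maximal changed intervals $[L_k,R_k)$ between the two states. By \cref{lem:finger}, every interval not containing the finger objects holds the same multiset of sizes in both states, and in particular the same number of size-$a$ objects. The unique interval $[L_f,R_f)$ containing the finger objects may hold different numbers of size-$a$ objects. It contains at most $(R_f-L_f)/\mu$ of them in either state.

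We bound the contributions to $|\eta_{i-1}-\eta_i|$ interval by interval. Unchanged size-$a$ objects contribute zero. Inside a non-finger interval $[L_k,R_k)$, we pair the size-$a$ objects of the two states in sorted order. Each pair is at distance less than $R_k-L_k$, so by the contraction property \cref{eqn:phicontract} each pair contributes at most $R_k-L_k$. There are at most $(R_k-L_k)/\mu$ pairs, giving at most $(R_k-L_k)^2/\mu$ per interval.

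For the finger interval we cannot pair everything. Instead we bound the absolute potential of every size-$a$ object in it, in both states, by the distance from that object to the interval $[L^*,R^*)$. The key observation is that $[L_f,R_f)$ contains $p_{i-1,Q}$ and $p_{i,Q}$, and by \cref{lem:fingerclose} both lie within $4\rho$ of $p_{0,Q}$, hence deep inside $[L^*,R^*)$. So any point $q\in[L_f,R_f)$ satisfies $\Phi(q)\le R_f-L_f$, because $\Phi(q)\le |q-p|$ for a point $p\in[L^*,R^*)$ inside the interval. The finger interval therefore contributes at most $2\cdot\frac{R_f-L_f}{\mu}\cdot(R_f-L_f)$, where the factor $2$ covers both states. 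A tiny subtlety is whether $\Phi$ measures from the object's left endpoint, which is harmless here. One could also pair within the finger interval as far as possible, but the crude bound suffices.

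Summing, $|\eta_{i-1}-\eta_i|\le \sum_k 2(R_k-L_k)^2/\mu \le 2\bigl(\sum_k (R_k-L_k)\bigr)^2/\mu = 2\Delta(\phi_{i-1,Q},\phi_{i,Q})^2/\mu$. This is within the claimed constant $4$; the slack absorbs off-by-one issues, such as the boundary convention $q\le L^*$ versus $q>R^*$ and the object length. The main step to get right is the finger interval: we need that it meets $[L^*,R^*)$, which is exactly where \cref{lem:fingerclose} and the choice of radius $8\rho$ are used. Membership of $p_{i,Q}$ in $[L_f,R_f)$ holds because the finger object is changed or, if unchanged, its location is the same in both states. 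In the unchanged case there is no finger interval, and all intervals are paired with equal counts, so that case is trivial.
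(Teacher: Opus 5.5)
Your proposal is correct and follows the paper's proof. It uses the same decomposition into maximal changed intervals, equal size-$a$ counts outside the finger interval via \cref{lem:finger}, sorted pairing with \eqref{eqn:phicontract}, and a nonnegativity bound on the finger interval using \cref{lem:fingerclose}. The only difference is in the finger interval: you note $\Phi(p_{i,Q})=0$ (as $|p_{i,Q}-p_{0,Q}|\le 4\rho$) to get $\Phi(q)\le|q-p_{i,Q}|<R_f-L_f$ directly. The paper instead routes through $p_{0,Q}$ and absorbs the extra $4\rho$ via the case $R_f-L_f\ge 4\rho$, so your version is slightly cleaner and yields constant $2$ instead of $4$. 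Your ``unchanged finger'' case is in fact vacuous, since the size-$a$ counts differ by one.
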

\begin{proof}
    Consider all the maximal changed intervals $[L_k,R_k)$ between $\phi_{i-1,Q}$ and $\phi_{i,Q}$. By \cref{lem:finger}, any  $[L_k,R_k)$ \emph{without} the finger objects must contain the same number of size-$a$ objects in $\phi_{i-1,Q}$ and $\phi_{i,Q}$.  
    Let $[L_f,R_f)$ be the maximal changed interval that contains the finger objects in $\phi_{i-1,Q}$ and $\phi_{i,Q}$, whose locations are $p_{i-1,Q}$ and $p_{i,Q}$ respectively. 
    Since $\phi_{i,Q}$ has one more size-$a$ object than $\phi_{i-1,Q}$, we know $[L_f,R_f)$ contains one more size-$a$ object in $\phi_{i,Q}$ than in $\phi_{i-1,Q}$.
    
    Let $q_1<q_2<\dots < q_r$ and $q'_1<q'_2<\dots < q'_r$ denote the locations of size-$a$ objects that are outside $[L_f,R_f)$ in $\phi_{i-1,Q}$ and $\phi_{i,Q}$, respectively. Let $q_{r+1}<q_{r+2}<\dots< q_{i-1}$ and $q'_{r+1}<q'_{r+2}<\dots< q'_i$ denote the locations of size-$a$ objects inside $[L_f,R_f)$ in $\phi_{i-1,Q}$ and $\phi_{i,Q}$, respectively.
   By the previous paragraph, for $j\in [r]$, the two objects at $q_j$ and at $q'_j$  either
   both belong to the same maximal changed interval $[L_k,R_k)$ (for some $k\neq f$), or are unchanged with $q_j=q'_j$.

   To bound $|\eta_{i-1}-\eta_i|$, we write
   \begin{align}
   |\eta_{i-1}-\eta_i| = \left \lvert  \sum_{j=1}^{i-1} \Phi(q_j) - \sum_{j=1}^{i}\Phi(q'_j)\right \rvert &\le \sum_{j=1}^r| \Phi(q_j)-\Phi(q'_j)| + \sum_{j=r+1}^{i-1}\Phi(q_j) + \sum_{j=r+1}^{i}\Phi(q'_j),\nonumber \\
   &\le \sum_{j=1}^r| q_j-q'_j| + \sum_{j=r+1}^{i-1}\Phi(q_j) + \sum_{j=r+1}^{i}\Phi(q'_j),
   \label{eqn:etabound}
   \end{align}
   where we used the non-negativity of $\Phi(\cdot)$ and \cref{eqn:phicontract}.
   
   To bound the first sum in \cref{eqn:etabound}, we proceed in the same  way as in the proof of \cref{lem:c-position-sum}.
For each non-zero term $|q_j-q'_j|$, if the two corresponding objects are in $[L_k,R_k)$, we bound this term by $|q_j-q'_j|\le R_k-L_k$.
Let $m_k$ denote the number of size-$a$ objects in $[L_k,R_k)$ in $\phi_{i-1,Q}$, which must satisfy $m_k \le (R_k-L_k)/a \le  (R_k-L_k)/\mu$. Then, we obtain the upper bound
 \[ \sum_{j=1}^r|q_j-q'_j| \le \sum_{k\neq f} m_k (R_k - L_k) \le \sum_{k\neq f} (R_k-L_k)^2/\mu.\]

   Now, we bound the last two sums in \cref{eqn:etabound}.
   We only need to consider the case where they contain at least one positive term, i.e., there exists some $q^*\in \{q_{r+1},\dots,q_{i-1}\}\cup \{q_{r+1}',\dots,q_i'\}$ such that $\Phi(q^*)>0$. By definition of $\Phi(\cdot)$, this means $q^* \notin [L^*,R^*) = [p_{0,Q}-8\rho,\ p_{0,Q}+8\rho)$, so $|q^*-p_{0,Q}|\ge 8\rho$.  Since both $q^*$ and $p_{i,Q}$ are in the interval $[L_f,R_f)$, we have \begin{equation} \label{eqn:rllarge} R_f-L_f \ge  |q^*-p_{i,Q}| \ge |q^*-p_{0,Q}|-|p_{i,Q}-p_{0,Q}| \ge 8\rho - 4\rho = 4\rho, \end{equation} where in the last inequality we used \cref{lem:fingerclose}.
On the other hand, for any $q\in \{q_{r+1},\dots,q_{i-1}\}\cup \{q_{r+1}',\dots,q_i'\} \subset [L_f,R_f)$,  we have
 \begin{align*}
 \Phi(q) & \le |q-p_{0,Q}| + \Phi(p_{0,Q})    \qquad && \text{(by \cref{eqn:phicontract})} \\
  &  =  |q-p_{0,Q}|  \qquad && \text{(by definition of $\Phi(\cdot)$)}\\
  &  \le |q-p_{i,Q}|+|p_{i,Q}-p_{0,Q}|\\
  & \le (R_f-L_f)+ 4\rho\qquad && \text{(by $q,p_{i,Q}\in [L_f,R_f)$, and \cref{lem:fingerclose})}\\
  & \le 2(R_f-L_f).\qquad && \text{(by  \cref{eqn:rllarge})}
 \end{align*} 
Since there are at most $(R_f-L_f)/a $ size-$a$ objects in $[L_f,R_f)$ in $\phi_{i-1,Q}$ (or, in $\phi_{i,Q}$),
we conclude that the second sum and the third sum of \cref{eqn:etabound} each can be upper-bounded by $2(R_f-L_f)\cdot (R_f-L_f)/a\le 2(R_f-L_f)^2/\mu$.

Summing up, \cref{eqn:etabound}  can be bounded as
\begin{align*}
|\eta_{i-1}-\eta_i|&\le  \sum_{k\neq f}(R_k-L_k)^2/\mu + 2(R_f-L_f)^2/\mu +2(R_f-L_f)^2/\mu \\
& \le 4\sum_{k}(R_k-L_k)^2/\mu \\
& \le 4\Big (\sum_{k}(R_k-L_k)\Big)^2/\mu \\
& = 4\Delta(\phi_{i-1,Q},\phi_{i,Q})^2/\mu
\end{align*}
 as claimed.
\end{proof}

Now we are ready to finish the proof.

\begin{proof}[Proof of \cref{lem:goal}]
   Recall $\eta_i$ is the sum of potentials $\Phi(\cdot)$ of all the size-$a$ objects in $\phi_{i,Q}$. We have
  \begin{align*}
   \eta_P &\le \eta_0 +  \sum_{i=1}^P|\eta_{i-1} - \eta_{i}| \\
    & \le \eta_0 +  \sum_{i=1}^P4\Delta(\phi_{i-1,Q},\phi_{i,Q})^2/\mu \qquad && \text{(by  \cref{lem:a-sum})}\\
    & \le \eta_0 + 400PF^2\mu\qquad && \text{(by  \cref{eqn:sumP})}\\
    & = 400PF^2\mu. \qquad && \text{($\phi_{0,Q}$ contains no size-$a$ objects)}
  \end{align*} 

By definition of $\Phi(\cdot)$, if a size-$a$ object is not fully contained in the interval $[L^*-10\rho,R^*+10\rho + a) = [p_{0,Q}-18\rho,\ p_{0,Q} + 18\rho + a)$, then its potential is at least $10\rho$.  The number of size-$a$ objects in $\phi_{P,Q}$ that are fully contained in this interval is at most 
\[(36\rho+a)/a \le 36\rho/\mu + 1 \underbrace{<}_{\text{by \cref{defn:rho}}} 400\sqrt{P}QF \underbrace{\le}_{\text{by \cref{eqn:assu3}}} P/2. \]
   Therefore, among the $P$ size-$a$ objects in $\phi_{P,Q}$, at least $P-P/2=P/2$ of them  have potential at least $10\rho$, giving $\eta_{P}\ge (P/2)\cdot (10\rho) = 50P^{1.5}QF\mu$.  
   By comparing with the upper bound $\eta_P \le 400PF^2\mu$ from the previous paragraph, we get $\sqrt{P}Q \le 8F$. In particular, $Q\le 8F$, but this contradicts $Q = 4000 F^2$ by \cref{eqn:assu4} and $F\ge 1$. This finishes the proof of \cref{lem:goal} (which implies \cref{thm:squaredlowerbound-discrete} and \cref{thm:mainsecondmoment}). 
\end{proof}

\section{Conclusion and open questions}
\label{sec:con}

Our upper bound result demonstrates the surprising power of the sunflower lemma for the Memory Reallocation problem.
Our lower bound results rely on the additive structure of the object sizes in the designed hard instances.
In general, we believe that  applying additive combinatorial techniques to various resource allocation and scheduling problems is a  fruitful direction for future research.
\newline

We conclude with several concrete open questions:
\paragraph*{Closing the gap.}
    One of the main open questions is to narrow the gap between our $O(\log^4 \epsilon^{-1}\cdot (\log \log \epsilon^{-1})^2)$ upper bound and the $\Omega(\log\eps^{-1})$ lower bound for the expected reallocation overhead.  We conjecture that the lower bound is closer to the truth.
\paragraph*{Time complexity.}
 Kuszmaul \cite{Kuszmaul23} provided a time-efficient implementation of his allocator for tiny objects.
 It would be interesting to make our allocator time-efficient as well.

 A main obstacle towards time-efficiency is the lack of a fast algorithmic version of \cref{lem:grouping-temp}, which was proved via 
 Erd\H{o}s and S\'{a}rk\"{o}zy's non-constructive argument based on the sunflower lemma. 
 As a preliminary attempt, one could compute the construction of \cref{lem:grouping-temp} (with slightly worse logarithmic factors) in $\poly(n,w)$ time, by combining Erd\H{o}s--Rado's proof of the sunflower lemma with the standard dynamic programming algorithm for the counting version of the Subset Sum problem. However, this is too slow for our application, where $w$ is as large as $\eps^{-1}$.
  
 After completing this paper, we became aware of a recent paper by  Chen, Mao, and Zhang \cite{soda26chen},  which, despite the different motivation, arrived at the same question and made significant progress.
More specifically, \cite[Theorem 1.6]{soda26chen} implies that, for any $\delta\in (0,1)$, the construction of our \cref{lem:grouping-temp} can be implemented in $\widetilde O(n + w^\delta)$ time, at the cost of worsening all the $\log w$ factors in the lemma statement to $(\log w)^{O(1/\delta)}$. 
This suggests the possibility of designing an allocator with a time complexity overhead factor of $\exp(O(\sqrt{\log \eps^{-1}\log\log \eps^{-1}}))$. 

  \paragraph*{High-probability guarantee.}
Although  polylogarithmic overhead with high probability is impossible in general (by \cref{thm:mainsecondmoment}), it is still achievable in some interesting special cases.
For example, one can show that Kuszmaul's allocator \cite{Kuszmaul23} for power-of-two sizes actually achieves $O(\log(1/\eps))$ overhead with high probability in $1/\eps$.
We conjecture that high-probability polylogarithmic overhead is also achievable for tiny object sizes (for example, less than $\eps^4 M$).

Another interesting direction is to design allocators achieving $O(1/\eps^\alpha)$ overhead with high probability in $1/\eps$ (or even deterministically) with small exponent $\alpha$. Our lower bound (\cref{thm:mainsecondmoment}) implies $\alpha \ge 1/14$, but the best known upper bound is only $\alpha = 1$, achieved by the folklore deterministic allocator.
\paragraph*{Other settings.}
It would also be interesting to further investigate Memory Reallocation in the cost-oblivious setting of \cite{BenderFFFG17},  or the request fragmentation setting of \cite{nicole}.

\subsection*{Acknowledgements}
I would like to thank Nathan Sheffield and Alek Westover for inspiring discussions that sparked my interest in this problem.
Additionally, I thank Shyan Akmal, Lin Chen, Hongbo Kang, Tsvi Kopelowitz, Mingmou Liu, Jelani Nelson, Kewen Wu, and Renfei Zhou for helpful discussions, and Nicole Wein for sharing a manuscript of~\cite{nicole}.

	\bibliographystyle{alphaurl} 
	\bibliography{main}
   
\end{document}